\def\LambdaTRDeanonymize{1}
\def\LambdaTRIncludeAppendix{1}
\def\LambdaTRArxiv{1}
\newif\ifLambdaTRDeanonymize
\ifdefined\LambdaTRDeanonymize
  \LambdaTRDeanonymizetrue
\fi
\ifdefined\LambdaTRArxiv
  \documentclass[acmsmall,screen,nonacm]{acmart}
\else
  \ifLambdaTRDeanonymize
    \documentclass[acmsmall,screen,review,nonacm]{acmart}
  \else
    \documentclass[acmsmall,screen,review,anonymous,nonacm]{acmart}
  \fi
\fi
\usepackage{ff}
\newcommand{\jbox}[1]{{\mprset{atop}\inferrule{\fbox{$#1$}}{ }}}

\newcommand{\allsat}{\sat{\env}{\ctx}}
\NewDocumentCommand{\enva}{O{v}}{\extenv{\env}{x}{#1}}
\NewDocumentCommand{\envw}{}{\enva[w]}

\newcommand{\ctxa}{\ctx \exta \bindvar}
\newcommand{\IfR}{\R}
\newcommand{\ifbrctxbase}{\ctx}
\newcommand{\ifbrposprop}{\posprop[c]}
\newcommand{\ifbrnegprop}{\negprop[c]}
\newcommand{\IfCondR}{\res[\tyA][\ifbrposprop][\ifbrnegprop][\obj_c]}
\NewDocumentCommand{\ifbrctx}{m}{\ifbrctxbase \extb #1}
\NewDocumentCommand{\ifbrR}{O{2}}{\res[\tyB][\possynprop[#1]][\negsynprop[#1]][\obj_2]}
\NewDocumentCommand{\appfunrngres}{}{\R}
\NewDocumentCommand{\appfunrngresunfold}{}{\res[\tyC][\possynprop[f]][\negsynprop[f]][\obj_f]}
\NewDocumentCommand{\appargres}{O{\tyB}}{\res[#1][\possynprop[2]][\negsynprop[2]][\obj_2]}
\NewDocumentCommand{\appfunty}{O{\tyB}}{\funty{x}{#1}{\appfunrngres}}
\NewDocumentCommand{\appfunres}{}{\res[\appfunty][\possynprop[1]][\negsynprop[1]][\obj_1]}
\NewDocumentCommand{\appresinner}{}{\appfunrngres}
\NewDocumentCommand{\appres}{}{\osubst{\appresinner}[x][\obj_2]}
\NewDocumentCommand{\propaconseq}{O{\unexpr} O{\unexpr'} O{\R_a} O{\R_b}}{Then by definition,  $\expinterp{#4}{#2}[0]$. Then by \Cref{appendix:lem:hdxp}, $\expinterp{#4}{#1}$}
\NewDocumentCommand{\propaconseqsub}{O{\unexpr} O{\unexpr'} m m O{\R_b} O{n}}{Therefore, $\neestepsto{#3}{#4}[#6]$.
  Then by definition,
  \begin{align*}
    \expinterp{#5}{#4}[0]
  \end{align*}
  Then by \Cref{appendix:lem:hdxp}, \begin{align*}\expinterp{#5}{#3}[#6]\end{align*}}
\NewDocumentCommand{\presuppositions}{}{
  \begin{itemize}
  \item $\wfctx$ in all the judgements where $\ctx$ appears on the left hand of $\vdash$;
  \item $\subtype{\ctx}{\tyA}{\tyB}$ presupposes $\wftype{\tyA}$ and $\wftype{\tyB}$;
  \item $\subtype{\ctx}{\obj_1}{\obj_2}$ presupposes $\wfobj{\obj_1}$ and $\wfobj{\obj_2}$;
  \item $\subtype{\ctx}{\R_1}{\R_2}$ presupposes $\wfres{\R_1}$ and $\wfres{\R_2}$;
  \item $\restrictrel{\ctx}{\tyA}{\tyB}{\tyC}$ presupposes $\wftype{\tyA}$ and $\wftype{\tyB}$, and implies $\wftype{\tyC}$;
  \item $\removerel{\ctx}{\tyA}{\tyB}{\tyC}$ presupposes $\wftype{\tyA}$ and $\wftype{\tyB}$, and implies $\wftype{\tyC}$;
  \item $\proves{\ctx}{\prop}$ presupposes $\wfprop{\prop}$;
  \item $\typeof{\ctx}{\expr}{\R}$ presupposes $\wfres{\R}$.
  \item We further assume that every binder has a unique name.
  \end{itemize}
}
\NewDocumentCommand{\directworkrefs}{}{\cite{bonnaire-sergeant2016,chang2017,fu2021,kent2016}}

\newcommand{\showsyntaxfigure}{%
\begin{figure}[ht]
  \centering
  \begin{tabular}{>{\itshape}l<{} >{$}r<{$}@{\hskip 0.10cm} >{$}l<{$}}
    Base values & \bv \sdef &
    \true \mid \false \mid \nat \mid \str
    \\
    Primitive operations & \op \sdef &
    \isstrop \mid \isnatop \mid \addop \mid \notop
    \\
    Expressions & \expr \sdef &
    \bv \mid \op \mid x \mid
    \abs{x}{\expr} \mid \ap{\expr}{\expr} \mid \If{\expr}{\expr}{\expr} \mid \Let{\expr}{\expr}
    \\
    & & \\
    Types & \tyA , \tyB \sdef &
    \top \mid \bot \mid  \true \mid \false \mid \Nat \mid \Str \mid
    \uniontype{\tyA}{\tyB} \mid \funty{x}{\tyA}{\R}
    \\
    Propositions & \prop\sdef &
    \TT \mid \FF \mid \eqprop{x}{y} \mid \typeprop{x}{\tyA} \mid  \nottypeprop{x}{\tyA}
    \mid \conjprop{\prop}{\prop} \mid \disjprop{\prop}{\prop}
    \\
    Objects & \obj \sdef &
    x \mid \topobj \mid \botobj
    \\
    Type results & \R \sdef &
    \res[\tyA][\prop][\prop][\obj]
    \\
    Contexts & \ctx \sdef &
    \empctx \mid \ctx \cdot \prop \mid \ctx \exta \bindvar
  \end{tabular}
  \caption{Syntax}
  \trlabel{fig:stx}
\end{figure}
}

\newcommand{\showctxdomainfigure}{%
\begin{figure}[ht]
  \begin{align*}
    \dom{\empctx} &\mdef \varnothing \\
    \dom{\ctx \exta \bindvar} &\mdef \dom{\ctx} \cup \{x\} \\
    \dom{\ctx \extb \prop} &\mdef \dom{\ctx}
  \end{align*}
  \caption{Domain of a Context}
  \trlabel{fig:ctxdomain}
\end{figure}
}

\newcommand{\showwffigure}{%
\begin{figure}[!ht]
  \begin{mathpar}
  \jbox{\wfctx}
  \and
  \inferrule
    { }
    {\wfctx[\empctx]}
  \and
  \inferrule
    {\wfctx \\
     \wftype{\tyA}}
    {\wfctx[\ctx \cdot \bindvar]}
  \and
  \inferrule
    {\wfctx \\
     \wfprop{\prop}}
    {\wfctx[\ctx \cdot \prop]}
  \end{mathpar}
  \lighthrule
  \begin{mathpar}
  \jbox{\wfobj{\obj}}
  \and
  \inferrule
    {x \in \dom{\ctx}}
    {\wfobj{x}}
  \and
  \inferrule
    { }
    {\wfobj{\botobj}}
  \and
  \inferrule
    { }
    {\wfobj{\topobj}}
  \end{mathpar}
  \lighthrule
  \begin{mathpar}
  \jbox{\wfprop{\prop}}
  \and
  \inferrule
    { }
    {\wfprop{\TT}}
  \and
  \inferrule
    { }
    {\wfprop{\FF}}
  \and
  \inferrule
    {x \in \dom{\ctx} \\
     \wftype{\tyA}}
    {\wfprop{\typeprop{x}{\tyA}}}
  \and
  \inferrule
    {x \in \dom{\ctx} \\
     \wftype{\tyA}}
    {\wfprop{\nottypeprop{x}{\tyA}}}
  \and
  \inferrule
    {x \in \dom{\ctx} \\
     y \in \dom{\ctx}}
    {\wfprop{\eqprop{x}{y}}}
  \and
  \inferrule
    {\wfprop{\prop_1} \\
     \wfprop{\prop_2}}
    {\wfprop{\conjprop{\prop_1}{\prop_2}}}
  \and
  \inferrule
    {\wfprop{\prop_1} \\
     \wfprop{\prop_2}}
    {\wfprop{\disjprop{\prop_1}{\prop_2}}}
  \end{mathpar}
  \lighthrule
  \begin{mathpar}
  \jbox{\wftype{\tyA}}
  \and
  \inferrule
    { }
    {\wftype{\true}}
  \and
  \inferrule
    { }
    {\wftype{\false}}
  \and
  \inferrule
    { }
    {\wftype{\Nat}}
  \and
  \inferrule
    { }
    {\wftype{\Str}}
  \and
  \inferrule
    { }
    {\wftype{\top}}
  \and
  \inferrule
    { }
    {\wftype{\bot}}
  \and
  \inferrule
    {\wftype{\tyA} \\
     \wftype{\tyB}}
    {\wftype{\uniontype{\tyA}{\tyB}}}
  \and
  \inferrule
    {\wftype{\tyA} \\
     \wfres[\ctx\cdot\bindvar]{\synR}}
    {\wftype{\funty{x}{\tyA}{\synR}}}
  \end{mathpar}
  \lighthrule
  \begin{mathpar}
  \newcommand{\ctxw}{\ctx}
  \jbox{\wfres{\R}}
  \and
  \inferrule
    {\wftype[\ctxw]{\tyA} \\
     \wfobj[\ctxw]{\obj} \\
     \wfprop[\ctxw]{\posprop} \\
     \wfprop[\ctxw]{\negprop}}
    {\wfres{\res}}
  \end{mathpar}
  \caption{Well-Formedness Judgments}
  \trlabel{fig:wf-judgments}
\end{figure}
}

\newcommand{\showproofsystemfigure}{%
\begin{figure}[ht]
  \begin{mathpar}
  \jbox{\proves{\ctx}{\prop}}
  \and
  \inferrule[L-True]
    {\ }
    {\proves{\ctx}{\TT}}
  \and
  \inferrule[L-False]
    {\proves{\ctx}{\FF} \\
     \wfprop{\prop}}
    {\proves{\ctx}{\prop}}
  \and
  \inferrule[L-Atom]
    {\lookup{\ctx}{\prop}}
    {\proves{\ctx}{\prop}}
  \and
  \inferrule[L-Ann]
    {(\bindvar) \in \ctx}
    {\proves{\ctx}{\typeprop{x}{\tyA}}}
  \and
  \inferrule[L-AndI]
    {\proves{\ctx}{\prop_1} \\
     \proves{\ctx}{\prop_2}}
    {\proves{\ctx}{\conjprop{\prop_1}{\prop_2}}}
  \and
  \inferrule[L-AndE1]
    {\proves{\ctx}{\conjprop{\prop_1}{\prop_2}}}
    {\proves{\ctx}{\prop_1}}
  \and
  \inferrule[L-AndE2]
    {\proves{\ctx}{\conjprop{\prop_1}{\prop_2}}}
    {\proves{\ctx}{\prop_2}}
  \and
  \inferrule[L-OrI1]
    {\proves{\ctx}{\prop_1} \\
     \wfprop{\prop_2}}
    {\proves{\ctx}{\disjprop{\prop_1}{\prop_2}}}
  \and
  \inferrule[L-OrI2]
    {\proves{\ctx}{\prop_2} \\
     \wfprop{\prop_1}}
    {\proves{\ctx}{\disjprop{\prop_1}{\prop_2}}}
  \and
  \inferrule[L-OrE]
    {\wfprop{\prop} \\
     \proves{\ctx}{\disjprop{\prop_1}{\prop_2}} \\\\
     \proves{\ctx \extb \prop_1}{\prop} \\
     \proves{\ctx \extb \prop_2}{\prop}}
    {\proves{\ctx}{\prop}}
  \and
  \inferrule[L-Bot]
    {\proves{\ctx}{\typeprop{x}{\bot}}}
    {\proves{\ctx}{\FF}}
  \and
  \inferrule[L-Sub]
    {\proves{\ctx}{\typeprop{x}{\tyA}} \\
     \subtype{\ctx}{\tyA}{\tyB}}
    {\proves{\ctx}{\typeprop{x}{\tyB}}}
  \and
  \inferrule[L-SubNot]
    {\proves{\ctx}{\nottypeprop{x}{\tyA}} \\
     \subtype{\ctx}{\tyB}{\tyA}}
    {\proves{\ctx}{\nottypeprop{x}{\tyB}}}
  \and
  \inferrule[L-Refl]
    {x \in \dom{\ctx}}
    {\proves{\ctx}{\eqprop{x}{x}}}
  \and
  \inferrule[L-Transp]
    {\wfprop[{\ctx \exta \bindvar[{z}][{\top}]}]{\prop} \\\\
     \proves{\ctx}{\substprop{\prop}{z}{x}} \\
     \proves{\ctx}{\eqprop{x}{y}}}
    {\proves{\ctx}{\substprop{\prop}{z}{y}}}
  \and
  \inferrule[L-Restrict]
    {\proves{\ctx}{\typeprop{x}{\tyA}} \\
     \proves{\ctx}{\typeprop{x}{\tyB}} \\\\
     \restrictrel{\ctx}{\tyA}{\tyB}{\tyC}}
    {\proves{\ctx}{\typeprop{x}{\tyC}}}
  \and
  \inferrule[L-Remove]
    {\proves{\ctx}{\typeprop{x}{\tyA}} \\
     \proves{\ctx}{\nottypeprop{x}{\tyB}} \\\\
     \removerel{\ctx}{\tyA}{\tyB}{\tyC}}
    {\proves{\ctx}{\typeprop{x}{\tyC}}}
  \end{mathpar}
  \caption{Proof System}
  \trlabel{fig:proofsystem}
\end{figure}
}

\newcommand{\showrestrictremovefigure}{%
\begin{figure}[ht]
  \begin{mathpar}
  \jbox{\restrictrel{\ctx}{\tyA}{\tyB}{\tyC}}
  \and
  \inferrule[Restrict-Base]
    {\tyA, \tyB \in \{\true,\false,\Nat,\Str\} \quad \tyB \neq \tyA}
    {\restrictrel{\ctx}{\tyA}{\tyB}{\bot}}
  \and
  \inferrule[Restrict-BaseFun]
    {\tyA \in \{\true,\false,\Nat,\Str\} \\
     \wftype{\funty{x}{\tyB}{\R}}}
    {\restrictrel{\ctx}{\tyA}{\funty{x}{\tyB}{\R}}{\bot}}
  \and
  \inferrule[Restrict-Union]
    {\restrictrel{\ctx}{\tyA}{\tyB}{\tyC} \\
     \restrictrel{\ctx}{\tyD}{\tyB}{\tyE}}
    {\restrictrel{\ctx}{\uniontype{\tyA}{\tyD}}{\tyB}{\uniontype{\tyC}{\tyE}}}
  \and
  \inferrule[Restrict-Sym]
    {\restrictrel{\ctx}{\tyB}{\tyA}{\tyC}}
    {\restrictrel{\ctx}{\tyA}{\tyB}{\tyC}}
  \and
  \inferrule[Restrict-Sub]
    {\subtype{\ctx}{\tyA}{\tyB}}
    {\restrictrel{\ctx}{\tyA}{\tyB}{\tyA}}
  \and
  \inferrule[Restrict-Other]
    {\wftype{\tyA} \\
     \wftype{\tyB}}
    {\restrictrel{\ctx}{\tyA}{\tyB}{\tyB}}
  \end{mathpar}
  \lighthrule
  \begin{mathpar}
  \jbox{\removerel{\ctx}{\tyA}{\tyB}{\tyC}}
  \and
  \inferrule[Remove-Union]
    {\removerel{\ctx}{\tyA}{\tyB}{\tyC} \\
     \removerel{\ctx}{\tyD}{\tyB}{\tyE}}
    {\removerel{\ctx}{\uniontype{\tyA}{\tyD}}{\tyB}{\uniontype{\tyC}{\tyE}}}
  \and
  \inferrule[Remove-Sub]
    {\subtype{\ctx}{\tyA}{\tyB}}
    {\removerel{\ctx}{\tyA}{\tyB}{\bot}}
  \and
  \inferrule[Remove-Other]
    {\wftype{\tyA} \\
     \wftype{\tyB}}
    {\removerel{\ctx}{\tyA}{\tyB}{\tyA}}
  \end{mathpar}
  \caption{Rules for Restrict and Remove}
  \trlabel{fig:restrict-remove}
\end{figure}
}

\newcommand{\tcbaseval}{
  \inferrule[T-BaseVal]
    {\ }
    {\typeof{\ctx}{\bv}{\typeofbv{\bv}}}
}

\newcommand{\tcprimop}{
  \inferrule[T-PrimOp]
    {\tyA = \typeofop{\op}}
    {\typeof{\ctx}{\op}{\res[\tyA][\TT][\FF][\topobj]}}
}

\newcommand{\tcvar}{
  \inferrule[T-Var]
    {\proves{\ctx}{\typeprop{x}{\tyA}}}
    {\typeof{\ctx}{x}{\res[\tyA][\nottypeprop{x}{\false}][\typeprop{x}{\false}][x]}}
}

\newcommand{\tcabs}{
  \inferrule[T-Abs]
    {\typeof{\ctxa}{\synexpr_b}{\synR}}
    {\typeof{\ctx}{\abs{x}{\synexpr_b}}{\res[\funty{x}{\tyA}{\synR}][\TT][\FF][\topobj]}}
}

\newcommand{\tcapp}{
  \inferrule[T-App]
    {\typeof{\ctx}{e_1}{\appfunres} \\\\
     \typeof{\ctx}{e_2}{\appargres}}
    {\typeof{\ctx}{\app{e_1}{e_2}}{\appres}}
}

\newcommand{\tcif}{
  \inferrule[T-If]
    {\typeof{\ctx}{e_1}{\IfCondR} \\\\
     \typeof{\ifbrctx \ifbrposprop}{e_2}{\IfR} \\
     \typeof{\ifbrctx \ifbrnegprop}{e_3}{\IfR}}
    {\typeof{\ctx}{\If{e_1}{e_2}{e_3}}{\IfR}}
}

\newcommand{\tclet}{
  \inferrule[T-Let]
    {\prop_1 =
       \osubst{(\eqprop{x}{z})}[z][\obj] \\
     \prop_2 =
       \disjprop
         {(\conjprop{(\nottypeprop{x}{\false})}{\posprop})}
         {(\conjprop{(\typeprop{x}{\false})}{\negprop})} \\\\
     \typeof{\ctx}{\expr}{\res[\tyA][\posprop][\negprop][\obj]} \\
     \typeof{\ctx \exta \bindvar \exta \prop_1 \exta \prop_2}{\expr_b}{\R}}
    {\typeof{\ctx}{\Let{\expr}{\expr_b}}{\osubst{\R}[x][\obj]}}
}

\newcommand{\tcsubsume}{
  \inferrule[T-Subsume]
    {\typeof{\ctx}{e}{\synR_1} \\
     \subtype{\ctx}{\synR_1}{\synR_2}}
    {\typeof{\ctx}{e}{\synR_2}}
}

\newcommand{\showtypingrulesfigure}{%
\begin{figure}[ht]
  \begin{mathpar}
  \jbox{\typeof{\ctx}{\expr}{\R}}
  \and
  \tcbaseval
  \and
  \tcprimop
  \and
  \tcvar
  \and
  \tcabs
  \and
  \tcapp
  \and
  \tclet
  \and
  \tcif
  \and
  \tcsubsume
  \end{mathpar}
  \caption{Typing Rules}
  \trlabel{fig:typingrules}
\end{figure}
}

\newcommand{\showbasetypesfigure}{%
\begin{figure}[ht]
  \centering
  \begin{minipage}[t]{0.5\linewidth}
    \begin{align*}
      \typeofbv{\nat} &\mdef \res[\Nat][\TT][\FF][\topobj] \\
      \typeofbv{\str} &\mdef \res[\Str][\TT][\FF][\topobj] \\
      \typeofbv{\true} &\mdef \res[\true][\TT][\FF][\topobj]  \\
      \typeofbv{\false} &\mdef \res[\false][\FF][\TT][\topobj]
    \end{align*}
  \end{minipage}%
  \begin{minipage}[t]{0.5\linewidth}
    \begin{align*}
      \typeofop{\isnatop} &\mdef \predfun{\Nat} \\
      \typeofop{\isstrop} &\mdef \predfun{\Str} \\
      \typeofop{\notop} &\mdef \predfun{\false} \\
      \typeofop{\addop} &\mdef \simpfunty{\Nat}{\Nat}
    \end{align*}
  \end{minipage}
  \caption{Types of Base Values and Primitive Operations}
  \trlabel{fig:basetypes}
\end{figure}
}

\newcommand{\showsubtypingfigure}{%
\begin{figure}[ht]
  \begin{mathpar}
  \jbox{\subtype{\ctx}{\tyA}{\tyB}}
  \and
  \inferrule[S-Top]
    {\wftype{\tyA}}
    {\subtype{\ctx}{\tyA}{\top}}
  \and
  \inferrule[S-Bot]
    {\wftype{\tyA}}
    {\subtype{\ctx}{\bot}{\tyA}}
  \and
  \inferrule[S-Refl]
    {\wftype{\tyA}}
    {\subtype{\ctx}{\tyA}{\tyA}}
  \and
  \inferrule[S-Trans]
    {\subtype{\ctx}{\tyA}{\tyB} \\
     \subtype{\ctx}{\tyB}{\tyC}}
    {\subtype{\ctx}{\tyA}{\tyC}}
  \and
  \inferrule[S-Fun]
    {\subtype{\ctx}{\tyA}{\tyB} \\
     \subtype{\ctxa}{\R_b}{\R_a}}
    {\subtype{\ctx}{\funty{x}{\tyB}{\R_b}}{\funty{x}{\tyA}{\R_a}}}
  \and
  \inferrule[S-UnionSub]
    {\subtype{\ctx}{\tyB}{\tyA} \\
     \subtype{\ctx}{\tyC}{\tyA}}
    {\subtype{\ctx}{\tyB \union \tyC}{\tyA}}
  \and
  \inferrule[S-UnionSuper1]
    {\subtype{\ctx}{\tyA}{\tyB} \\
     \wftype{\tyC}}
    {\subtype{\ctx}{\tyA}{\tyB \union \tyC}}
  \and
  \inferrule[S-UnionSuper2]
    {\subtype{\ctx}{\tyA}{\tyC} \\
     \wftype{\tyB}}
    {\subtype{\ctx}{\tyA}{\tyB \union \tyC}}
  \and
  \inferrule[S-Expl]
    {\proves{\ctx}{\FF} \\
     \wftype{\tyA} \\
     \wftype{\tyB}}
    {\subtype{\ctx}{\tyA}{\tyB}}
  \end{mathpar}
  \lighthrule
  \begin{mathpar}
  \jbox{\subtype{\ctx}{\obj_1}{\obj_2}}
  \and
  \inferrule[SO-Equiv]
    {\proves{\ctx}{\eqprop{x}{y}}}
    {\subtype{\ctx}{x}{y}}
  \and
  \inferrule[SO-Bot]
    {\wfobj{\obj}}
    {\subtype{\ctx}{\botobj}{\obj}}
  \and
  \inferrule[SO-Top]
    {\wfobj{\obj}}
    {\subtype{\ctx}{\obj}{\topobj}}
  \and
  \inferrule[SO-Expl]
    {\proves{\ctx}{\FF} \\
     \wfobj{\obj_1} \\
     \wfobj{\obj_2}}
    {\subtype{\ctx}{\obj_1}{\obj_2}}
  \end{mathpar}
  \lighthrule
  \begin{mathpar}
  \jbox{\subtype{\ctx}{\R_1}{\R_2}}
  \and
  \inferrule[SR-Result]
    {\subtype{\ctx}{\tyA_1}{\tyA_2} \\
     \proves{\ctx \extb \posprop[1]}{\posprop[2]} \\
     \proves{\ctx \extb \negprop[1]}{\negprop[2]} \\
     \subtype{\ctx}{\obj_1}{\obj_2}}
    {\subtype{\ctx}{\res[\tyA_1][\posprop[1]][\negprop[1]][\obj_1]}{\res[\tyA_2][\posprop[2]][\negprop[2]][\obj_2]}}
  \end{mathpar}
  \caption{Subtyping Rules}
  \trlabel{fig:subtyping}
\end{figure}
}

\newcommand{\showobjectsubstitutionfigure}{%
\begin{figure}[ht]
  \[
    \osubst{\res[\tyA][\posprop][\negprop][\obj_r]}[x][\obj] =
    \res%
      [\osubst{\tyA}[x][\obj]]%
      [\osubst{\posprop}[x][\obj]]%
      [\osubst{\negprop}[x][\obj]]%
      [\osubst{\obj_r}[x][\obj]]
  \]
  \lighthrule*
  \begin{minipage}[t]{0.67\textwidth}
  \begin{align*}
    \osubst{\tyB} &=
      \tyB \quad \metaif B\in\{\top,\bot,\true,\false,\Nat,\Str\}
    \\
    \osubst{(\uniontype{\tyA_1}{\tyA_2})} &=
      \uniontype{\osubst{\tyA_1}}{\osubst{\tyA_2}}
    \\
    \poserase{(\funty{y}{\tyA}{\R})} &=
      \funty{y}{\osubst{\tyA}[x][\highlightmathbox*{\botobj}]}{\poserase{\synR}} \quad \metaif x \neq y
    \\
    \negerase{(\funty{y}{\tyA}{\R})} &=
      \funty{y}{\osubst{\tyA}[x][\highlightmathbox*{\topobj}]}{\negerase{\synR}} \quad \metaif x \neq y
    \\
    \osubst{(\funty{x}{\tyA}{\R})} &=
      \funty{x}{\osubst{\tyA}}{\synR}
    \\
    \osubst{\tyA}[x][y] &=
      \substR{\tyA}{x}{y} \quad \metaotherwise
  \end{align*}
  \end{minipage}%
  \begin{minipage}[t]{0.33\textwidth}
    \begin{align*}
      \osubst{\bot} &= \bot \\
      \osubst{\top} &= \top \\
      \osubst{x} &= \obj \\
      \osubst{y} &= y \quad \metaif y \neq x \\
      \\
      \osubst{\prop}[x][y] &= \substR{\prop}{x}{y}
    \end{align*}
  \end{minipage}
  \lighthrule
  \begin{minipage}[t]{0.5\textwidth}
  \begin{align*}
    \poserase{\TT} &= \TT \\
    \poserase{\FF} &= \FF \\
    \poserase{(\typeprop{x}{\tyB})} &= \highlightmathbox{\TT} \\
    \poserase{(\typeprop{z}{\tyB})} &= \typeprop{z}{\poserase{\tyB}} \quad \metaif z \neq x \\
    \poserase{(\nottypeprop{x}{\tyB})} &= \highlightmathbox{\TT} \\
    \poserase{(\nottypeprop{z}{\tyB})} &=
      \nottypeprop{z}{\osubst{\tyB}[x][\highlightmathbox*{\botobj}]} \quad \metaif z \neq x \\
    \poserase{(\eqprop{y}{z})} &= \highlightmathbox{\TT} \qquad \metaif x\in\{y,z\} \\
    \poserase{(\eqprop{y}{z})} &= \eqprop{y}{z} \qquad \metaif x\not\in\{y,z\} \\
    \poserase{(\conjprop{\prop_1}{\prop_2})} &= \conjprop{\poserase{\prop_1}}{\poserase{\prop_2}} \\
    \poserase{(\disjprop{\prop_1}{\prop_2})} &= \disjprop{\poserase{\prop_1}}{\poserase{\prop_2}}
  \end{align*}
  \end{minipage}%
  \begin{minipage}[t]{0.5\textwidth}
  \begin{align*}
    \negerase{\TT} &= \TT \\
    \negerase{\FF} &= \FF \\
    \negerase{(\typeprop{x}{\tyB})} &= \highlightmathbox{\FF} \\
    \negerase{(\typeprop{z}{\tyB})} &= \typeprop{z}{\negerase{\tyB}} \quad \metaif z \neq x \\
    \negerase{(\nottypeprop{x}{\tyB})} &= \highlightmathbox{\FF} \\
    \negerase{(\nottypeprop{z}{\tyB})} &=
      \nottypeprop{z}{\osubst{\tyB}[x][\highlightmathbox*{\topobj}]} \quad \metaif z \neq x \\
    \negerase{(\eqprop{y}{z})} &= \highlightmathbox{\FF} \qquad \metaif x\in\{y,z\} \\
    \negerase{(\eqprop{y}{z})} &= \eqprop{y}{z} \qquad \metaif x\not\in\{y,z\} \\
    \negerase{(\conjprop{\prop_1}{\prop_2})} &= \conjprop{\negerase{\prop_1}}{\negerase{\prop_2}} \\
    \negerase{(\disjprop{\prop_1}{\prop_2})} &= \disjprop{\negerase{\prop_1}}{\negerase{\prop_2}}
  \end{align*}
  \end{minipage}
  \caption{Object Substitution/Erasure}
  \trlabel{fig:objsubst}
\end{figure}
}

\newcommand{\showruntimesyntaxfigure}{%
\begin{figure}[ht]
  \centering
  \begin{tabular}{>{\itshape}l<{} >{$}r<{$}@{\hskip 0.10cm} >{$}l<{$}@{\hskip 0.5cm}}
    Runtime Expressions & \unexpr \sdef &
    \bv \mid \op \mid x \mid
    \highlightmathbox{\unabs} \mid \ap{\unexpr}{\unexpr} \mid \If{\unexpr}{\unexpr}{\unexpr} \mid \Let{\unexpr}{\unexpr}
    \\
    Values & u,v \sdef &
    \bv \mid \op \mid \unabs
    \\
    Evaluation Contexts & \ectx \sdef &
    \ehole \mid \ap{\ectx}{\unexpr} \mid \ap{v}{\ectx} \mid \If{\ectx}{\unexpr}{\unexpr}
    \mid \Let{E}{\unexpr}
    \\
    Value Environments & \env \sdef &
    \empenv \mid \extenv{\env}{x}{v}
  \end{tabular}
  \caption{Runtime Syntax}
  \trlabel{fig:runtime-stx}
\end{figure}
}

\newcommand{\showerasefigure}{%
\begin{figure}[ht]
  \begin{align*}
    \eraseann{\_} \colon \mathsf{Expression} &\rightarrow \mathsf{Unannotated\ Expression} \\
    \eraseann{x} &\mdef x \\
    \eraseann{\op} &\mdef \op \\
    \eraseann{\bv} &\mdef \bv \\
    \eraseann{\abs{x}{\expr}} &\mdef \unabs[x][{\eraseann{\expr}}] \\
    \eraseann{\app{\expr_1}{\expr_2}} &\mdef {\app{\eraseann{\expr_1}}{\eraseann{\expr_2}}} \\
    \eraseann{\If{\expr_1}{\expr_2}{\expr_3}} &\mdef {\If{\eraseann{\expr_1}}{\eraseann{\expr_2}}{\eraseann{\expr_3}}} \\
    \eraseann{\Let{\expr_1}{\expr_2}} &\mdef {\Let{\eraseann{\expr_1}}{\eraseann{\expr_2}}}
  \end{align*}
  \caption{Projection from syntax to runtime syntax}
  \trlabel{fig:erase}
\end{figure}
}

\newcommand{\showexecopfigure}{%
\begin{figure}[h]
  \begin{align*}
    \execop{\isnatop}{\nat} &\mdef \true \\
    \execop{\isnatop}{v} &\mdef \false
      \quad \metaif v \notveq \nat \\
    \execop{\isstrop}{\str} &\mdef \true \\
    \execop{\isstrop}{v} &\mdef \false
      \quad \metaif v \notveq \str \\
    \execop{\notop}{\false} &\mdef \true \\
    \execop{\notop}{v} &\mdef \false
      \quad \metaif v \notveq \false \\
    \execop{\addop}{\nat} &\mdef \nat + 1
  \end{align*}
  \caption{Primitive operation interpretation}
  \trlabel{fig:execop}
\end{figure}
}

\newcommand{\showreductionfigure}{%
\begin{figure}[h]
  \begin{align*}
    \fbox{$\stepsto{\unexpr}{\unexpr}$}
    \hspace{6em}
    \stepsto{\ap{\op}{v}&}{\execop{\op}{v}} &\textsc{(E-PrimOp)}\\
    \stepsto{\ap{(\unabs[x][\unexpr_b])}{v}&}{\substexpr{\unexpr_b}{x}{v}} &\textsc{(E-Beta)}\\
    \stepsto{\If{v}{\unexpr_1}{\unexpr_2}&}{\unexpr_1} \quad \metaif v \neq \false &\textsc{(E-IfTrue)}\\
    \stepsto{\If{\false}{\unexpr_1}{\unexpr_2}&}{\unexpr_2} &\textsc{(E-IfFalse)} \\
    \stepsto{\Let{v}{\unexpr_2}&}{\substexpr{\unexpr_2}{x}{v}} &\textsc{(E-Let)}
  \end{align*}
  \begin{mathpar}
  \jbox{\eestepsto{\unexpr}{\unexpr}}
  \and
  \inferrule
    {\stepsto{\unexpr_1}{\unexpr_2}}
    {\estepsto{\unexpr_1}{\unexpr_2}}
  \end{mathpar}
  \caption{Reduction rules}
  \trlabel{fig:newrules}
\end{figure}
}

\newcommand{\showfullopsemfigure}{%
\begin{figure}[t]
  \begin{center}
  \begin{tabular}{>{\itshape}l<{} >{$}r<{$}@{\hskip 0.10cm} >{$}l<{$}@{\hskip 0.5cm}}
    Runtime Expressions & \unexpr \sdef &
    \bv \mid \op \mid x \mid
    \highlightmathbox{\unabs} \mid \ap{\unexpr}{\unexpr} \mid \If{\unexpr}{\unexpr}{\unexpr} \mid \Let{\unexpr}{\unexpr}
    \\
    Values & u,v \sdef &
    \bv \mid \op \mid \unabs
    \\
    Evaluation Contexts & \ectx \sdef &
    \ehole \mid \ap{\ectx}{\unexpr} \mid \ap{v}{\ectx} \mid \If{\ectx}{\unexpr}{\unexpr}
    \mid \Let{E}{\unexpr}
    \\
  \end{tabular}
  \end{center}

  \lighthrule

  \begin{minipage}[t]{0.5\textwidth}
  \begin{mathpar}
  \inferrule
    {\stepsto{\unexpr_1}{\unexpr_2}}
    {\estepsto{\unexpr_1}{\unexpr_2}}
  \end{mathpar}
  \begin{align*}
    \stepsto{\ap{\op}{v}&}{\execop{\op}{v}}
    \\
    \stepsto{\ap{(\unabs[x][\unexpr_b])}{v}&}{\substexpr{\unexpr_b}{x}{v}}
    \\
    \stepsto{\If{v}{\unexpr_1}{\unexpr_2}&}{\unexpr_1} \quad \metaif v \neq \false
    \\
    \stepsto{\If{\false}{\unexpr_1}{\unexpr_2}&}{\unexpr_2}
    \\
    \stepsto{\Let{v}{\unexpr_2}&}{\substexpr{\unexpr_2}{x}{v}}
  \end{align*}
  \end{minipage}%
  \begin{minipage}[t]{0.5\textwidth}
  \begin{align*}
    \execop{\isnatop}{\nat} &\mdef \true \\
    \execop{\isnatop}{v} &\mdef \false
      \quad \metaif v \notveq \nat \\
    \execop{\isstrop}{\str} &\mdef \true \\
    \execop{\isstrop}{v} &\mdef \false
      \quad \metaif v \notveq \str \\
    \execop{\notop}{\false} &\mdef \true \\
    \execop{\notop}{v} &\mdef \false
      \quad \metaif v \notveq \false \\
    \execop{\addop}{\nat} &\mdef \nat + 1
  \end{align*}
  \end{minipage}

  \caption{Operational Semantics}
  \trlabel{fig:opsem-full}
\end{figure}
}

\newcommand{\showexpinterp}{%
  \begin{align*}
    \expinterp{\semR}{\unexpr}[0] \mdef& \metatrue \\
    \expinterp{\semR}{\unexpr_0}[n+1] \mdef& \begin{cases}
      \expinterp{\semR}{\unexpr_1}[n] \, &\metaif \eestepsto{\unexpr_0}{\unexpr_1} \\
      \valinterp{\semR}{v}[n+1] \, &\metaif \unexpr_0 = v
    \end{cases}
  \end{align*}
}

\newcommand{\showvalresinterp}{%
  \begin{align*}
    \valinterpres{\R}{v}[0] \mdef& \metatrue \\
    \valinterpres{{\res}}{v}[n+1] \mdef& (\propertyA{\obj}{v}) \metaand \\
                                 &\valinterp{\tyA}{v}[n+1] \metaand \\
                                 &\left(\begin{gathered}
                                     \left(\propertyCfst{v}{}{\env}{n+1}\right)\\
                                     \metaor\\
                                     \left(\propertyCsnd{v}{}{\env}{n+1}\right)
                                 \end{gathered}\right)
  \end{align*}
}

\newcommand{\showtyperesultinterpfigs}{%
\begin{figure}[h]
  \small
  \begin{minipage}[t]{0.40\textwidth}
    \showexpinterp
  \end{minipage}
  \hfill
  \begin{minipage}[t]{0.54\textwidth}
    \showvalresinterp
  \end{minipage}
  \caption{Interpretations of Type Results}
  \trlabel{fig:tresinterp}
\end{figure}
}

\newcommand{\showvaltyinterpfig}{%
\begin{figure}[h]
  \begin{align*}
    \valinterp{\tyA}{v}[0] \mdef& \metatrue \\
    \valinterp{\top}{v}[n+1] \mdef& \metatrue \\
    \valinterp{\bot}{v}[n+1] \mdef& \metafalse \\
    \valinterp{\False}{v}[n+1] \mdef& v \veq \false\\
    \valinterp{\True}{v}[n+1] \mdef& v \veq \true \\
    \valinterp{\Nat}{v}[n+1] \mdef& v \in \natset \\
    \valinterp{\Str}{v}[n+1] \mdef& v \in \strset \\
    \valinterp{\uniontype{\tyA}{\tyB}}{v}[n+1] \mdef& \valinterp{\tyA}{v}[n+1] \metaor \valinterp{\tyB}{v}[n+1] \\
    \valinterp{\funty{x}{\tyA}{\semR}}{v}[n+1] \mdef&
     \left(v =_{{\alpha}} \unabs \metaor v =_{\alpha} p \right) \metaand
      \\ & (\forall w, \betw{k}[0][n+1],\: \valinterp{\tyA}{w}[k] \metaimpl
            \expinterp{\R}{\ap{v}{w}}[k][\extenv{\env}{x}{w}])
  \end{align*}
  \caption{Interpretation of Types}
  \trlabel{fig:valinterptype}
\end{figure}
}

\newcommand{\showpropinterpfig}{%
\begin{figure}[h]
  \begin{align*}
    \propinterp{\prop}[0] \mdef& \metatrue \\
    \propinterp{\TT}[n+1] \mdef& \metatrue \\
    \propinterp{\FF}[n+1] \mdef& \metafalse \\
    \propinterp{\eqprop{x}{y}}[n+1] \mdef& \env(x) =_{\alpha} \env(y) \\
    \propinterp{\typeprop{x}{\tyA}}[n+1] \mdef& \valinterp{\tyA}{\env(x)}[n+1] \\
    \propinterp{\nottypeprop{x}{\tyA}}[n+1] \mdef& \notvalinterp{\tyA}{\env(x)}[m][n+1]\\
    \propinterp{\conjprop{\prop_1}{\prop_2}}[n+1] \mdef& \propinterp{\prop_1}[n+1] \metaand \propinterp{\prop_2}[n+1] \\
    \propinterp{\disjprop{\prop_1}{\prop_2}}[n+1] \mdef& \propinterp{\prop_1}[n+1] \metaor \propinterp{\prop_2}[n+1]
  \end{align*}
  \caption{Interpretation of Propositions}
  \trlabel{fig:propinterp}
\end{figure}
}
\newcommand{\showctxinterpfig}{%
  \begin{figure}[h]
  \begin{align*}
    \ctxinterp{\ctx}{\env}[0] \mdef& \dom{\ctx} \subseteq \dom{\env} \\
    \ctxinterp{\empctx}{\env}[n+1] \mdef& \metatrue \\
    \ctxinterp{\ctx \exta \bindvar}{\env}[n+1] \mdef& \ctxinterp{\ctx}{\env}[n+1] \metaand \valinterp{\tyA}{\env(x)}[n+1] \\
    \ctxinterp{\ctx \exta \prop}{\env}[n+1] \mdef& \ctxinterp{\ctx}{\env}[n+1] \metaand \propinterp{\prop}[n+1]
  \end{align*}
  \caption{Interpretation of Contexts}
  \trlabel{fig:ctxinterp}
\end{figure}

}
\newcommand{\showlogicalrelationsfigure}{%
  \showtyperesultinterpfigs
  \showvaltyinterpfig
  \showpropinterpfig
 \showctxinterpfig
}

\RequirePackage{todonotes}
\newif\ifLambdaTRHideNotes
\ifdefined\LambdaTRHideNotes
  \LambdaTRHideNotestrue
\fi
\ifLambdaTRHideNotes
  \NewDocumentCommand{\samnote}{m}{}
\else
  \NewDocumentCommand{\samnote}{m}{\todo[inline,color=green!30]{#1}}
\fi

\AtEndPreamble{
  \theoremstyle{acmplain}
  \newtheorem*{lemmatwo}{Lemma 2}
  \theoremstyle{acmdefinition}
  \newtheorem{remark}[theorem]{Remark}
}

\usepackage{epigraph}
\title{Revisiting Soundness for Occurrence Typing, Semantically}
\author{Yuquan Fu}
\orcid{0009-0009-1322-8614}
\email{yuqfu@iu.edu}
\affiliation{
  \institution{Indiana University}
  \city{Bloomington}
  \state{Indiana}
  \country{USA}
}
\author{Carlo Angiuli}
\orcid{0000-0002-9590-3303}
\email{cangiuli@iu.edu}
\affiliation{
  \institution{Indiana University}
  \city{Bloomington}
  \state{Indiana}
  \country{USA}
}
\author{Sam Tobin-Hochstadt}
\orcid{0000-0003-1302-6499}
\email{samth@iu.edu}
\affiliation{
  \institution{Indiana University}
  \city{Bloomington}
  \state{Indiana}
  \country{USA}
}

\begin{document}

\begin{abstract}
Over the past two decades, numerous systems have brought some of the benefits of
dependent typing to a wide variety of new programming languages, often by
restricting which terms can appear inside types.
Such techniques are known as refinement types, occurrence
typing, liquid types, and path dependent types, among others. However, the
restrictions adopted by these systems often break the substitution property,
because they explicitly disallow the ability to substitute arbitrary terms for
variables inside types. This leads to significant complexity in the design and
metatheory of these systems, increasing the possibility of significant errors.

We consider a specific line of work on occurrence typing, namely, the calculus
underlying Typed Racket due to Tobin-Hochstadt and Felleisen
\citep{tobin-hochstadt2010}. We show that the fundamental challenge of
substitution into types resulted in multiple flaws in the formalism and the
syntactic type soundness theorem of this work. These flaws are replicated in
several other papers building on this work, and also surface as a soundness bug
in Typed Racket itself. We identify and repair these problems, revising the core
calculus of Typed Racket and giving a \emph{semantic type soundness} proof using
step-indexed logical relations, formalized in Lean. We argue that this approach
is simpler than it may seem, and easily scales to handle the complexity of the
occurrence typing in Typed Racket.
\end{abstract}

\setlength{\epigraphwidth}{0.8\textwidth}
\setlength{\epigraphrule}{0pt}

\maketitle

\epigraph{Simple type systems have the property that you can replace
  any subexpression of a well-typed term by another of the same
  type!}{\href{https://personal.cis.strath.ac.uk/conor.mcbride/pub/}{Conor McBride}}

\section{Introduction}
\label{sec:introduction}

Over the past two decades, numerous systems have sought to integrate typed and
untyped programming, often with specially-designed type systems aimed at
accommodating the idioms of a related and pre-existing untyped programming
language \citep{siek2006,matthews2007,tobin-hochstadt2008}. Typed Racket is one
such system \citep{tobin-hochstadt2008,tobin-hochstadt2010,kent2016,fu2021},
aimed at integrating with the Racket programming language \citep{plt-tr1}. Typed
Racket uses \emph{occurrence typing} to allow different occurrences of variables
to have different types, based on control flow and type
tests~\cite{tobin-hochstadt2008,tobin-hochstadt2010}, as illustrated by the
following simple example:

\begin{lstlisting}
(: f (-> Any Number))
(define (f x)
  (if (number? x)
      (+ 1 x)
      0))
\end{lstlisting}

Here, the declared type of \texttt{x}, \texttt{Any}, allows \texttt{f} to be
called on any argument. The first occurrence of \texttt{x} in \texttt{f}'s body,
within \texttt{(number? x)}, continues to have type \texttt{Any}. In the
``then'' branch of the conditional, however, the fact that the \texttt{(number?
x)} test was successful refines the type of \texttt{x} to \texttt{Number}, making
\texttt{(+ 1 x)} well-typed.

The key mechanism enabling this is the fact that the types of functions can
include \emph{propositions} about the types of
variables~\cite{tobin-hochstadt2010}. For example, the type of \texttt{number?}
is:
\[
  \texttt{number?} :
  \funty{x}{\top}
  {\res[\Bool][\typeprop{x}{\Nat}][\nottypeprop{x}{\Nat}][\topobj]}
\]
That is, \texttt{number?} takes any value, dubbed $x$, and returns a boolean (written $\Bool$),
but its result type also records two \emph{latent propositions}: if the call
returns true, then the argument $x$ has type $\Nat$ (a number), and if it returns false,
$x$ does not have type $\Nat$.\footnote{The final component of the type,
$\topobj$, is called an \emph{object}, and is not needed for this example. We
will discuss objects in \cref{sec:background}.}

Because types in Typed Racket may contain terms such as $x$, Typed Racket is in
some sense \emph{dependently typed}; however, this dependency comes with two
major restrictions: only \emph{variables} can occur in types, and only in the
context of asserting \emph{typing propositions} about those variables. However,
these variables may come from scopes other than the innermost scope
\cite{tobin-hochstadt2010}. For example, the following curried function tests
the type of its outer parameter \texttt{x}, not its inner parameter \texttt{y}:

\begin{lstlisting}
(lambda ([x : Any])
  (lambda ([y : Any])
    (number? x)))
\end{lstlisting}

The inner function is type-checked in a context that binds \texttt{x}, and is
assigned a type whose latent propositions refer to the free variable $x$, rather
than the variable $y$ bound in the type itself:
\[
  \funty{y}{\top}
  {\res[\Bool][\typeprop{x}{\Nat}][\nottypeprop{x}{\Nat}][\topobj]}
\]
As a result, types are context-sensitive: they can only be understood relative
to a typing or runtime context providing an interpretation of their free
variables.

When type dependency is combined with restrictions on what terms are allowed to
appear in types, some basic properties of type systems begin to fail. In
particular, substitution of values for variables (as in call-by-value
$\beta$-reduction) no longer preserves typing, or even the well-formedness of
types, because the type system handles variables in a special way that does not
extend to arbitrary terms or even values. The preservation property also fails,
as it generally relies on substitution.

This failure of preservation is a well-known problem in the programming languages community, and is
often addressed by adding runtime-only typing rules as a technical device, as
famously exemplified by the ``stupid cast'' rule of Featherweight Java
\citep{igarashi2001}. A similar approach is adopted in the original POPL 2008
paper on occurrence typing for Typed Racket, in order to address problems with
typing partially-reduced \texttt{if} expressions~\citep{tobin-hochstadt2008}.
The subsequent ICFP 2010 paper \citep{tobin-hochstadt2010} on Typed Racket, however,
introduces a more sophisticated type system with dependency, as shown above, which cannot be handled in this way,
and instead adopts a big-step operational semantics to avoid needing to discuss
properties of intermediate states whatsoever.

Adopting a big-step operational semantics allows us to prove type soundness at
base type without worrying about ill-typed intermediate states, but does not
solve the problem at function type: $\lambda$-expressions are values that may
still contain ill-typed subexpressions due to substitutions into their bodies.
The ICFP 2010 paper also switches from substitution to closures for this reason,
but still suffers from problems related to the context-sensitivity of types.
Consider its typing rule for closures \citep[Section 6.2]{tobin-hochstadt2010}:
\[
	\inferrule[\textsc{T-Clos}]
	{\exists \ctx.\; \satp{\env}{\ctx} \text{ and }
		\typeof{\ctx}{\abs{x}[\tyA]{\expr}}
		{\res[\tyB][\posprop][\negprop][\obj]}}
	{\typeof{}
		{\langle \env,\abs{x}[\tyA]{\expr}\rangle}
		{\highlightmathbox{\res[\tyB][\posprop][\negprop][\obj]}}}
\]
The conclusion of this rule is ill-scoped, because the conclusion takes place in
an empty context, whereas the type of the conclusion may contain any of the
variables in $\Gamma$, as the premise suggests.

The fundamental issue is that, although \citet{tobin-hochstadt2010} understood
that a more semantic approach to type soundness is needed for their terms, for
which they provide a big-step operational semantics $\env \vdash \expr
\Downarrow v$, and for their propositions, for which they provide a
model-theoretic interpretation relative to a runtime environment
$\satp{\env}{\prop}$, they did not provide, or indeed recognize the need for, a semantic interpretation of
types. Instead, the interpretation of types simply falls back on the syntactic
type system, as in the above \textsc{T-Clos} rule.

In fact, even their key type soundness lemma features a similarly ill-scoped
type in its conclusion:

\begin{lemmatwo}[\citet{tobin-hochstadt2010}]
If $\Gamma \vdash e : \tau \mathbin{;} \psi_{+} \mid \psi_{-} \mathbin{;} o$,
$\rho \models \Gamma$, and $\rho \vdash e \Downarrow v$, then [\dots]\
\(
  {}\vdash v :
  \highlightmathbox{\tau} \mathbin{;} \psi'_{+} \mid \psi'_{-} \mathbin{;} o'
\)
for some \(\psi'_{+}\), \(\psi'_{-}\), and \(o'\).
\end{lemmatwo}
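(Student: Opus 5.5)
The plan follows the structure of the original ICFP 2010 argument. I would proceed by induction on the big-step derivation $\rho \vdash e \Downarrow v$, with an inner induction on the typing derivation of $e$ so that uses of subsumption can be peeled off first. The conclusion would be strengthened in the standard way: besides the value typing ${}\vdash v : \tau \mathbin{;} \psi'_{+} \mid \psi'_{-} \mathbin{;} o'$, I would also carry the \emph{proposition soundness} facts that make conditionals work. These are: if $v$ is not $\false$ then $\rho \models \psi_{+}$; if $v$ is $\false$ then $\rho \models \psi_{-}$; and $o$ either is $\topobj$ or denotes a variable with $\rho(o) = v$. The model-theoretic satisfaction relation $\rho \models \psi$ is taken as given, together with the assumption $\rho \models \Gamma$.

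The leaf cases are routine.
\begin{itemize}
\item For constants and primitive operations, the value is typed by the corresponding constant rule in the empty context. The latent propositions $\TT$/$\FF$ are trivially satisfied because the value's truthiness matches them.
\item For a variable $x$, the premise $\Gamma \vdash x : \tau$ together with $\rho \models \Gamma$ gives $\rho \models x \in \tau$. Unfolding satisfaction yields a typing of $\rho(x)$ at $\tau$, and the object $x$ correctly names $v = \rho(x)$.
\item For an abstraction, $v$ is the closure $\langle \rho, \lambda x{:}\sigma.e_b\rangle$. The typing follows by \textsc{T-Clos}, taking the existential witness to be $\Gamma$ itself with the hypothesis $\rho \models \Gamma$.
\end{itemize}

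The compound cases combine the induction hypotheses.
\begin{itemize}
\item For $\mathsf{if}$, apply the induction hypothesis to the test to learn which of $\psi^{c}_{+}$ or $\psi^{c}_{-}$ holds in $\rho$. Then $\rho \models \Gamma, \psi^{c}_{\pm}$, so the induction hypothesis applies to the branch actually taken.
\item For $\mathsf{let}$, the induction hypothesis on the bound expression yields a value $v_1$ and its proposition and object facts. These show $\rho[x \mapsto v_1]$ satisfies the extended context $\Gamma, x{:}\tau, \psi_1, \psi_2$. The induction hypothesis on the body then applies, and it remains to check that the object substitution $[o/x]$ in the result type preserves the truth of the propositions.
\item For application, the induction hypothesis on the operator yields a closure typed at a function type. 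Inverting \textsc{T-Clos} produces some context $\Gamma'$ with $\rho' \models \Gamma'$ typing the $\lambda$. The induction hypothesis on the operand yields $v_2$, so $\rho'[x\mapsto v_2] \models \Gamma', x{:}\sigma$. A final use of the induction hypothesis on the body gives the result.
\end{itemize}

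The main obstacle is the last step of the application case, and in fact every case where $\tau$ mentions variables of $\Gamma$. The result type $\tau$ produced by the induction hypothesis is interpreted relative to $\Gamma'$ and $\rho'[x\mapsto v_2]$. The goal, however, is the range type with $[o_2/x]$ substituted, read relative to $\Gamma$ and $\rho$, and both are then asserted in the \emph{empty} context. I would first try to bridge these with a lemma stating that types whose free variables agree under $\rho$ and $\rho'$ type the same closed values. I expect this to fail: a syntactic judgement ${}\vdash v : \tau$ with free variables in $\tau$ has no environment in which to read them. This is exactly the ill-scoping the paper points out, and it is why I would ultimately replace the conclusion by a semantic, environment-indexed interpretation of $\tau$ as in the logical relation.
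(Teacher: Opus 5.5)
There is no proof in the paper to compare against. The paper quotes Lemma~2 in \cref{sec:introduction} only as an example of ill-scoping, and it replaces the lemma rather than proving it, via \cref{thm:fundamental} and \cref{thm:type-soundness}. Your diagnosis agrees with this. However, you place the failure too late. The abstraction and variable cases you call routine are where it already begins. \textsc{T-Clos} with witness $\Gamma$ concludes ${}\vdash \langle\rho,\lambda x{:}\sigma.\,e_b\rangle : \tau \mathbin{;} \dots$ in the empty context, even though the latent propositions of $\tau$ may mention variables of $\Gamma$ (as for $\lambda y{:}\top.\,(\mathsf{number?}\ x)$ under a binder for $x$). In the variable case, $\rho \models x \in \tau$ unfolds to the same syntactic judgment ${}\vdash \rho(x) : \tau$. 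So the application case has no well-formed induction hypothesis to fail against. The $\Gamma',\rho'$ versus $\Gamma,\rho$ mismatch you hit is a symptom of having to invert a closure typing at the call site.

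The paper's route removes that step entirely:
\begin{itemize}
\item The relation is indexed by $\rho$ and a step count, and runtime terms are annotation-erased and run under a small-step semantics, so there are no closures.
\item The fundamental theorem is proved by induction on the \emph{typing} derivation. A $\lambda$-body is therefore handled in the \textsc{T-Abs} case at its definition site, with the same $\Gamma$ and $\rho$.
\item The application case only instantiates the semantic function type at $\rho[x \mapsto v_2]$. Environment bridging reduces to semantic weakening and strengthening lemmas for $\rho$.
\item A statement about values in the empty environment is made only for closed programs, where $R$ is closed.
\end{itemize}

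Second, your proposed repair, making the conclusion semantic, is necessary but not sufficient. In the let and application cases you defer the check that object substitution $[o/x]$ preserves truth. That check is false for the calculus of \citet{tobin-hochstadt2010}. Its erasure $[\topobj/x]$ sends atomic propositions about $x$ to $\TT$ even inside the domain of a function type, where subtyping is contravariant, so $\tau$ need not be a subtype of $\tau[\topobj/x]$. The curried example of \cref{ssec:icfp-wrong} (apply it to the string, then to $\lambda w.\,\true$) type-checks yet goes wrong. Hence no semantic version of Lemma~2 holds for that system at all.

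The paper therefore has to change the calculus itself. It adds the object $\botobj$ and mutually defined positive and negative erasure, and proves $\tau[\botobj/x] <: \tau <: \tau[\topobj/x]$ (\cref{lem:subtyping-erase}). Combined with \cref{lem:prfsound}, this shows that object substitution preserves the logical relation. Likewise, your variable and subsumption cases rely on soundness of the proof and subtyping systems. In the original calculus these are defined through restrict/remove metafunctions whose well-foundedness is unclear. The paper recasts them as judgments, defined mutually with subtyping, before proving \cref{lem:prfsound}.
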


These fundamental problems in the metatheory of \citet{tobin-hochstadt2010}
extend to several subsequent papers building on their work, by a variety of
different authors \directworkrefs{}. In all of these papers, even ones adopting
significantly different approaches such as abstract machines, the proofs,
statements, and even definitions are incorrect in similar ways.

We comprehensively address these issues by instead establishing a \emph{semantic
  type soundness} theorem for a core calculus for Typed Racket, using
step-indexed logical relations \citep{appel2001,ahmed2006,dreyer2009} as
advocated e.g.\ by \citet{timany2024}. The resulting metatheory is quite
straightforward and has been both proven on paper and formalized in
Lean~\cite{demoura2015}, making us confident that it is finally correct. Our
core calculus is a reformulation of the type system introduced by
\citet{tobin-hochstadt2010}, which clarifies its scoping invariants and repairs
several other problems we discovered along the way, including a non-well-founded
definition and a genuine unsoundness in the Typed Racket implementation.

More broadly, after developing the key syntactic elements of these kinds of type
systems, researchers frequently run into difficulties validating those type
systems metatheoretically, and have devised many complex new approaches for
addressing these metatheoretic challenges
\citep{tobin-hochstadt2010,castagna2022,flanagan2006,knowles2010,chugh2012}. In
many cases these approaches work, but in the course of our work we have
discovered that many papers in this area---and \emph{not only} those
\directworkrefs{} building on the work of
\citet{tobin-hochstadt2010}---independently contain issues related to the ones
at the heart of this work \cite{flanagan2006,knowles2010,rondon2008,vekris2015}.
These include papers published at POPL, ICFP, PLDI, ESOP, and ECOOP.

Our conclusion is twofold: these complexities can be more comprehensively and
straightforwardly addressed by the use of semantic type soundness techniques,
and by paying closer attention to variable scoping in types. These papers often
involve systems with little complexity in the runtime behavior of the programs,
meaning that the definitions of their logical relations can be extremely
straightforward, and their proofs not more challenging than traditional
progress-and-preservation proofs \citep{wright1994}.

\paragraph{Contributions}

Our work has three primary contributions:

\begin{enumerate}
\item A detailed analysis of why proving type soundness for Typed Racket was
challenging, and why it was done incorrectly, including the role of type
dependency in that challenge;
\item A corrected version of the core occurrence typing calculus for Typed
Racket and its metatheory, including a semantic proof of type soundness
mechanized in Lean; and
\item An enumeration of how similar difficulties affect other papers on related
languages, both those that descend from Typed Racket and others.
\end{enumerate}

\paragraph{Outline}

The remainder of this paper is organized as follows.
In \cref{sec:background}, we review the basics of Typed Racket and its core
calculus $\lambda_{TR}$, how its metatheory was developed in the ICFP 2010 paper
\citep{tobin-hochstadt2010}, and what was incorrect. We also outline how similar
problems reoccur in subsequent work building on and incorporating this
formalism.
In \cref{sec:formalism}, we present our revised core calculus, $\lamtrl$,
including its proof system, type system, and small-step operational semantics,
along with a refined analysis of its variable scoping.
In \cref{sec:logical-relations-soundness}, we define step-indexed logical
relations for $\lamtrl$ and outline the statements and proofs of semantic type
soundness.
In \cref{sec:related_work}, we analyze a broader range of related systems in
greater detail.
We conclude in \cref{sec:conclusion}.

\section{Why Occurrence Typing Needs More From Type Soundness}
\label{sec:background}

We begin by recalling how Typed Racket's static type system works in practice,
how its core calculus is defined, and how occurrence typing allows Typed Racket
to type check idiomatic Racket programs by combining logical reasoning about
type-based propositions with a restricted form of dependent typing. We then
describe the metatheoretic approach taken by \citet{tobin-hochstadt2010}, how it
attempts to take a more semantic approach to proving type soundness but falls
short, and the ways in which it is ultimately incorrect. Finally, we outline how
subsequent papers building on that work inherited its flaws, even when defining
the system in a substantially different way.

\subsection{A Brief Introduction to Typed Racket}

To understand how Typed Racket's type system works, we will begin with an
example that includes all of the concepts we will need for this paper.

\begin{lstlisting}
(: number-or-boolean? (-> Any Boolean :
                          #:+ ((U Number Boolean) @ 0)
                          #:- (! (U Number Boolean) @ 0)))
(define (number-or-boolean? x)
  (or (number? x) (boolean? x)))

(: size (-> (U Number String) Number))
(define (size z)
  (if (number-or-boolean? z)
      (log z)
      (string-length z)))
\end{lstlisting}

The \texttt{size} function processes a number or string using the \texttt{log}
or \texttt{string-length} functions respectively. It
relies on a type test using the predicate \texttt{number-or-boolean?} to
determine which function to use. This predicate is defined in the program; the
type system does not require predicates to be built-in. The notation \texttt{(U
Number Boolean)} is an untagged union type; it represents the set of values that
are either numbers or booleans. The function type notation \texttt{(-> Any
Boolean : \#:+ P \#:- Q)} represents the type of the predicate function, which
takes any value and returns a boolean. If that boolean is \emph{true}, then we
learn the truth of the proposition \texttt{P}, that the value is a number or a
boolean; if it is \emph{false}, then we learn the truth of the proposition
\texttt{Q}, that the value is neither a number nor a boolean. In these
propositions, \texttt{@ 0} refers to the predicate's parameter \texttt{x}.

The key rules for type checking this program are those for function application
and \texttt{if} expressions, reproduced below:\footnote{These rules are taken
from \cref{sec:formalism} of this paper but are effectively identical to the
earlier rules \citep{tobin-hochstadt2010}.}
\begin{mathpar}
  \tcapp \and \tcif
\end{mathpar}

To understand these rules, first note that each typing judgment not only
specifies the type of a term, but also two \emph{propositions} and an
\emph{object}. The two propositions are statements that are true if the term
evaluates to true (resp., false) and the object describes a \emph{variable} that
the term is equal to. For example, the variable \texttt{x} has object $x$,
whereas an expression such as \texttt{(+ x 1)} has a trivial object denoted by
$\top$, because there is no variable that this expression is equal to.
Propositions include statements like $\typeprop{x}{\Nat}$ (which is true if $x$
is a number) as well as logical connectives and equality.

With this in mind, the \textsc{T-App} rule types the operand $e_2$ at an
arbitrary type $B$, and the operator $e_1$ at a function type whose target is a
type result (i.e., a type, two propositions, and an object) $R$. The type of the
application is the result $R$ with the name of its formal parameter, $x$,
replaced by the operand's object. If the operand has a trivial object, then all
information in $R$ about the formal parameter is lost.

The \textsc{T-If} rule types the condition $e_1$ at a type $A$, which is
ignored, and also with two propositions $\psi_{c+}$ and $\psi_{c-}$ describing
what is learned when the condition is true or false, respectively. Those two
propositions are then added to the context $\Gamma$ when typing the two branches
of the \texttt{if}. Thus the context contains not only variable bindings with
their types, but also arbitrary propositions from which more precise type
information can be derived.

Going back now to our example, the application in the condition of the
\texttt{if} expression is typed using the \textsc{T-App} rule. The type of the
predicate states that it takes \texttt{Any} input, with the result type being
\texttt{Bool} and two propositions stating that the argument is (resp., is not)
a number or boolean. Then the name of the formal parameter, $x$, is substituted
with the \emph{object} of the operand, $z$, to produce the result of the
application. As mentioned above, objects are (trivial or) the name of a variable
which is equal to the result of the expression being considered; the variable
$z$ thus has itself as an object for obvious reasons.

The two propositions resulting from the application are therefore
$\typeprop{z}{\uniontype{\Nat}{\Bool}}$ and
$\nottypeprop{z}{\uniontype{\Nat}{\Bool}}$, respectively. From the initial type
declaration for \texttt{z}, we also have $\typeprop{z}{\uniontype{\Nat}{\Str}}$.
These propositions are combined using a proof system (expressed by the judgment
$\Gamma \vdash \psi$) to conclude the more precise proposition
$\typeprop{z}{\Nat}$ (resp., $\typeprop{z}{\Str}$) in each branch, as required
to type check the application of \texttt{log} (resp.,
\texttt{string-length}).

\subsection{The ICFP 2010 Metatheory and its Problems}
\label{ssec:icfp-wrong}

We now turn to the challenges of proving type soundness for the system just
sketched, and why the approach taken by \citet{tobin-hochstadt2010} is
insufficient. The most fundamental issue is that substitution of values (or
terms) for variables does not preserve typing or even well-formedness of types,
because the type system and proof system fundamentally focus on properties of
variables. This challenge appears in multiple ways.

At the most basic level, the definition of the \textsc{T-App} rule above
requires substituting \emph{something} for the formal parameter $x$ in the
result type, but the restrictions on what can appear in types means that we
cannot simply substitute in the operand $e_2$. Instead, we substitute in the
\emph{object of the operand}, which is a variable if the operand is a variable
or equal to a variable, and $\top$ otherwise.

However, the problem of substitution reoccurs when proving type soundness for
the system. First, as recognized already in the earlier POPL 2008 paper on
occurrence typing~\cite{tobin-hochstadt2008}, simply substituting the value
\texttt{7} for the parameter \texttt{z} in the body of the function
\texttt{size} leads to the term

\begin{lstlisting}
(if (number-or-boolean? 7)
    (log 7)
    (string-length 7))
\end{lstlisting}

\noindent
whose else branch is ill-typed, and thus the metatheory must allow for
intermediate terms of this form. This is handled by extra runtime-only typing
rules in the POPL 2008 paper, and by switching to a big-step semantics in the
ICFP 2010 paper.

Second, the correctness of the proof system for propositions relies on the fact
that the propositions are well-formed, which is also not preserved under
substitution. For example, if we substitute \texttt{7} for \texttt{z} in the
proposition $\typeprop{z}{\uniontype{\Nat}{\Bool}}$, we get
$\typeprop{7}{\uniontype{\Nat}{\Bool}}$, which is ill-formed since $7$ is not a
variable. Thus, the correctness of the proof system cannot be characterized by
using substitution. Instead, the ICFP 2010 paper introduces a model-theoretic
account of propositions, in which they are interpreted relative to a runtime
environment, or a mapping from variables to values. This is used to prove the
correctness of the proof system for propositions.

However, although this model-theoretic approach makes progress towards a correct
account of the semantics, it is not sufficient. When characterizing the
model-theoretic truth of the proposition $x \in A$ in environment $\env$, the
ICFP 2010 paper refers back to the syntactic typing judgment $\empctx \vdash
\env(x) : A$, but this resort to syntactic typing reintroduces the same problems
that the semantic account was attempting to resolve.

In particular, defining the semantics with reference to the syntactic typing
judgment requires including a typing rule for closures. This results in
ill-scoped types, because the term \emph{inside} the closure is typed in a
context that includes the free variables being closed over, but the closure
itself must have a type that does \emph{not} include those free variables, as
already discussed in \cref{sec:introduction}. Moving away from closures is no
better, as substituting into the body of a $\lambda$-expression may result in
ill-typed terms, also as previously discussed. And once again, we cannot close
the resulting type by substituting in values for free variables, as types are
not closed under substitution.

The core problem is expressed in the following table:
\citet{tobin-hochstadt2010} develop both semantic and syntactic counterparts for
terms and propositions, but not for types.
\[
  \begin{array}{c|c|c|c}
    & \text{terms} & \text{types} & \text{propositions} \\
    \hline
    \text{syntax} &
      \unabs[x][\expr] &
      \typeof{\ctx}{\expr}{\R} &
      \proves{\ctx}{\prop} \\
    \text{semantics} &
      \env \vdash \expr \Downarrow v &
      \textbf{???} &
      \satp{\env}{\prop}
  \end{array}
\]

\paragraph{Other errors in Typed Racket}

In the course of our work, we have also discovered other errors in
\citet{tobin-hochstadt2010}. Most importantly, the definition of substitution of
objects for variables does not correctly handle contravariance in the domains of
function types. Additionally, the definitions of several metafunctions are
defined formally in terms of typing judgments in a way that is not well-founded.

The following example demonstrates the problem with object substitution.

\begin{lstlisting}
((lambda ([x : Any])
   (lambda ([f : (Any -> Boolean : #:+ (Number @ x) )])
     (if (f x) (add1 x) 1)))
 (ann "" Any))
\end{lstlisting}

The operand of this application is a curried higher-order function of two
arguments, whose second argument $f$ is a predicate that reveals information
about the first argument $x$. If the predicate produces a true value, then $x$
must be a number. This $\lambda$-expression is well-typed. Once it is applied to
the string \texttt{""}, however, the resulting function type can no longer refer
to $x$ since it is no longer in scope. The type system \emph{should} erase the
entire proposition $\typeprop{x}{\Nat}$ to $\FF$, since no possible function
could demonstrate that $x$ is a number when $x$ is in fact the empty string.
However, both the ICFP 2010 paper and the actual Typed Racket implementation
instead erase the proposition to $\TT$ instead, meaning that the type checker
allows applying the above term to the function \texttt{(lambda ([w : Any])
true)}, even though doing so immediately leads to a runtime error.

\subsection{Downstream Consequences of these Errors}

As a result of these problems, the statements of type soundness by
\citet{tobin-hochstadt2010} are not even well-formed, as explained in
\cref{sec:introduction}. In addition, follow-on work adding refinement types to
Typed Racket~\citep{kent2016}, adding dynamic dispatch to Typed
Racket~\citep{fu2021}, and adapting Typed Racket to the Clojure
language~\cite{bonnaire-sergeant2016} all straightforwardly adopt
\citeauthor{tobin-hochstadt2010}'s \citep{tobin-hochstadt2010} approach to
semantics and thus inherit its flaws.

\citet{chang2017} also build on the work of \citet{tobin-hochstadt2010}, but
with significant changes. Rather than using a big-step semantics, they use an
abstract machine based on the CESK machine~\cite{felleisen1987}. To prove type
soundness, they define a typing judgment for abstract machine states, which must
type continuations containing evaluated terms. This runs into very similar
problems as in the earlier paper, as seen in the following typing rule
\cite[Figure~20]{chang2017}:
\[
\inferrule[TM-CESK]
  {\mathcal{E}(\rho,\sigma) \vdash^\pi e : \highlightmathbox*{\tau_2} ; \psi_2^+ \mid \psi_2^- ; o_2 \\
   \highlightmathbox*{\tau_2} ; \psi_2^+ \mid \psi_2^- ; o_2 ; \sigma \vdash_\kappa
     \kappa^\pi : \tau ; \psi^+ \mid \psi^- ; o}
  {\vdash_M
     \langle e, \rho, \sigma, \kappa^\pi \rangle : \tau ; \psi^+ \mid \psi^- ; o}
\]

The first premise types $e$ in a context generated from the environment $\rho$
and store $\sigma$ by a metafunction $\mathcal{E}$, but the resulting type
$\tau_2$ may contain free variables and is thus ill-scoped in the second premise
where no such context is provided.  Additionally, $\mathcal{E}$ has similar
problems to the typing of closures, in that it relies on syntactic typing to
give an account of the meanings of values.

Across all of these papers, the core difficulty is that restricted dependent
types pose two kinds of obligations. First, type dependency necessitates
significantly greater care with scoping and binding. Second, in systems where
dependent types are integrated into general-purpose programming languages with
effects of any kind, not every expression can be substituted into a type. The
resulting restrictions break the substitution and preservation properties,
meaning that the metatheoretic arguments that work for simpler systems
\emph{and} for full-spectrum dependent type theories---both of which satisfy the
substitution property---no longer apply. The situation of needing to develop
novel metatheory while simultaneously needing greater care and precision is a
recipe for mistakes, as we have seen in this section.

In fact, the authors first discovered these problems when attempting to extend
the core calculus and metatheory of \citet{tobin-hochstadt2010}, which led us to
want a more precise account of its scoping. Attempts to build that account---now
in \cref{ssec:formalism:wf-judgments}---led us to realize that the scoping
issues in the \textsc{T-Clos} rule pointed to more fundamental errors.
Rebuilding the core calculus more precisely then led to the discovery of the
other errors described here.

Other papers on restricted dependent types have run into similar challenges, and
have addressed them with different but also complex approaches and, in some
cases, with similar issues; we discuss these in \cref{sec:related_work}.

\section{The \texorpdfstring{$\lamtrl$}{λ\_TR\textasciitilde} Formalism}
\label{sec:formalism}

In this section we present the syntax, type system, and operational semantics of
$\lamtrl$, a core calculus for Typed Racket. The $\lamtrl$ system revises and
corrects the original core calculus behind Typed Racket, $\lambda_{TR}$, which
was first developed by \citet{tobin-hochstadt2010} and later extended to support
refinement types \citep{kent2016}, dynamic dispatch \citep{fu2021}, and various
features of Typed Clojure \citep{bonnaire-sergeant2016}.

In addition to the basic functionality of $\lambda_{TR}$, our presentation
includes the \emph{local binding} extension described by
\citet{tobin-hochstadt2010} and the \emph{equality propositions} introduced by
\citet{kent2016}. As we will discuss in \Cref{sec:conclusion}, we expect that
the other aforementioned extensions can be rebuilt on top of $\lamtrl$ with only
minor modifications, but focus on the core calculus for simplicity.

\subsection{Syntax and Overview}

\showsyntaxfigure

The syntax of $\lamtrl$ is given in \Cref{fig:stx}. The basic \emph{expression}
syntax is very straightforward: we include constants for true $\true$, false
$\false$, numbers (written $\nat$), and strings (written $\str$), as well as
basic primitives, variables, type-annotated $\lambda$-expressions, applications,
conditionals, and $\mathbf{let}$-binding.

The \emph{types} include $\top$ (\texttt{Any} in Typed Racket, a type containing
all values), $\bot$ (\texttt{Nothing}, or the empty type), types containing
exactly the values $\true$ and $\false$, the base types $\Nat$ and $\Str$, union
types $\uniontype{\tyA}{\tyB}$, and function types $\funty{x}{\tyA}{\R}$. In
function types, the \emph{type result} $\R$ is a four-tuple
$\res[\tyB][\posprop][\negprop][o]$ where $\tyB$ is the result type of the
function in the ordinary sense, $\posprop$ and $\negprop$ are propositions that
are guaranteed to hold when the function returns true or false respectively, and
the \emph{object} $o$ tracks whether the result is guaranteed to be equal to the
value of an in-scope variable. (We will discuss objects in more detail in
\cref{ssec:formalism:type-system}.) We will also write $\Bool$ for the type of
booleans, which is defined as $\uniontype{\true}{\false}$.

The \emph{propositions} include standard propositional connectives, the
\emph{positive typing proposition} $\typeprop{x}{\tyA}$ stating that the
variable $x$ has type $\tyA$, the \emph{negative typing proposition}
$\nottypeprop{x}{\tyA}$ stating that $x$ does not have type $\tyA$, and the
\emph{equality proposition} $\eqprop{x}{y}$ stating that the variables $x$ and
$y$ are aliases. Note that $\typeprop{x}{\tyA}$, $\nottypeprop{x}{\tyA}$, and
$\eqprop{x}{y}$ \emph{only apply to variables}, in order to keep the proof
system decidable. For instance, $\nottypeprop{\true}{\Nat}$ is not a valid
proposition.

Finally, \emph{contexts} are lists of two kinds of hypotheses that can be
interleaved: bindings $\bindvar$, which add the variable $x$ into scope, and
propositions $\prop$, which add the hypothesis that $\prop$ is true.

\begin{remark}
For readers familiar with earlier work on $\lambda_{TR}$, we note that our
notations are chosen to follow those of \citet{fu2021}, except that we rename
the ``null object'' $\emptyset$ to $\topobj$. In addition,
\citet{tobin-hochstadt2010} write $\tau_x$ and $\overline{\tau}_x$ for the
positive and negative typing propositions respectively, and $x{:}\sigma
\xrightarrow[o]{\psi_+\mid\psi_-} \tau$ for the function type
$\funty{x}{\sigma}{\res[\tau]}$.
\end{remark}

\begin{figure}[ht]
\begin{center}
\begin{tikzpicture}[
  judgment/.style={draw, rounded corners=2pt, align=center, inner sep=4pt, outer sep=3pt},
  dependency/.style={->, thick}
]
  \node[judgment] (wf) {
    \begin{tabular}{@{}c@{}}
      \underline{\textbf{Well-formedness}} \\[2pt]
      $\wfctx$ \\
      $\wfobj{\obj}$ \\
      $\wfprop{\prop}$ \\
      $\wftype{\tyA}$ \\
      $\wfres{\R}$
    \end{tabular}
  };
  \node[judgment, right=0.6cm of wf] (proof) {
    \begin{tabular}{@{}c@{}}
      \underline{\textbf{Proof/subtyping systems}} \\[2pt]
      $\proves{\ctx}{\prop}$ \\
      $\subtype{\ctx}{\tyA}{\tyB}$ \\
      $\subtype{\ctx}{\obj_1}{\obj_2}$ \\
      $\subtype{\ctx}{\R_1}{\R_2}$ \\
      $\restrictrel{\ctx}{\tyA}{\tyB}{\tyC}$ \\
      $\removerel{\ctx}{\tyA}{\tyB}{\tyC}$
    \end{tabular}
  };
  \node[judgment, right=0.6cm of proof] (typing) {
    \underline{\textbf{Typing judgment}} \\[2pt]
    $\typeof{\ctx}{\expr}{\R}$
  };
  \draw[dependency] (wf) -- (proof);
  \draw[dependency] (proof) -- (typing);
\end{tikzpicture}
\end{center}
\caption{Layers of $\lamtrl$}
\label{fig:formalism-layers}
\end{figure}

There are many judgments in $\lamtrl$, which we separate into three distinct
layers as depicted in \Cref{fig:formalism-layers}. In
\Cref{ssec:formalism:wf-judgments}, we will describe the well-formedness
judgments, which clarify and correct some variable scoping issues in prior work.
In \Cref{ssec:formalism:proof-system}, we turn our attention to the most
complicated layer, the proof and subtyping system, which uses propositional
assumptions to derive subtyping relationships that hold in the current scope. In
\Cref{ssec:formalism:type-system}, we finally use the previous two layers to
define the main typing judgment, which is perhaps surprisingly straightforward
once the proof and subtyping system is in place.

\subsection{Well-Formedness and Scoping}
\label{ssec:formalism:wf-judgments}

As outlined in \cref{sec:background}, many of the problems in prior work can be
traced back to ill-scoped operations on types with free variables, including the
\textsc{T-Clos} rule of \citet{tobin-hochstadt2010} and the \textsc{TM-CESK}
rule of \citet{chang2017}. When trying to clarify the correct variable scoping
conditions, we realized that this problem is complicated by $\lambda_{TR}$ and
its descendants using a single mechanism---typing propositions---to capture both
variable scoping and typing.

We illustrate this phenomenon with the expression
$\abs{x}[\tyA]{\If{\app{\isnatop}{x}}{e}{e'}}$. To type the body of this
$\lambda$-abstraction in $\lambda_{TR}$, we extend the context with
$\typeprop{x}{\tyA}$. In the `then' branch we have learned that $x$ is a number,
so $e$ is typed in context $\typeprop{x}{\tyA},\typeprop{x}{\Nat}$, recording
that the variable $x$ has both type $\tyA$ (by declaration) and type $\Nat$ (by
occurrence typing). Suppose now that $e = \abs{x}[\tyB]{e''}$; then $e''$ is
typed in context $\typeprop{x}{\tyA},\typeprop{x}{\Nat},\typeprop{x}{\tyB}$,
where the third $x$ is a \emph{new} $x$ shadowing the previous two.

We emphasize that this is \emph{not} just the standard problem of variable
shadowing. We cannot prohibit $\Gamma$ from containing multiple entries for $x$,
because that is the central mechanism behind occurrence typing in
$\lambda_{TR}$; nor can we assert that later $x$s shadow earlier ones, for the
same reason! To put a finer point on the problem, it is quite unclear how to
formulate $\lambda_{TR}$ or related systems with a de Bruijn representation of
variables, e.g., in order to formalize their metatheory.

\showwffigure

We distinguish two kinds of context extension in $\lamtrl$: descending under a
binder $\abs{x}[\tyA]{\dots}$ extends the context by $\bindvar$, whereas
learning information about an existing variable $x$ extends the context by a
typing proposition $\typeprop{x}{\tyA}$. Armed with this distinction, it is
straightforward to mutually define the \emph{well-formedness} (or
well-scopedness) judgments $\wfctx$, $\wfobj{\obj}$, $\wfprop{\prop}$,
$\wftype{\tyA}$, and $\wfres{\R}$ for our type system in
\Cref{fig:wf-judgments}, where $\dom{\ctx}$ is the set of variables \emph{bound}
in $\ctx$:
\begin{align*}
  \dom{\empctx} &\mdef \varnothing \\
  \dom{\ctx \exta \bindvar} &\mdef \dom{\ctx} \cup \{x\} \\
  \dom{\ctx \extb \prop} &\mdef \dom{\ctx}
\end{align*}

There are three rules governing well-formedness of contexts ($\wfctx$): the
empty context $\empctx$ is well-formed; the context extension $\ctx \cdot
\bindvar$ is well-formed when $\ctx$ is well-formed and $\tyA$ is well-formed in
the context $\ctx$ (i.e., $\wftype{\tyA}$); and the context extension $\ctx
\cdot \prop$ is well-formed when $\ctx$ is well-formed and $\prop$ is
well-formed in $\ctx$ (i.e., $\wfprop{\prop}$), which in the case of $\prop =
(\typeprop{x}{\tyB})$ requires that $\Gamma$ already contain an entry of the
form $\bindvar$.

The remainder of the rules in \Cref{fig:wf-judgments} simply express that all
term variables occurring in objects, propositions, types, and type results must
be in scope, where the result $\R$ of function types $\funty{x}{\tyA}{\R}$ has
$x$ in scope. We have patterned these judgments and rules after the $\wfctx$ and
$\wftype{\tyA}$ rules of full-spectrum dependent type theories in the tradition
of \citet{martin-lof1984}. Accordingly, in the rest of this paper we
adopt the type-theoretic convention of \emph{presuppositions}
\citep{martin-lof1996}, namely that in every mention of a judgment
$\ctx\vdash\mathcal{J}$ we implicitly require that $\ctx$ range only over
well-formed contexts; in subtyping judgments $\subtype{\ctx}{\tyA}{\tyB}$ we
implicitly require that $\tyA,\tyB$ are well-formed in $\ctx$; in typing
judgments $\typeof{\ctx}{\expr}{\R}$ we implicitly require that $\R$ is
well-formed in $\ctx$; and so forth.

\subsection{The Proof and Subtyping Systems}
\label{ssec:formalism:proof-system}

We now turn our attention to the intertwined proof and subtyping systems of
$\lamtrl$, which do most of the heavy lifting in occurrence typing systems. We
begin with the \emph{provability} judgment $\proves{\ctx}{\prop}$ defined in
\Cref{fig:proofsystem}, which states that the proposition $\prop$ holds in
context $\ctx$. Many of these rules are just the standard rules for truth,
falsity, conjunction, and disjunction in propositional logic. Note that, unlike
prior work \directworkrefs{}, we require that the unconstrained propositions in
the conclusions of \RULE{L-False}, \RULE{L-OrI1}, \RULE{L-OrI2}, and
\RULE{L-OrE} are well-formed in the current context.

\showproofsystemfigure

The remaining rules fall into one of several categories. \RULE{L-Refl} and
\RULE{L-Transp} govern the equality proposition $\eqprop{x}{y}$ introduced by
\citet{kent2016}, which states that the variables $x$ and $y$ are aliases:
\RULE{L-Refl} states that this relation is reflexive, and \RULE{L-Transp} states
that whenever $\eqprop{x}{y}$, then all propositions that hold for $x$ also hold
for $y$. This implies, for example, that equality is also symmetric and
transitive. (We note that Kent et al.'s \RULE{L-Transport} rule uses the
undefined notations $\psi(x)$ and $\psi(y)$ in its premise and conclusion,
whereas our rule is explicitly about substitution.)

\RULE{L-Sub} and \RULE{L-SubNot} connect typing propositions to subtyping.
\RULE{L-Sub} states that if $x$ has type $\tyA$ which is a subtype of $\tyB$
(written $\subtype{\ctx}{\tyA}{\tyB}$), then $x$ also has type $\tyB$;
\RULE{L-SubNot} states that if $x$ does \emph{not} have type $\tyA$ which is a
supertype of $\tyB$, then $x$ also does not have type $\tyB$. Next, \RULE{L-Bot}
states that $\typeprop{x}{\bot}$ is a contradiction, and \RULE{L-Ann} mediates
between our two context extensions by stating that variables $\bindvar$ in
$\Gamma$ have their declared types.

\paragraph{Restrict and remove}

The final two proof system rules, \RULE{L-Restrict} and \RULE{L-Remove}, refer
to a pair of auxiliary judgments $\restrictrel{\ctx}{\tyA}{\tyB}{\tyC}$ and
$\removerel{\ctx}{\tyA}{\tyB}{\tyC}$, defined in \Cref{fig:restrict-remove}. One
should think of these judgments as declarative specifications of metafunctions
$\restrictop{\tyA}{\tyB}$, which computes the intersection of the types $\tyA$
and $\tyB$, and $\removeop{\tyA}{\tyB}$, which computes the set difference
$\tyA\setminus\tyB$. These operations let us combine information from multiple
typing propositions about a single variable. \RULE{L-Restrict} states that if
$\typeprop{x}{\tyA}$ and $\typeprop{x}{\tyB}$ then $\typeprop{x}{\tyC}$, where
$\restrictrel{\ctx}{\tyA}{\tyB}{\tyC}$, and \RULE{L-Remove} states that if
$\typeprop{x}{\tyA}$ and $\nottypeprop{x}{\tyB}$ then $\typeprop{x}{\tyC}$,
where $\removerel{\ctx}{\tyA}{\tyB}{\tyC}$.

\showrestrictremovefigure

The rules for $\mathsf{restrict}$ and $\mathsf{remove}$ in
\Cref{fig:restrict-remove} are thus fairly self-explanatory as
overapproximations of $\tyA\cap\tyB$ and $\tyA\setminus\tyB$ respectively: that
is, whenever $\restrictrel{\ctx}{\tyA}{\tyB}{\tyC}$, then all values that have
both type $\tyA$ and type $\tyB$ also have type $\tyC$, and whenever
$\removerel{\ctx}{\tyA}{\tyB}{\tyC}$, then all values that have type $\tyA$ but
not type $\tyB$ also have type $\tyC$. (See \cref{lem:prfsound}.)

The rules for $\mathsf{restrict}$ include that the intersection of any base type
with any other base type (\RULE{Restrict-Base}) or with any function type
(\RULE{Restrict-BaseFun}) is the empty type $\bot$; that intersection is
symmetric (\RULE{Restrict-Sym}); and as a fallback (\RULE{Restrict-Other}), we
can always soundly use $\tyB$ as the intersection of $\tyA$ and $\tyB$ for the
purposes of \RULE{L-Restrict}. The rules for $\mathsf{remove}$ include the
principles that $\tyA\setminus\tyB$ is the empty type whenever $\tyA$ is a
subtype of $\tyB$ (\RULE{Remove-Sub}), and that we can soundly overapproximate
$\tyA\setminus\tyB$ with $\tyA$ for the purposes of \RULE{L-Remove}
(\RULE{Remove-Other}).

\begin{remark}
We note that the proof and subtyping systems, including
$\mathsf{restrict}$/$\mathsf{remove}$, are \emph{not intended to be complete}.
Of course, the rules do have to be \emph{sound} in
a sense that we will discuss in \cref{sec:logical-relations-soundness}.
\end{remark}

\begin{remark}
The prior work on $\lambda_{TR}$ formulates $\mathsf{restrict}$ and
$\mathsf{remove}$ as metafunctions rather than judgments. For example,
\citet[Figure 9]{tobin-hochstadt2010} include the clauses:
\begin{align*}
  \restrictop{\tau}{\sigma} &= \bot \qquad\metaif
    \not\exists v.\vdash v:\tau; \psi_1; o_1 \text{ and}
    \vdash v:\sigma; \psi_2; o_2 \\
  \restrictop{\tau}{\sigma} &= \tau\, \qquad\metaif \vdash\tau <: \sigma \\
  \restrictop{\tau}{\sigma} &= \sigma \qquad\metaotherwise
\end{align*}

The condition on the first clause is a semantic condition about the
non-existence of certain values. Although this is a sensible \emph{correctness}
condition for $\mathsf{restrict}$, it is not clear how to interpret it as part
of the definition of a metafunction used in the definition of the type system
itself.

In addition, the condition on the third clause causes the definition of
$\mathsf{restrict}$ to depend on which subtyping judgments are \emph{not}
derivable, but the definition of $\mathsf{restrict}$ affects which propositions
are provable (via \RULE{L-Restrict}), which in turn affects which subtyping
judgments hold (via \RULE{S-Expl}, which we will see shortly)! It is unclear
whether the resulting subtyping judgment is well-founded, being defined
(indirectly) in terms of its own negation. We avoid both of these problems by
reformulating $\mathsf{restrict}$ and $\mathsf{remove}$ as judgments defined
mutually with subtyping and provability.
\end{remark}

\paragraph{Subtyping}

The three mutually-defined subtyping judgments of $\lamtrl$ are presented in
\Cref{fig:subtyping}. First, the judgment $\subtype{\ctx}{\tyA}{\tyB}$ states
that $\tyA$ is a \emph{subtype} of $\tyB$ in context $\ctx$. Most of its rules
are standard. We have added a rule \RULE{S-Expl} asserting that all subtyping
relationships hold in inconsistent contexts, thereby correcting a discrepancy
between $\lambda_{TR}$ and the Typed Racket implementation: in the latter,
\emph{all terms} have all types whenever $\FF$ is derivable, whereas in the
former, only \emph{variables} have all types whenever $\FF$ is derivable. For a
concrete example, $\If{\false}{\ap{\true}{\true}}{\true}$ is well-typed in Typed
Racket and in $\lamtrl$ (by \RULE{S-Expl} and \RULE{T-Subsume}) but not in
$\lambda_{TR}$.

\showsubtypingfigure

Subtyping for function types
$\subtype{\ctx}{\funty{x}{\tyB}{\R_b}}{\funty{x}{\tyA}{\R_a}}$ is contravariant
in the source and covariant in the target. Because these targets are type
results $\R = \res[\tyC][\posprop][\negprop][\obj]$ and not just types, we must
also define a \emph{subtyping judgment for type results}
$\subtype{\ctx}{\R_a}{\R_b}$, also in \Cref{fig:subtyping}. A type result is a
subtype of another type result if the first type is a subtype of the second, the
propositions of the first imply the propositions of the second (i.e.,
$\proves{\ctx \extb \posprop[1]}{\posprop[2]}$ and $\proves{\ctx \extb
\negprop[1]}{\negprop[2]}$), and the object of the first is at least as precise
as the object of the second.

We express that final condition using the \emph{object subtyping judgment}
$\subtype{\ctx}{\obj_1}{\obj_2}$, whose rules are also in \Cref{fig:subtyping}.
The object $\obj_1$ is at least as precise as $\obj_2$ if $\obj_1=\botobj$ or
$\obj_2=\topobj$ or $\obj_1,\obj_2$ are aliased variables (or the same
variable). The \RULE{SO-Expl} rule is needed in concert with \RULE{S-Expl} and
the earlier \RULE{L-False} in order for all type result subtyping relationships
to hold in inconsistent contexts.

\subsection{The Typing Judgment}
\label{ssec:formalism:type-system}

We are finally prepared to discuss the typing judgment
$\typeof{\ctx}{\expr}{\res}$ itself, defined in \Cref{fig:typingrules}. This
judgment captures the following properties of $\expr$:
\begin{itemize}
\item The expression $\expr$ has type $\tyA$;
\item If $\expr$ is true (i.e., evaluates to anything other than $\false$) then
the proposition $\posprop$ holds. Otherwise, if $\expr$ evaluates to $\false$,
then $\negprop$ holds.
\item If the object $\obj$ is a variable $x$, then the values of $\expr$ and $x$
will be equal at runtime. When instead $\obj=\topobj$, we have no information of
this kind. (The third kind of object, $\obj=\botobj$, is a technical device used
in the definition of object substitution below.)
\end{itemize}

\showtypingrulesfigure

\showbasetypesfigure

The typing rules for base values (\RULE{T-BaseVal}) and primitive operations
(\RULE{T-Primop}) are defined in terms of the metafunctions $\typeofbvsym$ and
$\typeofopsym$ respectively, given in \Cref{fig:basetypes}. Number constants
$\nat$, for example, have type $\Nat$; a ``then-proposition'' $\posprop = \TT$
because they always evaluate to non-$\false$; an ``else-proposition'' $\negprop
= \FF$, again because they are never equal to $\false$; and object $\topobj$,
because they are not necessarily equal to any variable in scope. The constants
$\true$ and $\false$ have types $\true$ and $\false$ respectively, and the then-
and else-propositions of $\false$ are $\FF$ and $\TT$ respectively.

The primitive operation $\isnatop$ has function type $\predfun{\Nat}$,
indicating that it can take an input $x$ of any type and always returns a
boolean $\Bool$ (abbreviating the type $\uniontype{\true}{\false}$). When it
returns $\true$ we learn the proposition $\typeprop{x}{\Nat}$, i.e., that its
input is a number; when it returns $\false$ we learn instead that
$\nottypeprop{x}{\Nat}$. The object of $\isnatop$ is again $\topobj$ because it
is not necessarily equal to any variable in scope. The other primitive
operations follow a similar pattern.

The \RULE{T-Var} rule states that whenever the proof system derives
$\proves{\ctx}{\typeprop{x}{\tyA}}$, then
$\typeof{\ctx}{x}{\res[\tyA][\nottypeprop{x}{\false}][\typeprop{x}{\false}][x]}$.
That is, $x$ has type $\tyA$, if it evaluates to non-$\false$ then it is
non-$\false$ (and similarly for $\false$), and the value of $x$ is always equal
to the value of $x$. The \RULE{T-Abs} and \RULE{T-Subsume} rules are similarly
unremarkable.

The \RULE{T-If} rule is the core rule of occurrence typing, stating that
$\typeof{\ctx}{\If{e_1}{e_2}{e_3}}{\IfR}$ when $\typeof{\ctx}{e_1}{\IfCondR}$
and $e_2$ and $e_3$ have type $\R$ under the assumptions $\ifbrposprop$ and
$\ifbrnegprop$ respectively. In concert with the proof system, this rule
propagates typing information from the condition expression $e_1$ to the two
branches $e_2$ and $e_3$. In a minor cosmetic improvement to prior work, we
require that $e_2,e_3$ have exactly the same result type $\R$, rather than
allowing them to have distinct propositions which are then joined in the
conclusion of the rule; the original $\RULE{T-If}$ rule of
\citet{tobin-hochstadt2010} is easily recovered from ours (or vice versa) via \RULE{T-Subsume}.

The \RULE{T-App} rule resembles the $\Pi$-elimination rule of dependent type
theory \citep{martin-lof1984}: if $\expr_1$ has type $\funty{x}{\tyA}{\R}$ and
$\expr_2$ has type $\tyA$, then $\app{\expr_1}{\expr_2}$ should have type $\R$
with $x$ instantiated with $\expr_2$. For example, $\isnatop$ has type
$\predfun{\Nat}$, so $\app{\isnatop}{y}$ has type
$\res[\Bool][\typeprop{y}{\Nat}][\nottypeprop{y}{\Nat}][\topobj]$. The
difficulty arises when we apply functions to non-variables: we might imagine
that $\app{\isnatop}{0}$ has type
$\res[\Bool][\typeprop{0}{\Nat}][\nottypeprop{0}{\Nat}][\topobj]$, but
\emph{this is not even a type} because $\typeprop{0}{\Nat}$ and
$\nottypeprop{0}{\Nat}$ are not propositions. That is, propositions (and thus
type results) are not closed under substitution of terms for variables!

\paragraph{Object substitution}

The solution to this problem, dating back to the original $\lambda_{TR}$
calculus, is twofold. First, the purpose of objects is to allow us to handle
some non-variable terms as if they were variables, when one of the variables in
scope is equal to that term. For example, although $\app{\mathsf{id}}{y}$ is not
a variable, we know that its value will always equal the value of the variable
$y$, so we can soundly assign $\app{\isnatop}{\app{\mathsf{id}}{y}}$ the type
$\res[\Bool][\typeprop{y}{\Nat}][\nottypeprop{y}{\Nat}][\topobj]$. In fact, the
conclusion of the \RULE{T-App} rule substitutes the \emph{object of $\expr_2$}
for $x$ in $R$ in the type of $\app{\expr_1}{\expr_2}$, an operation we write
$\osubst{\R}[x][\obj_2]$; since variables have themselves as objects, this
strictly generalizes our desired behavior on variables.

\begin{remark}
The example $\app{\isnatop}{\app{\mathsf{id}}{y}}$ is admittedly contrived, but
objects are an important part of Typed Racket's ability to type realistic
programs with $\mathbf{let}$-binding \citep{kent2016}, particularly when the
grammar of objects is extended to include structure selectors applied to
variables, such as $\app{\mathbf{car}}{y}$ and $\app{\mathbf{cdr}}{y}$
\citep{tobin-hochstadt2010}.
\end{remark}

Of course, many terms will not be equal to any variable in scope, and such terms
are assigned the object $\topobj$. The object substitution $\poserase{\R}$
\emph{erases} all propositions about $x$ in $\R$, replacing them by \emph{weaker
propositions that do not mention $x$}. For example, both
$\poserase{(\typeprop{x}{\Nat})}$ and $\poserase{(\nottypeprop{x}{\Nat})}$ are
defined to be $\TT$, so the \RULE{T-App} rule assigns $\app{\isnatop}{0}$ the
type $\res[\Bool][\TT][\TT][\topobj]$.

\Citet{tobin-hochstadt2010} define erasure $\poserase{\R}$ to take atomic
propositions mentioning $x$ to $\TT$, leave unchanged any atomic propositions
not mentioning $x$, and otherwise proceed recursively through other propositions
and types.\footnote{Except on the left-hand side of $\psi_1\supset\psi_2$
propositions, but such propositions are omitted from later work.} However, this definition is
unsound: it does not satisfy $\subtype{\ctx}{\tyA}{\poserase{\tyA}}$ when $x$
occurs in the domain of a function type in $\tyA$, because function subtyping is
contravariant in that position. As discussed in \cref{ssec:icfp-wrong}, this
unsoundness is also reflected in the actual implementation of Typed Racket, as
we discovered during our work.

Our solution is to introduce a \emph{negative erasure} operation $\negerase{\R}$
which instead takes atomic propositions mentioning $x$ to $\FF$. Positive and
negative erasure flip polarity on the left-hand side of arrows and in the types
of negative typing propositions not mentioning $x$:
\begin{align*}
  \poserase{(\funty{y}{\tyA}{\R})} &=
  \funty{y}{\negerase{\tyA}}{\poserase{\synR}} \quad \metaif x \neq y
  \\
  \negerase{(\funty{y}{\tyA}{\R})} &=
  \funty{y}{\poserase{\tyA}}{\negerase{\synR}} \quad \metaif x \neq y
  \\
  \poserase{(\nottypeprop{z}{\tyB})} &=
  \nottypeprop{z}{\negerase{\tyB}} \quad \metaif z \neq x
  \\
  \negerase{(\nottypeprop{z}{\tyB})} &=
  \nottypeprop{z}{\poserase{\tyB}} \quad \metaif z \neq x
\end{align*}

The soundness conditions for erasure then state that
$\subtype{\ctx}{\negerase{\tyA}}{\tyA}$ and
$\subtype{\ctx}{\tyA}{\poserase{\tyA}}$, and it is easy to see that the above
rules for function types preserve these conditions. (See
\cref{lem:subtyping-erase}.) The full rules for object substitution into type
results, types, propositions, and objects---including positive and negative
erasure---are located in \Cref{fig:objsubst}. (Note that the $\abssubst{}{x}{y}$
notation refers to ordinary capture-avoiding substitution of $y$ for $x$.)

\showobjectsubstitutionfigure

\paragraph{Local binding}

Finally, our \RULE{T-Let} rule is very similar to the one introduced to Typed
Racket by \citet{kent2016}. To type the expression $\Let{\expr}{\expr_b}$, we
start by typing $\typeof{\ctx}{\expr}{\res[\tyA][\posprop][\negprop][\obj]}$. To
type the body $\expr_b$, we extend the context by a variable $\bindvar$ and a
pair of propositions $\osubst{(\eqprop{x}{z})}[z][\obj]$ and
\(
  \disjprop
  {(\conjprop{(\nottypeprop{x}{\false})}{\posprop})}
  {(\conjprop{(\typeprop{x}{\false})}{\negprop})}
\).
The latter proposition simply gives $e_b$ access to the then- and
else-propositions of $e$: either $\nottypeprop{x}{\false}$ and $\posprop$ hold,
or $\typeprop{x}{\false}$ and $\negprop$ hold. When $e$'s object is a variable
$y$, the former proposition is $\eqprop{x}{y}$, which allows the proof system to
transfer facts about $y$ (and hence about $e$) to $x$ as well via
\RULE{L-Transp}. When $e$'s object is $\topobj$, there is no such knowledge to
transfer and this proposition automatically computes to $\TT$ by the rules of
object substitution. In either case, supposing that $e_b$ has type $\R$ under
the above hypotheses, we assign $\Let{\expr}{\expr_b}$ the type
$\osubst{\R}[x][\obj]$ because (as in \RULE{T-App}) $x$ is no longer in scope
once we leave the body of the $\mathbf{let}$.

\subsection{Operational Semantics}

The operational semantics of $\lamtrl$ is a very standard deterministic small-step
call-by-value operational semantics defined using evaluation contexts, in the
style of \citet{felleisen1992}. The only wrinkle is, once again, the fact that
types are not closed under substitution of non-variables for variables.
Specifically, because our $\lambda$-expressions contain type annotations, the
operation $\substexpr{e}{x}{v}$ is not well-defined whenever $e$ contains a
$\lambda$-subexpression (and thus a type subexpression). Defining
$\substexpr{(\abs{y}[A]{e'})}{x}{v}$ as $\abs{y}[A]{(\substexpr{e'}{x}{v})}$
leaves free occurrences of $x$ in $A$; defining it instead as
$\abs{y}[\substexpr{\tyA}{x}{v}]{(\substexpr{e'}{x}{v})}$ results in an
ill-formed type subexpression $\substexpr{\tyA}{x}{v}$.

\showfullopsemfigure

Fortunately, because type annotations are not used at runtime, we can avoid this
problem altogether by defining our operational semantics on \emph{runtime
expressions} $\unexpr$, which are identical to the expressions $e$ in
\Cref{fig:stx} except for $\lambda$ no longer being annotated. We write
$\eraseann{e}$ for the evident \emph{projection} from expressions to runtime
expressions. See \Cref{fig:opsem-full} for our operational semantics.

\section{Semantic Type Soundness via Step-Indexed Logical Relations}
\label{sec:logical-relations-soundness}

The goal of this section is to state and prove \emph{type soundness}
(\cref{thm:type-soundness}) for the $\lamtrl$ formalism presented in
\cref{sec:formalism}. As a corollary, every closed term of type $\Bool$ either
diverges or evaluates to $\true$ or $\false$ (\cref{cor:bool}). As advocated by
\citet{timany2024}, we establish type soundness using the technique of
\emph{step-indexed logical relations} introduced by \citet{appel2001} and
subsequently refined by \citet{ahmed2006}, \citet{dreyer2009}, and many others.
Our proof of type soundness is fully formalized in the Lean proof assistant
\citep{demoura2015} (\cref{ssec:lean}).

The definitions and lemmas involved in our logical relations proof are quite
standard, which we regard as a key strength of our approach in comparison to the
ad hoc methods of recovering syntactic type soundness in related work,
including:
\begin{itemize}
\item additional typing rules for \emph{otherwise ill-typed intermediate
states}, as in \citet{tobin-hochstadt2008} and \citet{castagna2022};

\item additional typing rules for \emph{final states} that entangle syntax and
semantics, as in the closure typing rule of $\lambda_{TR}$
\citep{tobin-hochstadt2010} and its successors
\citep{bonnaire-sergeant2016,fu2021,kent2016}, which quantifies over
semantically-valid environments, and the machine state typing rule of
\citet{chang2017}, which derives typing contexts from runtime environments;

\item \emph{non-well-founded} typing rules that entangle syntax and semantics by
having the typing judgment refer to itself in negative position, as in
\citet{flanagan2006}, \citet{rondon2008}, and \citet{vekris2015}; and

\item techniques to circumvent that non-well-foundedness by bootstrapping the
syntax through a denotational semantics of contracts for the simply-typed lambda
calculus \citep{knowles2010} or defining an infinite tower of increasingly
precise type systems \citep{chugh2012}.
\end{itemize}

We believe that our semantic soundness proof is not only conceptually cleaner
than the above but is also much easier to extend, as programming language
researchers have demonstrated time and again that step-indexed logical relations
are a robust technique for establishing soundness results \citep{timany2024}.
Moreover, as discussed in \cref{sec:background}, our work fixes ill-scoped
treatments of variables in type expressions found in many of the above papers
\citep{tobin-hochstadt2010,bonnaire-sergeant2016,fu2021,kent2016,chang2017}.

\subsection{Defining the Logical Relations}
\label{sec:logical-relations}

We mutually define $\mathbb{N}$-indexed relations for
closed runtime expressions of a given type result $\expinterp{\R}{\unexpr}$,
closed values of a given type result $\valinterpres{\R}{v}$,
closed values of a given type $\valinterp{\tyA}{v}$, and
the truth of propositions $\propinterp{\prop}$.
Each of these relations is indexed by a \emph{value environment} $\env$
assigning values to variables, which is used to interpret the free variables in
types, propositions, objects, and type results; these environments are the
subject of the remaining relation $\ctxinterp{\ctx}{\env}$.
\[
  \textit{Value Environments} \quad \env \sdef \empenv \mid \extenv{\env}{x}{v}.
\]

These relations capture the notion that a particular expression satisfies the
obvious semantic property associated to each typing judgment. The step-index $n$
is simply introduced to account for non-termination: that property either holds
within $n$ steps, or is automatically considered to hold if the step budget is
exhausted. For example, $\expinterp{\res[\Nat][\TT][\TT][\topobj]}{\unexpr}$
holds when $\unexpr$ either evaluates to a number in fewer than $n$ steps, or continues reducing for at least $n$ steps.

\paragraph{Contexts}

We begin by defining the logical relation for value environments satisfying a
given context. The relation $\ctxinterp{\ctx}{\env}[n]$ expresses that $\env$
assigns a value to every variable $\bindvar$ in the context $\Gamma$, that each
of these values satisfies the semantic property of the corresponding type, and
that this choice of values satisfies all the propositions in $\Gamma$.
\begin{align*}
  \ctxinterp{\ctx}{\env}[0] \mdef& \dom{\ctx} \subseteq \dom{\env} \\
  \ctxinterp{\empctx}{\env}[n+1] \mdef& \metatrue \\
  \ctxinterp{\ctx \exta \bindvar}{\env}[n+1] \mdef& \ctxinterp{\ctx}{\env}[n+1] \metaand \valinterp{\tyA}{\env(x)}[n+1] \\
  \ctxinterp{\ctx \exta \prop}{\env}[n+1] \mdef& \ctxinterp{\ctx}{\env}[n+1] \metaand \propinterp{\prop}[n+1]
\end{align*}

As a technical matter we have incorporated the first condition into the $n=0$
clause only, but as a straightforward consequence one can show that for any $n$,
$\ctxinterp{\ctx}{\env}[n]$ implies $\dom{\ctx} \subseteq \dom{\env}$.

\paragraph{Propositions}

The logical relation for propositions is defined in \Cref{fig:propinterp}. The
relation $\propinterp{\prop}[n]$ states that the proposition $\prop$,
\emph{which may contain free variables}, holds at step count $n$ when each
variable is interpreted according to the environment $\env$. Our soundness
theorem (\cref{lem:prfsound}) will then state that if $\proves{\ctx}{\prop}$ is
derivable, then for all $\env$ satisfying $\allsat$, $\propinterp{\prop}$ holds.

\showpropinterpfig

Most of \Cref{fig:propinterp} is straightforward, but a few remarks are in
order. First, every proposition is said to hold at $n=0$. The positive typing
proposition $\typeprop{x}{\tyA}$ is said to hold at $n+1$ if $\env(x)$ satisfies
the logical relation of $\tyA$ at the same $n+1$, but the negative typing
proposition $\nottypeprop{x}{\tyA}$ is said to hold at $n+1$ if $\env(x)$
\emph{does not} satisfy the logical relation of $\tyA$ at any $1\leq m \leq
n+1$. This condition, which can be read off of the Kripke interpretation of
negation in synthetic accounts of step-indexed logical relations
\citep{dreyer2009}, is needed for monotonicity/downward-closure of the logical
relations (\cref{lem:srlmono}).

Finally, the equality proposition $\eqprop{x}{y}$ holds when $\env(x)$ and
$\env(y)$ are \emph{exactly the same} (i.e., $\alpha$-equivalent)---not
contextually equivalent or any other such condition---because the equality
proposition is meant to capture variable aliasing.

\paragraph{Types}

The logical relation for values of a given type, $\valinterp{\tyA}{v}$, is
defined in \Cref{fig:valinterptype}. Note that $v$ must be closed, but $\tyA$
may contain free variables (in the then- and else-propositions or object of function
types) which are interpreted according to the environment $\env$.

\showvaltyinterpfig

All the definitions here are standard: every relation holds at $n=0$, the
relation at $\uniontype{\tyA}{\tyB}$ is the union of the relations at $\tyA$ and
$\tyB$, and the relation at $\funty{x}{\tyA}{\R}$ holds if $v$ is a
$\lambda$-term or a primitive operation, and whenever it is applied to a value
in the relation for $\tyA$ at any smaller or equal step count, the result is in
the relation for $\R$ at that step count in an extended environment $\envw$ to
account for free occurrences of $x$ in $\R$.

\paragraph{Type results}

Finally, we define the logical relations for runtime expressions and values of a
given type result, $\expinterp{\R}{\unexpr}$ and $\valinterpres{\R}{v}$. The
relation $\expinterp{\R}{\unexpr}$ lifts the value relation to runtime
expressions via the small-step operational semantics:
\begin{align*}
  \expinterp{\semR}{\unexpr}[0] \mdef& \metatrue \\
  \expinterp{\semR}{\unexpr_0}[n+1] \mdef&
    \expinterp{\semR}{\unexpr_1}[n] \quad\metaif \eestepsto{\unexpr_0}{\unexpr_1} \\
  \expinterp{\semR}{v}[n+1] \mdef& \valinterp{\semR}{v}[n+1]
\end{align*}

The value relation $\valinterp{\R}{v}$ is more complicated, but simply records
the intended meaning of the four components of type results $\R =
\res[\tyA][\posprop][\negprop][o]$ as already outlined in
\cref{ssec:formalism:type-system}. As in the other logical relations, all four
of these components may contain free variables, which are interpreted according
to the environment $\env$.
\begin{align*}
  \valinterpres{\res[\tyA][\posprop][\negprop][o]}{v}[0] \mdef& \metatrue \\
  \valinterpres{\res[\tyA][\posprop][\negprop][o]}{v}[n+1] \mdef&
    \valinterp{\tyA}{v}[n+1] \metaand \\
    &\left(
      (\propertyCfst{v}{}{\env}{n+1}) \metaor
      (\propertyCsnd{v}{}{\env}{n+1})
     \right) \metaand \\
    & \left(\propertyA{\obj}{v}\right)
\end{align*}

Specifically, $\valinterpres{\res[\tyA][\posprop][\negprop][o]}{v}[n+1]$ is
defined to mean that (1) $v$ is in the type relation at $\tyA$; (2) either $v$
is exactly $\false$ and the else-proposition $\negprop$ holds at $\env$, or $v$
is non-$\false$ and the then-proposition $\posprop$ holds at $\env$; and (3)
either the object $o$ is $\topobj$, or the object is a variable $y$ such that
$\env(y)$ is exactly $v$. The third possibility---that the object is
$\botobj$---is impossible for values, in exactly the same way that no value has
type $\botobj$.

\subsection{Proving Semantic Type Soundness} \label{sec:soundness}

We now sketch the key lemmas and theorems in our semantic type soundness proof.
Full paper proofs of these and other necessary statements are provided in the
accompanying supplementary material, as is a Lean formalization of the same
content.

The first step in our proof is not actually about logical relations at all; it
is a syntactic admissibility result about subtyping for the positive and
negative erasure operations alluded to in \cref{ssec:formalism:type-system},
stating that the positive erasure of a type, type result, object, or proposition
is at least as weak as it, and the negative erasure at least as strong.

\begin{lemma}[Subtyping for Erasure]\label[lemma]{lem:subtyping-erase}
Suppose $\wfctx$. Then:
  \begin{enumerate}
  \item If $\wftype{\tyA}$, then $\subtype{\ctx}{\negerase{\tyA}}{\tyA}$ and $\subtype{\ctx}{\tyA}{\poserase{\tyA}}$.
  \item If $\wfres{\R}$, then $\subtype{\ctx}{\negerase{\R}}{\R}$ and $\subtype{\ctx}{\R}{\poserase{\R}}$.
  \item If $\wfobj{\obj}$, then $\subtype{\ctx}{\negerase{\obj}}{\obj}$ and $\subtype{\ctx}{\obj}{\poserase{\obj}}$.
  \item If $\wfprop{\prop}$, then if $\proves{\ctx}{\negerase{\prop}}$ then $\proves{\ctx}{\prop}$, and if $\proves{\ctx}{\prop}$ then $\proves{\ctx}{\poserase{\prop}}$.
  \end{enumerate}
\end{lemma}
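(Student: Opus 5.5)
We prove all four parts simultaneously, for both polarities at once, by structural induction on the syntax of $\tyA$, $\R$, $\obj$ and $\prop$. The induction hypothesis quantifies over \emph{every} well-formed context $\ctx$ in which the subterm is well-formed. This is needed because the function-type case and the disjunction case extend the context.

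Alongside the induction we use two auxiliary facts, each proved by a routine induction first.
\begin{itemize}
\item Erasure and substitution preserve well-formedness: if $\wftype{\tyA}$ then $\wftype{\negerase{\tyA}}$ and $\wftype{\poserase{\tyA}}$, and similarly for $\R$, $\obj$ and $\prop$. The erased forms only remove occurrences of $x$ or replace them by $\TT$, $\FF$, $\topobj$ or $\botobj$. These facts discharge the presuppositions of every judgment we write.
\item Well-formedness depends only on $\dom{\ctx}$, so a judgment $\wfres[\ctx \exta \bindvar]{\R}$ transfers to $\ctx \exta \bindvar[y][\tyA']$ for any well-formed $\tyA'$.
\end{itemize}

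\emph{Types, objects and results.}
\begin{itemize}
\item For $\top$, $\bot$, $\true$, $\false$, $\Nat$ and $\Str$, erasure is the identity and \RULE{S-Refl} closes the case.
\item For $\uniontype{\tyA_1}{\tyA_2}$ we combine the two induction hypotheses with \RULE{S-UnionSub} and \RULE{S-UnionSuper1}/\RULE{S-UnionSuper2}.
\item For $\funty{y}{\tyA}{\R}$ with $y \neq x$ (binders are unique, so $y = x$ cannot arise), positive erasure is $\funty{y}{\negerase{\tyA}}{\poserase{\R}}$. We apply \RULE{S-Fun}. The domain premise $\subtype{\ctx}{\negerase{\tyA}}{\tyA}$ is the induction hypothesis on $\tyA$. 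The codomain premise $\subtype{\ctx \exta \bindvar[y][\negerase{\tyA}]}{\R}{\poserase{\R}}$ is the induction hypothesis on $\R$ in the extended context; this uses the domain-only transfer of well-formedness. The negative case is symmetric.
\item For objects, $\poserase{x} = \topobj$ uses \RULE{SO-Top} and $\negerase{x} = \botobj$ uses \RULE{SO-Bot}. For an unchanged object we need object reflexivity: \RULE{SO-Equiv} with \RULE{L-Refl} for variables, and \RULE{SO-Top}/\RULE{SO-Bot} for $\topobj$ and $\botobj$.
\item For results we use \RULE{SR-Result}. The type and object premises are the induction hypotheses. For the propositions, \RULE{L-Atom} gives $\proves{\ctx \extb \prop}{\prop}$, and part (4) in context $\ctx \extb \prop$ yields $\proves{\ctx \extb \prop}{\poserase{\prop}}$. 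Likewise \RULE{L-Atom} gives $\proves{\ctx \extb \negerase{\prop}}{\negerase{\prop}}$, and part (4) turns this into $\proves{\ctx \extb \negerase{\prop}}{\prop}$.
\end{itemize}

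\emph{Propositions.}
\begin{itemize}
\item $\TT$ and $\FF$ are unchanged by erasure, so these cases are trivial.
\item For atoms mentioning $x$, namely $\typeprop{x}{\tyB}$, $\nottypeprop{x}{\tyB}$, and $\eqprop{y}{z}$ with $x\in\{y,z\}$: positive erasure gives $\TT$, closed by \RULE{L-True}. Negative erasure gives $\FF$, so we conclude $\prop$ by \RULE{L-False}, using $\wfprop{\prop}$.
\item For $\typeprop{z}{\tyB}$ with $z \neq x$, we use \RULE{L-Sub} with $\subtype{\ctx}{\tyB}{\poserase{\tyB}}$, respectively $\subtype{\ctx}{\negerase{\tyB}}{\tyB}$.
\item For $\nottypeprop{z}{\tyB}$ the polarities flip. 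From $\nottypeprop{z}{\tyB}$, \RULE{L-SubNot} with $\subtype{\ctx}{\negerase{\tyB}}{\tyB}$ gives $\nottypeprop{z}{\negerase{\tyB}}$. From $\nottypeprop{z}{\poserase{\tyB}}$, \RULE{L-SubNot} with $\subtype{\ctx}{\tyB}{\poserase{\tyB}}$ gives $\nottypeprop{z}{\tyB}$.
\item Equalities not mentioning $x$ are unchanged.
\item For conjunction we use \RULE{L-AndE1}/\RULE{L-AndE2} followed by the induction hypotheses and \RULE{L-AndI}.
\item For disjunction we apply \RULE{L-OrE} to the hypothesis. In each branch context $\ctx \extb \prop_i$ we obtain $\prop_i$ by \RULE{L-Atom}, apply the induction hypothesis in that extended context, and finish with \RULE{L-OrI1}/\RULE{L-OrI2}. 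The well-formedness side conditions come from the preservation fact above.
\end{itemize}

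The main obstacle is the function-type case. There we need the induction hypothesis for $\R$ in a context whose binder for $y$ carries the \emph{erased} domain type rather than the original one. This is why the induction must be over syntax with the context universally quantified, and why the well-formedness-depends-only-on-domain lemma is needed. The disjunction case has the same shape, and the careful bookkeeping of well-formedness premises (\RULE{L-False}, \RULE{L-OrE}, \RULE{L-OrI1}/\RULE{L-OrI2}) is the other place where care is required.
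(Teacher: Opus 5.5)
Your proposal is correct and follows essentially the same route as the paper's proof. Both use simultaneous structural induction with the induction hypothesis applied in extended contexts. Both use the well-formedness weakening lemma (well-formedness depends only on the domain) to move $\R$ into $\ctx \cdot y{:}\negerase{\tyA}$ in the positive function case. Both handle disjunction with \textsc{L-Atom} plus the induction hypothesis inside the \textsc{L-OrE} branches. You also make explicit several side conditions the paper leaves implicit: that erasure preserves well-formedness, that the binder $y$ differs from $x$, and that \textsc{L-Atom} supplies the antecedent when part (4) is invoked in the \textsc{SR-Result} case.
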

\begin{proof}
  By simultaneous induction on the structure of $\tyA$, $\R$, $\obj$, and $\prop$.
\end{proof}

These eight properties must be established simultaneously because positive and
negative erasure for each of these grammatical categories refers to all of the
others. \Cref{lem:subtyping-erase} will become relevant in the \RULE{T-App} and
\RULE{T-Let} cases of our proof of the fundamental theorem of logical relations,
which are the two places in the type system that use object substitution.

Next, we establish standard key properties of step-indexed logical relations:
head expansion (\cref{lem:hdxp}), or closure under converse evaluation, head
reduction (\cref{lem:hdrd}), or closure under evaluation, and monotonicity
(\cref{lem:srlmono}), or downward-closure of step-indices.
\Cref{lem:hdxp,lem:hdrd} follow directly from the definition of the logical
relations for type results, but are used many times throughout later proofs.
\Cref{lem:srlmono} is straightforward but must be proven simultaneously for
every relation because they all refer to one another.

\begin{lemma}[Head Expansion]\label[lemma]{lem:hdxp}\leavevmode
  \begin{enumerate}
    \item
      If $\expinterp{\R}{\unexpr'}[m]$ and $\neestepsto{\unexpr}{\unexpr'}[k]$, then $\expinterp{\R}{\unexpr}[m+k]$.
    \item
      If $\valinterp{\R}{v}[m]$ and $\neestepsto{\unexpr}{v}[k]$, then $\expinterp{\R}{\unexpr}[m+k]$.
  \end{enumerate}
\end{lemma}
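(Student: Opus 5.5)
The plan is induction on the number of steps $k$, using only the defining clauses of $\expinterp{\R}{\cdot}$ and determinism of $\mathrel{\longmapsto}$. Part (2) then follows from part (1).

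For part (1), I induct on $k$, generalizing over $\unexpr$.
\begin{itemize}
\item If $k=0$, then $\unexpr=\unexpr'$ and $m+k=m$, so there is nothing to prove.
\item If $k=k'+1$, write the reduction sequence as $\unexpr \longmapsto \unexpr_1$ followed by $k'$ further steps from $\unexpr_1$ to $\unexpr'$. The induction hypothesis gives $\expinterp{\R}{\unexpr_1}[m+k']$. Since $m+k = (m+k')+1$, I unfold the successor clause of the expression relation at $\unexpr$.
\item The clause has two branches. The value branch cannot apply, because $\unexpr$ takes a step and values are irreducible. So the stepping branch applies. Determinism of the small-step semantics makes $\unexpr_1$ the unique successor, and the clause reduces exactly to $\expinterp{\R}{\unexpr_1}[m+k']$, which we have.
\end{itemize}

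The only facts needed beyond the definitions are two routine properties of the Felleisen-style semantics, which I would state and check first.
\begin{itemize}
\item \emph{Values do not step.} No value decomposes as $\ectx[r]$ with $r$ a redex.
\item \emph{Determinism.} This is the unique decomposition lemma for evaluation contexts. Each redex form ($\ap{\op}{v}$, $\beta$, the two $\mathbf{if}$ rules, $\mathbf{let}$) fixes its contractum, and $\execop{\op}{v}$ is a function.
\end{itemize}
Together these make the case split in the successor clause of $\expinterp{\R}{\cdot}$ well-defined and exclusive. They are what allow ``$\expinterp{\R}{\unexpr_0}[n+1] \mdef \expinterp{\R}{\unexpr_1}[n]$'' to be used as an iff. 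This is the one place needing care; the rest is bookkeeping of indices.

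For part (2), I first show $\expinterp{\R}{v}[m]$ from $\valinterp{\R}{v}[m]$, by cases on $m$.
\begin{itemize}
\item If $m=0$, it holds trivially.
\item If $m=m'+1$, the expression relation at a value unfolds to $\valinterp{\R}{v}[m'+1]$ by the value branch of the definition. No stepping branch competes, since values are irreducible.
\end{itemize}
Then I apply part (1) with $\unexpr'=v$.

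I expect no real obstacle. The subtle point is that $m+k$ must be placed so that each step of the reduction consumes exactly one index in the unfolding. Inducting on $k$ while peeling steps from the front of the sequence handles this directly.
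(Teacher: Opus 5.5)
Your proof is correct and takes the same approach as the paper, whose proof simply says the result follows by unfolding the definition of the expression relation. Your induction on $k$ is the bookkeeping that one-line argument leaves implicit: determinacy and irreducibility of values justify each unfolding, and part (2) reduces to part (1) via the value branch of the definition.
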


\begin{lemma}[Head Reduction]\label[lemma]{lem:hdrd}
  If $\expinterp{\R}{\unexpr}$, then either:
  \begin{enumerate}
  \item there exists an $\unexpr'$ such that $\neestepsto{\unexpr}{\unexpr'}[n]$
  (i.e., $\unexpr$ exhausts the step budget), or
  \item there exists a value $v$ such that $\neestepsto{\unexpr}{v}[m]$ for some
  $0\leq m<n$, and $\valinterp{\R}{v}[n-m]$.
  \end{enumerate}
\end{lemma}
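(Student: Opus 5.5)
The plan is to argue by induction on the step index $n$, generalizing over $\unexpr$. The only tools needed are the defining clauses of $\expinterp{\R}{\unexpr}$ and the determinism of the small-step semantics. Throughout, $\neestepsto{\unexpr}{\unexpr'}[k]$ denotes a reduction sequence of exactly $k$ steps.

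\emph{Base case $n=0$.} Here alternative (1) holds trivially: take $\unexpr' = \unexpr$ with the empty reduction sequence $\neestepsto{\unexpr}{\unexpr}[0]$. No value alternative is needed, and indeed none is available, since the range $0\leq m<0$ is empty.

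\emph{Inductive case $n+1$.} Assume $\expinterp{\R}{\unexpr}[n+1]$ and unfold the definition. By the shape of the definition, a closed runtime expression at a positive index is either a value or takes a step; the clause for an expression that is neither is vacuous or false, so such a case cannot arise.
\begin{itemize}
\item If $\unexpr = v$ is a value, the definition gives $\valinterp{\R}{v}[n+1]$. Alternative (2) then holds with $m = 0$, since $\neestepsto{v}{v}[0]$ and $(n+1)-0 = n+1$.
\item If $\eestepsto{\unexpr}{\unexpr_1}$, the definition gives $\expinterp{\R}{\unexpr_1}[n]$, and we apply the induction hypothesis to $\unexpr_1$.
\begin{itemize}
\item In the first subcase, $\neestepsto{\unexpr_1}{\unexpr'}[n]$. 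Prepending the step from $\unexpr$ gives $\neestepsto{\unexpr}{\unexpr'}[n+1]$, which is alternative (1).
\item In the second subcase, there is a value $v$ with $\neestepsto{\unexpr_1}{v}[m']$, where $m'<n$, and $\valinterp{\R}{v}[n-m']$. Prepending the step gives $\neestepsto{\unexpr}{v}[m'+1]$ with $m'+1 < n+1$. Moreover $(n+1)-(m'+1) = n-m'$, so alternative (2) holds with $m = m'+1$.
\end{itemize}
\end{itemize}

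The only point requiring care is that the two clauses of the definition of $\expinterp{\R}{\unexpr_0}[n+1]$ do not overlap: values do not step under $\eestepsto{}{}$, since no evaluation context decomposes a value into a redex. Consequently the case split is exhaustive and unambiguous. Determinism of reduction is not even strictly needed, because we only extend the particular sequence the definition hands us. I expect no real obstacle; the only bookkeeping is the index arithmetic $(n+1)-(m'+1)=n-m'$ and the strict bound $m<n$.
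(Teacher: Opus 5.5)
Your proof is correct and is the paper's argument made explicit: the paper simply says the result follows by unfolding the definition of $\expinterp{\R}{\unexpr}$, and your induction on $n$ (generalizing over $\unexpr$) is exactly that repeated unfolding. The one wording to tighten is that for a stuck non-value at a positive index the relation is \emph{false} because no clause applies, not ``vacuous'' in the sense of trivially true; that falsity is what excludes the case, and the paper's soundness theorem relies on the same reading.
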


\begin{lemma}[Monotonicity]\label[lemma]{lem:srlmono}
  For all $m\leq n$,
  \begin{enumerate}
    \item \label{lem:srlmono-ctx} If $\ctxinterp{\ctx}{\env}$, then $\ctxinterp{\ctx}{\env}[m]$.
    \item \label{lem:srlmono-prop} If $\propinterp{\prop}$, then $\propinterp{\prop}[m]$.
    \item \label{lem:srlmono-type} If $\valinterp{\tyA}{v}$, then $\valinterp{\tyA}{v}[m]$.
    \item \label{lem:srlmono-exp} If $\expinterp{\R}{\unexpr}$, then $\expinterp{\R}{\unexpr}[m]$.
    \item \label{lem:srlmono-res} If $\valinterp{\R}{v}$, then $\valinterp{\R}{v}[m]$.
  \end{enumerate}
\end{lemma}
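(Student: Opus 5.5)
We prove all five clauses simultaneously by strong induction on $n$. Within a fixed $n$ we use an inner structural induction: on $\ctx$ for clause (1), on $\prop$ for (2), and on $\tyA$ for (3). For (4) and (5) we unfold the definitions directly, appealing to (3), (2), and the outer induction hypothesis. The case $m = 0$ is trivial in every clause: each relation at index $0$ is $\metatrue$, except the context relation, where we need $\dom{\ctx}\subseteq\dom{\env}$. That follows from $\ctxinterp{\ctx}{\env}$ by a small auxiliary induction on $\ctx$, the remark already made after the definition of the context relation. So we may assume $m = m'+1$ and $n = n'+1$ with $m'\le n'$.

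For types (clause 3) the base types are index-independent, and the union case follows from the inner hypothesis. For $\top$ and $\bot$ there is nothing to do. For a function type $\funty{x}{\tyA}{\R}$, the syntactic shape condition on $v$ does not mention the index. The quantified condition ranges over all $k$ with $0\le k\le n'+1$, so restricting to $0\le k\le m'+1$ keeps a subset of the instances. Hence the function case is immediate, with no recursive appeal at all; this is exactly why the definition quantifies over all smaller $k$.

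For propositions (clause 2), $\TT$, $\FF$ and equality are index-independent. Conjunction and disjunction use the inner hypothesis, and $\typeprop{x}{\tyA}$ uses clause (3) at $\env(x)$. The negative proposition $\nottypeprop{x}{\tyA}$ holds at $n'+1$ when $\env(x)$ fails the type relation at every $1\le j\le n'+1$, and this restricts to every $1\le j\le m'+1$. Again this is immediate, and it is precisely the Kripke-style definition noted in the text. Clause (1) then follows by induction on $\ctx$, using clauses (2) and (3) at the new entry.

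For value results (clause 5), at $m'+1$ we need the type component, which is clause (3). We also need the then- or else-disjunct: the condition $v=\false$ or $v\neq\false$ is index-free, and the corresponding proposition is handled by clause (2). The object condition is index-free. For expressions (clause 4), if $\unexpr$ is a value we reduce to clause (5). If $\unexpr\longrightarrow\unexpr_1$, then the hypothesis gives $\expinterp{\R}{\unexpr_1}[n']$, and the outer induction hypothesis at $n'<n$ yields $\expinterp{\R}{\unexpr_1}[m']$, whence $\expinterp{\R}{\unexpr}[m'+1]$. Irreducible non-values satisfy the relation at positive index only if the definition makes them vacuously false, so that case cannot arise from the hypothesis. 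The main obstacle I expect is organizing the mutual dependency so that the function case never needs the expression clause at the same index. That is handled by observing that the function case is settled purely by restricting the range of $k$. The only genuinely inductive appeal to a smaller index is in the expression clause, which justifies the outer strong induction on $n$.
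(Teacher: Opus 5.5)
Your proof is correct and follows essentially the same route as the paper: dispatch $m=0$, then a simultaneous induction on $n$, with the function-type and negative-proposition cases settled by restricting the quantified index range, and the step case of the expression clause as the only appeal to a smaller index. Two of your refinements make precise points the paper glosses over: your explicit inner structural inductions, which the paper loosely calls the ``mutual IH'', and your use of the domain-inclusion remark for the context clause at $m=0$, which the paper calls trivial.
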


The next step is to show that the rules of the proof and subtyping judgments are
validated by the logical relations. Recall from
\cref{ssec:formalism:proof-system} that these judgments are all defined
mutually, but are \emph{not} mutual with the main typing judgment itself; as a
result, we must establish these results simultaneously, but can do so before
starting our proof of the fundamental theorem itself. The first two clauses
state the correctness of the subtyping judgments; the third and fourth state the
correctness of the $\mathsf{restrict}$ and $\mathsf{remove}$ judgments described
in \cref{ssec:formalism:proof-system}; the final clause states the correctness
of the proof system, as in Lemma 1 of \citet{tobin-hochstadt2010}.

\begin{theorem}[Correctness of the Proof and Subtyping Systems]\label{lem:prfsound}
  Suppose $\allsat$. Then:
  \begin{enumerate}
  \item
    If $\unexpr$ is closed, $\subtype{\ctx}{\R_a}{\R_b}$, and $\expinterp{\R_a}{\unexpr}$, then $\expinterp{\R_b}{\unexpr}$.
  \item If $\subtype{\ctx}{\tyA}{\tyB}$ and $\valinterp{\tyA}{v}$, then $\valinterp{\tyB}{v}$.
  \item If $\restrictrel{\ctx}{\tyA}{\tyB}{\tyC}$, $ \valinterp{\tyA}{v}$, and $\valinterp{\tyB}{v}$, then $\valinterp{\tyC}{v}$.
  \item If $\removerel{\ctx}{\tyA}{\tyB}{\tyC}$, $ \valinterp{\tyA}{v}$, and $\notvalinterp{\tyB}{v}$, then $\valinterp{\tyC}{v}$
  \item
    if $\proves{\ctx}{\prop}$, then $\propinterp{\prop}$.
  \end{enumerate}
\end{theorem}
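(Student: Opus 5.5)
We prove all five clauses simultaneously by induction on the derivation of the subtyping, object subtyping, $\mathsf{restrict}$, $\mathsf{remove}$ and provability judgments, which are mutually defined. The induction hypothesis is quantified universally over the step index $n$, the environment $\env$, and (for clauses 1 and 2) the closed expression $\unexpr$ or value $v$. This generality is essential, because several cases must invoke it at a smaller index or at an extended environment.

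The $n=0$ case is trivial for every relation, since each is defined to hold at $0$. So we assume $n \geq 1$ throughout. In particular $\propinterp{\FF}$ is then false, which makes \RULE{L-False}, \RULE{S-Expl}, \RULE{SO-Expl} and \RULE{L-Bot} vacuous: the induction hypothesis on $\proves{\ctx}{\FF}$ (respectively on $\proves{\ctx}{\typeprop{x}{\bot}}$) yields a contradiction.

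Before the induction we need two auxiliary facts about environments.
\begin{itemize}
\item \emph{Weakening.} If $x \notin \dom{\ctx}$ (guaranteed by unique binder names), then extending $\env$ with $x$ does not change the interpretation of anything well-formed in $\ctx$.
\item \emph{Renaming.} The interpretation of $\substprop{\prop}{z}{x}$ at $\env$ coincides with that of $\prop$ at $\extenv{\env}{z}{\env(x)}$. We prove this simultaneously for types, results and objects, by induction on syntax.
\end{itemize}
Renaming handles \RULE{L-Transp}. From $\propinterp{\eqprop{x}{y}}$ we get $\env(x) =_\alpha \env(y)$. The interpretations depend only on $\alpha$-classes of values, so the two extended environments agree.

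Most remaining proof-system cases are immediate from the definitions.
\begin{itemize}
\item \RULE{L-Atom} and \RULE{L-Ann}: unfold $\ctxinterp{\ctx}{\env}$.
\item \RULE{L-AndI} and \RULE{L-AndE}: immediate.
\item \RULE{L-OrE}: $\ctxinterp{\ctx\extb\prop_i}{\env}$ is obtained from the disjunct that holds, and then we apply the induction hypothesis to the branch.
\item \RULE{L-Sub}: clause 2.
\item \RULE{L-SubNot}: we must show $\env(x)$ fails $\tyB$ at every $1\le m\le n$. Monotonicity (\cref{lem:srlmono}) gives $\ctxinterp{\ctx}{\env}[m]$, and the induction hypothesis at index $m$ transfers membership in $\tyB$ to membership in $\tyA$, contradicting the hypothesis.
\item \RULE{L-Restrict} and \RULE{L-Remove}: clauses 3 and 4.
\end{itemize}
For $\mathsf{restrict}$, \RULE{Restrict-Base} and \RULE{Restrict-BaseFun} are contradictory because the base types denote disjoint sets of values and function types contain only abstractions and primitives. \RULE{Restrict-Sub} uses clause 2, and \RULE{Restrict-Other}, \RULE{Restrict-Sym} and \RULE{Restrict-Union} are direct. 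For \RULE{Remove-Sub}, clause 2 puts $v$ in $\tyB$ at $n$, contradicting $\notvalinterp{\tyB}{v}$ at $m=n$. The subtyping rules \RULE{S-Top}, \RULE{S-Bot}, \RULE{S-Refl}, \RULE{S-Trans} and the union rules are routine. Object subtyping is used only inside clause 1: \RULE{SO-Equiv} uses $\env(x) =_\alpha \env(y)$, and \RULE{SO-Bot} is vacuous because no value satisfies the object $\botobj$.

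For clause 1 (\RULE{SR-Result}), we first apply Head Reduction (\cref{lem:hdrd}) to $\expinterp{\R_a}{\unexpr}$. If $\unexpr$ exhausts its budget, $\expinterp{\R_b}{\unexpr}$ holds trivially. Otherwise $\unexpr$ reaches a value $v$ in $m<n$ steps with $\valinterp{\R_a}{v}[n-m]$. Monotonicity gives $\ctxinterp{\ctx}{\env}[n-m]$, and the induction hypotheses at index $n-m$ transfer the type, proposition and object components to $\R_b$. For the propositions, $\ctxinterp{\ctx\extb\posprop[1]}{\env}[n-m]$ is assembled from the context and the proposition that held, and then we apply the hypothesis for $\proves{\ctx\extb\posprop[1]}{\posprop[2]}$. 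Head Expansion (\cref{lem:hdxp}) then lifts $\valinterp{\R_b}{v}[n-m]$ back to $\expinterp{\R_b}{\unexpr}$.

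The main obstacle is \RULE{S-Fun}. Given $v$ in $\funty{x}{\tyB}{\R_b}$ at $n$, fix $k \le n$ and $w$ with $\valinterp{\tyA}{w}[k]$.
\begin{itemize}
\item Monotonicity gives $\ctxinterp{\ctx}{\env}[k]$.
\item The induction hypothesis on $\subtype{\ctx}{\tyA}{\tyB}$ at index $k$ gives $\valinterp{\tyB}{w}[k]$.
\item The function property of $v$ gives $\expinterp{\R_b}{\ap{v}{w}}[k][\extenv{\env}{x}{w}]$.
\item By weakening (since $x\notin\dom{\ctx}$), $\ctxinterp{\ctxa}{\extenv{\env}{x}{w}}[k]$ holds.
\item Clause 1 for $\subtype{\ctxa}{\R_b}{\R_a}$, at index $k$ and the extended environment, yields the required $\expinterp{\R_a}{\ap{v}{w}}[k][\extenv{\env}{x}{w}]$.
\end{itemize}
This case is exactly why the induction hypothesis must be generalized over $n$ and $\env$. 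I expect getting the index bookkeeping and the weakening lemma right here, together with the renaming lemma for \RULE{L-Transp} under binders, to be the delicate part of the proof.
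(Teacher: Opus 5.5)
Your proposal is correct and follows the same route as the paper. Both use simultaneous rule induction with the hypothesis generalized over index and environment, Head Reduction/Expansion plus monotonicity for \RULE{SR-Result}, monotonicity plus environment weakening for \RULE{S-Fun}, monotonicity for \RULE{L-SubNot}, and a renaming lemma for \RULE{L-Transp}. Your version is slightly more careful than the paper's proof in three places: you dispatch \RULE{S-Expl} and \RULE{SO-Expl} (which the paper's case analysis omits) using the fact that $\FF$ is false at $n\geq 1$; you state weakening with the required freshness condition $x\notin\dom{\ctx}$; and your renaming lemma, $\propinterp{\substprop{\prop}{z}{x}} \Leftrightarrow \propinterp{\prop}[n][\extenv{\env}{z}{\env(x)}]$, is exactly the form \RULE{L-Transp} needs. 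The one slip is that \RULE{Restrict-Sub} does not use clause 2: its output type is $\tyA$ itself, so the conclusion is already a hypothesis.
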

\begin{proof}
  By simultaneous rule induction.
\end{proof}

When combined with \cref{lem:subtyping-erase}, \cref{lem:prfsound} implies that
the logical relations are \emph{closed under object substitutions
$\osubst{}[x][\obj]$} in the sense that whenever $\allsat$ and
$\valinterpres{\R}{v}$, for every object $\obj$ that is either $\topobj$ or a
variable $y$ with $\env(x) \alphaeq \env(y)$, we have
$\valinterpres{\osubst{\R}[x][\obj]}{v}$. This is in essence the semantic
``compatibility'' or substitution property that enables us to carry out the
\RULE{T-App} and \RULE{T-Let} cases of the fundamental theorem.

\begin{theorem}[Fundamental Theorem] \label{thm:fundamental}
  If $\typeof{\ctx}{\expr}{\synR}$ and $\allsat$, then $\expinterp{\R}{\psubexpr{\eraseann{\expr}}[\env]}$.
\end{theorem}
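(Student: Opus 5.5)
We prove the statement by induction on the derivation of $\typeof{\ctx}{\expr}{\R}$, quantifying over all step indices $n$ and all environments $\env$ with $\allsat$. Each case reduces, via Head Expansion (\cref{lem:hdxp}), Head Reduction (\cref{lem:hdrd}) and Monotonicity (\cref{lem:srlmono}), to a statement about values. Before starting we fix a few routine facts.
\begin{itemize}
\item $\ctxinterp{\ctx}{\env}$ implies $\dom{\ctx}\subseteq\dom{\env}$, so $\psubexpr{\eraseann{\expr}}[\env]$ is closed.
\item Substitution into evaluation contexts commutes with stepping.
\item The relations depend only on the values $\env$ assigns to free variables. This lets us extend $\env$ with fresh binders, which is safe because every binder has a unique name.
\end{itemize}

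\textbf{Easy cases.}
\begin{itemize}
\item \RULE{T-BaseVal} and \RULE{T-PrimOp} are direct checks. For each primitive we unfold $\execop{\op}{v}$ against $\typeofop{\op}$ and apply \cref{lem:hdxp} for the single step.
\item \RULE{T-Var} follows from \cref{lem:prfsound}(5) applied to $\proves{\ctx}{\typeprop{x}{\tyA}}$. The propositions $\nottypeprop{x}{\false}$ and $\typeprop{x}{\false}$ are decided by whether $\env(x)=\false$; we use $m\le n+1$ in the negative clause, since the relation at $\false$ is index-independent. The object condition holds trivially.
\item \RULE{T-Subsume} is exactly \cref{lem:prfsound}(1).
\item \RULE{T-Abs}: the value $\unabs[x][\psubexpr{\eraseann{\expr_b}}[\env]]$ must satisfy the function clause. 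Given $k\le n+1$ and $w$ with $\valinterp{\tyA}{w}[k]$, we use Monotonicity to obtain $\ctxinterp{\ctxa}{\envw}[k]$. The induction hypothesis then yields the body at index $k$, and one $\beta$-step together with \cref{lem:hdxp}, after downgrading by one, gives the application at $k$.
\item \RULE{T-If}: use \cref{lem:hdrd} on the condition. If it exhausts the budget, we are done. Otherwise it reaches a value $v$ in $m$ steps. If $v\neq\false$, the result relation gives $\propinterp{\posprop[c]}[n-m]$, hence $\ctxinterp{\ctx\extb\posprop[c]}{\env}[n-m]$ by Monotonicity. The induction hypothesis for $e_2$ and \cref{lem:hdxp} then close the case. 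The $\false$ branch is symmetric.
\end{itemize}

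\textbf{Key cases: \RULE{T-App} and \RULE{T-Let}.} Evaluate $e_1$ and then $e_2$ with \cref{lem:hdrd}, obtaining values $v_1$ and $v_2$ with $\valinterp{\appfunres}{v_1}$ and $\valinterp{\appargres}{v_2}$ at reduced indices. The function clause of $v_1$ gives $\expinterp{\R}{\ap{v_1}{v_2}}[k][\extenv{\env}{x}{v_2}]$. To reach $\osubst{\R}[x][\obj_2]$ in $\env$, we prove a semantic substitution lemma:
\begin{quote}
if $\allsat$, $\valinterpres{\R}{v}[k][\extenv{\env}{x}{w}]$, and either $\obj=\topobj$ or $\obj=y$ with $\env(y)=w$, then $\valinterpres{\osubst{\R}[x][\obj]}{v}[k][\env]$ (and dually at the expression level).
\end{quote}
The variable case is a renaming lemma, valid because $\env(y)=w$. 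The $\topobj$ case goes as follows. Work in the context $\ctx\exta\bindvar[x][\tyB]$ satisfied by $\extenv{\env}{x}{w}$. \Cref{lem:subtyping-erase} gives $\subtype{}{\R}{\poserase{\R}}$, so \cref{lem:prfsound}(1) transports the relation to $\poserase{\R}$. Since $\poserase{\R}$ does not mention $x$, we can then drop the binding from the environment.

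Applications of primitives need a separate direct check: $\execop{\op}{v}$ must satisfy $\typeofop{\op}$. For $\notop$ and the predicates, the propositions about $x$ are exactly what the operation computes.

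\RULE{T-Let} is analogous. We evaluate $e$ to $v$ and show $\ctxinterp{\ctx\exta\bindvar\exta\prop_1\exta\prop_2}{\extenv{\env}{x}{v}}$:
\begin{itemize}
\item $\prop_1$ follows from the object clause.
\item $\prop_2$ follows from the proposition clause of $v$'s result, using the fact that $x$ is fresh in $\posprop,\negprop$.
\end{itemize}
We then step by E-Let, apply the induction hypothesis, and finish with the same substitution lemma.

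\textbf{Main obstacle.} I expect the substitution lemma in the erasure case to be the hardest step. It has three sources of difficulty:
\begin{itemize}
\item We must set up a well-formed context containing $x$ so that \cref{lem:subtyping-erase} and \cref{lem:prfsound} apply.
\item We must justify that environment strengthening preserves all the relations when $x$ is not free. This requires a small simultaneous induction on the relations.
\item Step indices must be threaded consistently, since the step counts consumed during evaluation are subtracted before Monotonicity is applied.
\end{itemize}
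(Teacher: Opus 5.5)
Your proposal is correct and takes essentially the same route as the paper. Both proceed by rule induction using head reduction/expansion and monotonicity, use \cref{lem:prfsound} for \RULE{T-Var} and \RULE{T-Subsume}, and handle \RULE{T-App} and \RULE{T-Let} with a semantic object-substitution lemma. That lemma is proved by renaming in the variable case, and in the $\topobj$ case by \cref{lem:subtyping-erase} plus \cref{lem:prfsound} in a context $\ctx\exta\bindvar[x][\tyB]$ satisfied by the extended environment, followed by strengthening to drop $x$.

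The differences are cosmetic. You apply erasure subtyping to whole type results, whereas the paper works component-wise. You avoid the paper's second head reduction in \RULE{T-If}. Your separate check for primitive applications in \RULE{T-App} is unnecessary, because the function clause of $v_1$, established in \RULE{T-PrimOp}, already covers primitives.
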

\begin{proof}
  By rule induction on $\typeof{\ctx}{e}{\synR}$, using the aforementioned lemmas.
\end{proof}

As usual, semantic type soundness follows more or less directly from the
fundamental theorem of logical relations, once we expand a few definitions.

\begin{theorem}[Semantic Type Soundness]\label{thm:type-soundness}
  If $\typeof{\empctx}{\expr}{\R}$, then either $\eraseann{\expr}$ diverges or
  $\neestepsto{\eraseann{\expr}}{v}$ and $\valinterp{\R}{v}[n][\empenv]$ for all $n$.
\end{theorem}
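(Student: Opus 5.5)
Apply \cref{thm:fundamental} with $\ctx = \empctx$ and $\env = \empenv$. First, $\ctxinterp{\empctx}{\empenv}[n]$ holds for every $n$. At $n=0$ it is $\dom{\empctx}=\varnothing\subseteq\dom{\empenv}$, and at $n+1$ it holds by definition. Moreover $\psubexpr{\eraseann{\expr}}[\empenv] = \eraseann{\expr}$, since substitution by the empty environment is the identity. So the Fundamental Theorem yields $\expinterp{\R}{\eraseann{\expr}}[n][\empenv]$ for every $n$.

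Next, I will case split on whether $\eraseann{\expr}$ reaches a value. The operational semantics is deterministic, so $\eraseann{\expr}$ has a unique maximal reduction sequence.

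If the sequence is infinite, $\eraseann{\expr}$ diverges and we are done.

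Otherwise it is finite, ending after $m$ steps in some $\unexpr'$ that cannot step. I first show $\unexpr'$ is a value. Take $n = m+1$ and apply \cref{lem:hdrd} to $\expinterp{\R}{\eraseann{\expr}}[m+1][\empenv]$.
\begin{itemize}
\item Alternative (1) demands a reduction sequence of length $m+1$. By determinism this would extend the unique maximal sequence of length $m$, which is impossible.
\item So alternative (2) holds: there is a value $v$ with $\neestepsto{\eraseann{\expr}}{v}[m']$ for some $m' < m+1$. Values do not step, so by determinism $v=\unexpr'$ and $m'=m$.
\end{itemize}

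It remains to show $\valinterp{\R}{v}[k][\empenv]$ for all $k$. Fix $k$ and apply \cref{lem:hdrd} at index $n = m+k$. Alternative (1) is again excluded once $m+k > m$; when $k=0$ the value relation holds trivially anyway. Alternative (2) gives a value reached in $m'' < m+k$ steps, which by determinism is $v$ with $m''=m$, together with $\valinterp{\R}{v}[(m+k)-m][\empenv] = \valinterp{\R}{v}[k][\empenv]$. This is exactly the claimed statement.

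The main obstacle is this bookkeeping: choosing step indices large enough that the budget-exhaustion alternative of \cref{lem:hdrd} is ruled out, and using determinism of $\estepsto{}{}$ to identify the reached value. Determinism is immediate from unique decomposition into evaluation contexts. The remainder is direct unfolding of the definitions.
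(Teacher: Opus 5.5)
Your proposal is correct and follows essentially the same route as the paper. Both instantiate the Fundamental Theorem at the empty context and environment, use head reduction at index $m+1$ together with determinacy to rule out a stuck state, and apply head reduction at index $m+k$ to recover the value relation at every index $k$. Your separate handling of $k=0$, where the budget-exhaustion alternative cannot be excluded but the relation holds trivially, tidies up a point the paper's proof passes over silently.
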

\begin{proof}
  By \Cref{thm:fundamental} with $\ctx = \empctx$ and $\env = \empenv$, we have
  $\expinterp{\R}{\eraseann{\expr}}[n][\empenv]$ for all $n$.

  Terms must either diverge, evaluate to a value, or get stuck (i.e., evaluate to an irreducible
  non-value), so we begin by showing that the third possibility cannot occur for well-typed terms.
  Suppose that $\neestepsto{\eraseann{\expr}}{\unexpr'}[k]$ for some irreducible non-value
  $\unexpr'$. By head reduction (\Cref{lem:hdrd}) and $\expinterp{\R}{\eraseann{\expr}}[k+1]$, we
  know that either (1) there exists $\unexpr''$ such that
  $\neestepsto{\eraseann{\expr}}{\unexpr''}[k+1]$, or (2) there exists a value $v$ such that
  $\neestepsto{\eraseann{\expr}}{v}[m]$ for $m<k+1$. By determinacy, the first contradicts the
  hypothesis that $\unexpr'$ is irreducible, and the latter contradicts the irreducibility of
  values.

  It remains only to show that if $\neestepsto{\eraseann{\expr}}{v}[k]$, then
  $\valinterp{\R}{v}[n][\empenv]$ for all $n$. By determinacy and head reduction applied to
  $\expinterp{\R}{\eraseann{\expr}}[n+k][\empenv]$, we obtain $\valinterp{\R}{v}[n+k-k][\empenv]$,
  completing the proof.
\end{proof}

\begin{corollary}[Soundness for Booleans]\label[corollary]{cor:bool}
  If $\typeof{\empctx}{\expr}{\res[\Bool][\posprop][\negprop][\obj]}$ then either $\eraseann{\expr}$
  diverges or $\neestepsto{\eraseann{\expr}}{\true}$ or
  $\neestepsto{\eraseann{\expr}}{\false}$.
\end{corollary}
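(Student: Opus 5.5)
We derive the corollary from \Cref{thm:type-soundness} by unfolding the logical relation at the type result $\res[\Bool][\posprop][\negprop][\obj]$ and then at the union type $\Bool = \uniontype{\true}{\false}$. Applying \Cref{thm:type-soundness} to the derivation $\typeof{\empctx}{\expr}{\res[\Bool][\posprop][\negprop][\obj]}$ yields two cases. If $\eraseann{\expr}$ diverges, we are done. Otherwise $\neestepsto{\eraseann{\expr}}{v}$ for some value $v$, and $\valinterp{\res[\Bool][\posprop][\negprop][\obj]}{v}[n][\empenv]$ holds for every $n$. It then remains to show that $v$ is literally $\true$ or $\false$.

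For this, we instantiate $n = 1$, since at index $0$ every relation is trivially true and carries no information. By the definition of the value relation for type results at a successor index, we obtain in particular $\valinterp{\Bool}{v}[1][\empenv]$. The other two conjuncts, about the propositions and the object, are simply discarded. Unfolding the definition of the type relation for union types at index $1$ gives $\valinterp{\true}{v}[1][\empenv]$ or $\valinterp{\false}{v}[1][\empenv]$. By the clauses for the singleton types $\true$ and $\false$ at a successor index, these mean $v \veq \true$ or $v \veq \false$, respectively. Substituting back into $\neestepsto{\eraseann{\expr}}{v}$ gives exactly the two remaining disjuncts of the corollary.

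No step here is a real obstacle; the argument is definition-chasing. The one point needing care is to pick a positive step index, so that the relation actually constrains $v$. We should also check that $\veq$, which is $\alpha$-equivalence on closed values, coincides with syntactic identity for the constants $\true$ and $\false$; this holds because they contain no binders. Finally, $\Bool$ is only an abbreviation for $\uniontype{\true}{\false}$, so the union clause of the type relation applies directly without any auxiliary lemma.
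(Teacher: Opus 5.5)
Your proposal is correct and matches the paper's proof: apply \Cref{thm:type-soundness}, instantiate the value relation at step index $1$, and unfold the type-result clause and then the union clause for the boolean type to conclude that the value is literally true or false. Your remarks about choosing a positive index and about $\alpha$-equivalence on constants are accurate refinements of the paper's one-line ``unfold the definition'' argument.
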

\begin{proof}
  By \Cref{thm:type-soundness} it suffices to show that
  $\valinterp{\res[\Bool][\posprop][\negprop][\obj]}{v}[1][\empenv]$ implies $v=\true$ or
  $v=\false$. Recalling that $\Bool \coloneq \uniontype{\true}{\false}$, this follows immediately by
  unfolding the definition of the logical relation.
\end{proof}

\subsection{Lean Formalization}
\label{ssec:lean}

The definition of the $\lamtrl$ type system and operational semantics
(\cref{sec:formalism}) and all the definitions and theorems in our semantic type
soundness proof (\cref{sec:logical-relations-soundness}) have been fully
formalized in the Lean proof assistant. Our formalization builds in Lean version
4.30.0 (released May 2026) and does not rely on any libraries beyond what is
built into the standard Lean distribution; it is approximately 3,600 lines of
code, including comments and blank lines. The results rely only on the standard
classical axioms \texttt{[propext, Classical.choice, Quot.sound]} widely used in
Lean.

The original iteration of our Lean code was auto-formalized from a draft of our
accompanying technical report using Claude Fable 5, and has been maintained and
improved by a combination of AI and human effort as our work has continued to
evolve. The authors have carefully compared the Lean definitions and theorem
statements with the corresponding definitions and statements in this paper.

The only noteworthy difference between our Lean formalization and our paper
definitions is that the formalization uses a de Bruijn representation of binding
indexed by the context extensions of the form $\bindvar$, enabled by our
improved treatment of contexts described in \cref{ssec:formalism:wf-judgments}.

\section{Related Work}
\label{sec:related_work}

As we have discussed in \Cref{ssec:icfp-wrong}, the $\lambda_{TR}$ core calculus
of Typed Racket~\citep{tobin-hochstadt2010} and subsequent work extending or
directly modeled after $\lambda_{TR}$ \directworkrefs{} must address various
challenges inherent in building a restricted dependent type system for a
general-purpose programming language. Unfortunately, the published metatheory in
these papers---spanning from 2010 to 2021---fails to adequately address these
issues, leading to ill-defined formalisms and failures of type soundness. Our
present work both clarifies and resolves these issues.

Over the past two decades, many other related type systems have been designed
and implemented, often aimed at bringing type systems to other untyped
programming languages, or at extending conventional type systems to support
restricted predicates in the form of refinement types. Although we have focused
primarily on the $\lambda_{TR}$ family of work, a number of these other systems
also suffer from related problems, as we have already briefly mentioned in
\cref{sec:logical-relations-soundness}.

\paragraph{Definitional circularity}

\citeauthor{flanagan2006}'s \emph{Hybrid Type Checking} \citep{flanagan2006}
features a type system whose definitions of the typing, subtyping, and closing
substitution judgments are defined mutually with the implication judgment, which
is used in subtyping of refinement types. This is problematic because
\emph{consistent substitutions}, which are defined in terms of the syntactic
typing judgment, appear in negative position in the rule for the implication
judgment, meaning that the definition of the type system is not obviously
well-founded. This issue was noticed by \citet{knowles2010}, who fix that
definition by bootstrapping it via a denotational interpretation, in somewhat of
a similar spirit to our logical relation.

\citeauthor{rondon2008}'s work on \emph{Liquid Types} \citep{rondon2008} has been built upon by a
wide variety of subsequent systems
\cite{vazou2013,vazou2015bounded,vazou2018reflection,vekris2015}. Much of the
metatheory of that subsequent work directly references and relies upon the
metatheory of \citet{rondon2008}, which features an identical circularity to
\citet{flanagan2006}.

In a related line of work, \citet{chugh2012} develop a different solution to
this circularity problem, by developing an infinite stratified hierarchy of
typing derivations and semantic types. This avoids circularity, at the cost of
significant new metatheoretical complexity.

Additionally, all of these systems have static typing rules which refer to the
full operational semantics of the language, as part of their specification of
subtyping for refinements. For example, the \RULE{IMP} rule of
\citet[Figure 4]{knowles2010} states:
\[
\inferrule{
  \forall \sigma.\
  \text{if }
  \vdash \sigma : E
  \text{ and }
  \sigma(s) \rightsquigarrow^{*} \mathsf{true}
  \text{ then }
  \sigma(t) \rightsquigarrow^{*} \mathsf{true}
}{
  E \vdash s \Rightarrow t
}
\]
Although such rules are formally well-defined, they are not remotely
implementable. Practical systems such as Liquid
Haskell~\cite{vazou-liquid-haskell} instead work in a similar fashion to how our
formalism is presented: there is some syntactic proof system (like our
$\proves{\ctx}{\prop}$), often dispatched to an external solver, that has a
\emph{semantic} correctness criterion along the lines of the above rule. The
confusion between defining the system itself and giving its specification is
similar to the problems with the \textit{remove} and \textit{restrict}
metafunctions, as described previously.

\paragraph{Ill-scoped rules}

\Citet{vekris2015} present a liquid type system for JavaScript, based on the
metatheory presented by \citet{rondon2008}. In keeping with the theme of many
other errors described in the present paper, their \RULE{R-Let} rule (in Figure
9) is ill-scoped because its conclusion has a type that may mention the
\texttt{let}-bound variable, which is no longer in scope outside the
\texttt{let} itself.
\[
\inferrule{
  G \vdash M_1 \mathbin{:\!\!:} T_1 \\
  G, x{:}T_1 \vdash M_2 \mathbin{:\!\!:} T_2
}{
  G \vdash
  \mathbf{let}\ x = M_1\ \mathbf{in}\ M_2
  \mathbin{:\!\!:} T_2
}
\]

\paragraph{Restricting expressiveness}

Unlike the systems discussed above, many formalizations of type systems for
untyped languages lie \emph{outside} the dependent type spectrum. For
JavaScript, \citet{chaudhuri2017} and \citet{bierman2014} present two different
typed variants, Flow and TypeScript, respectively. \citet{chaudhuri2017}
specially treats each runtime type-test expression, such as \texttt{typeof x ===
"number"}, their function types do not contain information about type
refinements, and type refinements propagate only through expressions and within
an \texttt{if} statement, thereby avoiding the challenges faced in more
dependently-typed languages including $\lambda_{TR}$.

\Citet{bierman2014} similarly avoid these challenges by employing a
non-dependent system which disallows expressions of any form from appearing in
types. Flow and TypeScript later independently incorporated mechanisms inspired
by the latent propositions of $\lambda_{TR}$
\citep{tobin-hochstadt2010}. However, they restrict the result type of functions
to only refer to the parameters of that function, whereas $\lambda_{TR}$'s
propositions are more expressive, allowing references to any variables in scope.

\citet{castagna2022} incorporate features of occurrence typing into a
set-theoretic type system. The greater expressiveness of the underlying type
system allows type-checking examples \emph{without} type dependency. However,
the issue of ill-typed immediate steps nonetheless arises, and the system
includes a set of additional typing rules to resolve the problems this causes
for their syntactic, subject-reduction-based soundness proof.

\section{Conclusion}
\label{sec:conclusion}

Restricted dependent type systems are an attractive proposition for many
language designers, as they allow types to express rich properties while
remaining fully automated. However, the design of these systems often imposes a
tricky set of requirements on these systems' metatheory. As we have seen, this
can lead to type systems that are ill-defined, whose metatheory contains
significant errors, and whose metatheory is extremely complex.

We address these issues for the core calculus of Typed Racket, $\lambda_{TR}$,
as originally developed by \citet{tobin-hochstadt2010}. To that end, we develop
a new core calculus, $\lamtrl$, which incorporates the main features of
$\lambda_{TR}$ as well as the \emph{equality proposition} extension later
introduced by \citet{kent2016}. Although the main ideas of the $\lamtrl$
calculus are the same as those of $\lambda_{TR}$, our revision incorporates a
number of key improvements and corrections.

First, to ensure well-scopedness throughout the type system, we distinguish two
kinds of context entries: those that introduce a new variable, and those that
introduce new typing information about an existing variable
(\cref{ssec:formalism:wf-judgments}). Next, we reformulate the \textsf{restrict}
and \textsf{remove} metafunctions of $\lambda_{TR}$ as judgments, in order to
correct a well-foundedness issue in the definition of the type system. Moreover,
we add two subtyping rules (\RULE{S-Expl} and \RULE{SO-Expl}) in order to
eliminate a discrepancy between the behaviors of the formalism and the Typed
Racket implementation.

More significantly for Typed Racket programmers, the $\lamtrl$ formalism corrects a flaw in the
definition of the object erasure operation of $\lambda_{TR}$, by introducing a
new \emph{bottom object} $\botobj$ to the calculus and mutually defining
\emph{positive} and \emph{negative} erasure operations to account for
contravariance in function subtyping. This oversight in $\lambda_{TR}$ is
responsible for a genuine unsoundness in the implementation of Typed Racket, as confirmed by the developers.

Our presentation of $\lamtrl$ also incorporates some minor improvements to the
\RULE{T-Let} rule, the \RULE{T-If} rule, and a clarification of the
\RULE{L-Transport} rule of \citet{kent2016}. We also discovered some unrelated
small bugs in the Typed Racket implementation through the course of this work,
including errors in the parsing of complex type expressions.

Most importantly, we correct the type soundness proof of
\citet{tobin-hochstadt2010} by replacing it with a semantic type soundness proof for
$\lamtrl$ using step-indexed logical relations; this both clarifies the issues
at stake and simplifies the metatheory itself. We have proven type soundness on
paper as well as in Lean; both proofs are available in the supplementary
materials. The Lean proof uses a de Bruijn representation of variables which is
made possible by our clarification of the binding structure of contexts.

As we have discussed in \Cref{sec:related_work}, establishing syntactic type
soundness for systems like occurrence typing has often required complex and
novel metatheoretical machinery as well as the care for scoping necessitated by dependency; this combination in turn invites mistakes of the kind we analyze here. Although semantic methods such as step-indexed
logical relations generally have a reputation for being complex, we believe that
they are actually significantly simpler than syntactic type soundness in this
part of the design space.

Building on this work, we believe it would be straightforward to extend the
present calculus and metatheory to accommodate other extensions to Typed Racket,
ranging from refinement types \citep{kent2016} to structures and methods
\citep{fu2021}. We also believe that adopting this approach would significantly
simplify, and some cases correct, the metatheory of a number of similar systems.

\clearpage
\bibliographystyle{ACM-Reference-Format}
\bibliography{references}
\ifdefined\LambdaTRIncludeAppendix
  \ifdefined\LambdaTRIncludeAppendix
  \clearpage
  \appendix
  \begingroup
  \def\AppendixFinish{\endgroup}
  \RenewDocumentCommand{\trlabel}{m}{\label{appendix:#1}}
  \section{Appendix for ``Revisiting Soundness for Occurrence Typing, Semantically''}
  \let\section\subsection
\else
  \documentclass[11pt]{article}
  \usepackage[top=1.5cm, bottom=1.5cm, outer=2cm, inner=2cm, heightrounded, marginparwidth=2.5cm, marginparsep=2cm]{geometry}
  
  \newtheorem{lemma}{Lemma}
  \newtheorem{theorem}[lemma]{Theorem}
  \newtheorem{corollary}[lemma]{Corollary}
  \theoremstyle{definition}
  \RequirePackage{todonotes}
  \NewDocumentCommand{\carlonote}{m}{\todo[inline,color=blue!30]{#1}}
  \NewDocumentCommand{\ffnote}{m}{\todo[inline,color=purple!30]{#1}}
  \NewDocumentCommand{\samnote}{m}{\todo[inline,color=green!30]{#1}}
  \begin{document}
  \title{Supplementary Material for\\ ``Revisiting Soundness for Occurrence Typing, Semantically''}
  \maketitle
  \def\AppendixFinish{\end{document}}
\fi

We include the full definition of the $\lamtrl$ formalism, including some
figures omitted in the main paper. Theorems and proofs for semantic type
soundness are located in \Cref{appendix:sec:proof}.

\def\Abs{\abs}
\newcommand{\sep}{\begin{center} $\ast$~$\ast$~$\ast$~$\ast$~$\ast$~$\ast$ \end{center}}
\NewDocumentCommand{\lstpair}{m O{#1}}{\listof{#1}}
\section{Syntax}
\showsyntaxfigure
\showruntimesyntaxfigure
\showerasefigure
\clearpage
\section{Type system}
\showctxdomainfigure
\showwffigure
\showproofsystemfigure
\showrestrictremovefigure
\showsubtypingfigure
\showtypingrulesfigure
\showbasetypesfigure
\showobjectsubstitutionfigure
\clearpage
\section{Operational semantics}
\showexecopfigure
\showreductionfigure
\clearpage
\section{Logical relations}
\showlogicalrelationsfigure
\clearpage
\section{Proof}\label{appendix:sec:proof}
\paragraph{Presuppositions.} We make the following presuppositions:
\presuppositions

\begin{lemma}[Determinacy of One-Step Reduction]\label{appendix:lem:one-step-determ}
  If $\eestepsto{\unexpr}{\unexpr_1}$ and $\eestepsto{\unexpr}{\unexpr_2}$, then
  $\unexpr_1 = \unexpr_2$.
\end{lemma}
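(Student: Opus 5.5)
The proof goes through a unique-decomposition lemma for evaluation contexts. The relation $\eestepsto{\unexpr}{\unexpr'}$ has a single rule: $\unexpr = \ectx[\unexpr_a]$, $\unexpr' = \ectx[\unexpr_b]$, and $\stepsto{\unexpr_a}{\unexpr_b}$. Determinacy therefore reduces to two facts. First, the primitive head-reduction relation $\stepsto{}{}$ is deterministic. Second, any runtime expression has at most one decomposition $\ectx[\unexpr_a]$ with $\unexpr_a$ a head redex.

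The first fact is a direct case analysis on the five head rules. Their left-hand sides have pairwise distinct shapes, with three potential overlaps to check:
\begin{itemize}
\item \textsc{E-PrimOp} and \textsc{E-Beta} both apply to a value applied to a value, but the operator is $\op$ in one and $\unabs[x][\unexpr_b]$ in the other, so they never coincide.
\item \textsc{E-IfTrue} and \textsc{E-IfFalse} are separated by the side condition $v \neq \false$.
\item \textsc{E-Let} is the only rule for $\mathbf{let}$.
\end{itemize}
Each rule's result is a function of its redex. In particular, $\execop{\op}{v}$ is a (partial) function by inspection of \Cref{appendix:fig:execop}: its clauses are disjoint via $v \notveq \ldots$. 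Substitution is also a function.

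The second fact is proved by induction on $\unexpr$. The induction hinges on a preliminary lemma: \emph{no value is of the form $\ectx[\unexpr_a]$ with $\unexpr_a$ a redex}. This holds because values are $\bv$, $\op$, or $\unabs$, while every nontrivial evaluation context produces an application, conditional, or $\mathbf{let}$, and $\ehole$ would require the value itself to be a redex, which no rule allows.

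For the induction, suppose $\ectx_1[\unexpr_1] = \ectx_2[\unexpr_2]$ with both $\unexpr_i$ redexes. If either context is $\ehole$, the whole term is a redex. In that case its immediate subterms in evaluation position are values, e.g.\ $\ap{v}{w}$, $\If{v}{\cdot}{\cdot}$, or $\Let{v}{\cdot}$. If the other context were nonempty, it would place a redex inside one of those values, contradicting the preliminary lemma. So both contexts are $\ehole$.

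Otherwise both contexts have the same outer frame shape, since the term's top constructor fixes it. The only delicate case is application: $\ap{\ectx}{\unexpr}$ versus $\ap{v}{\ectx}$. If one context descends left and the other right, the left operand is both a value and contains a redex, which contradicts the preliminary lemma. With matching frames we recurse into the subterm by the induction hypothesis, obtaining $\ectx_1 = \ectx_2$ and $\unexpr_1 = \unexpr_2$. Combining with the first fact gives $\unexpr_1' = \unexpr_2'$, hence $\ectx[\unexpr_1'] = \ectx[\unexpr_2']$.

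The main obstacle is this unique-decomposition step, specifically the mixed application case and its reliance on the preliminary lemma. I would prove that lemma first, as a standalone claim by case analysis on the context. Everything else is routine inspection.
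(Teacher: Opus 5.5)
Your proof is correct and is essentially the paper's own argument: the paper only says ``by simultaneous case analysis on the derivations'' of the two steps, and you spell out what that case analysis involves, namely determinacy of the head rules plus unique decomposition of a term into an evaluation context and a redex. The unique-decomposition step genuinely needs induction on the term or context, as you do it, rather than a single case analysis, so your version makes explicit the step the paper's one-liner leaves implicit.
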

\begin{proof}
  By simultaneous case analysis on the derivations of
  $\eestepsto{\unexpr}{\unexpr_1}$ and $\eestepsto{\unexpr}{\unexpr_2}$.
\end{proof}

\begin{lemma}[Determinacy of Reduction to Values]\label{appendix:lem:red-val-determ}
  If $\neestepsto{\unexpr}{v}[m]$ and $\neestepsto{\unexpr}{w}[n]$, then
  $m = n$ and $v = w$.
\end{lemma}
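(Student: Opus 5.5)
We argue by induction on $m$, generalizing over $n$ and $\unexpr$, and use only the determinacy of one-step reduction (\Cref{appendix:lem:one-step-determ}) together with the fact that values are irreducible. The irreducibility fact is checked first: a value $v$ (a base value, a primitive, or an unannotated abstraction) cannot be written as $\ectx[\unexpr_0]$ with $\unexpr_0$ a redex. The only evaluation context that could match is the hole. In that case $v$ itself would have to match the left-hand side of one of the rules E-PrimOp, E-Beta, E-IfTrue, E-IfFalse or E-Let. Each of these left-hand sides is an application, an $\mathbf{if}$ or a $\mathbf{let}$, and none of them is a value.

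\emph{Base case $m = 0$.} Here $\unexpr = v$. If $n > 0$, the reduction $\neestepsto{\unexpr}{w}[n]$ begins with a step $\eestepsto{v}{\unexpr_1}$, which is impossible because values are irreducible. Hence $n = 0$, and so $w = \unexpr = v$.

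\emph{Case $m = m'+1$.} The first reduction splits as $\eestepsto{\unexpr}{\unexpr_1}$ followed by $\neestepsto{\unexpr_1}{v}[m']$. Now $n \neq 0$: if $n = 0$ then $\unexpr = w$ is a value, and it could not take the first step. So $n = n'+1$, and we split the second reduction as $\eestepsto{\unexpr}{\unexpr_2}$ followed by $\neestepsto{\unexpr_2}{w}[n']$. One-step determinacy gives $\unexpr_1 = \unexpr_2$. The induction hypothesis then yields $m' = n'$ and $v = w$, hence $m = n$.

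The main point needing care is the irreducibility of values. This means checking that no decomposition $v = \ectx[\unexpr_0]$ with a redex $\unexpr_0$ exists. It is a routine inspection of the grammar of evaluation contexts, and I expect no real obstacle. Everything else is bookkeeping on the step-indexed relation $\neestepsto{\cdot}{\cdot}[k]$, read as $k$ single steps of $\eestepsto{}{}$.
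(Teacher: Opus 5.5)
Your proof is correct and is essentially the paper's argument. The paper's proof just cites determinacy of one-step reduction (\Cref{appendix:lem:one-step-determ}) and irreducibility of values. Your induction on $m$, generalizing over $n$ and the starting expression, spells out exactly how those two facts combine, including the check that values are irreducible.
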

\begin{proof}
  By \Cref{appendix:lem:one-step-determ} and irreducibility of values.
\end{proof}

\begin{lemma}[Well-formedness Weakening]\label{appendix:lem:wf-weakening} Assuming $\wfctx[{\ctx'}]$ and
  $\dom{\ctx} \subseteq \dom{\ctx'}$, the following simultaneously hold:
  \begin{enumerate}
  \item if $\wftype{\tyA}$, then $\wftype[\ctx']{\tyA}$;
  \item if $\wfprop{\prop}$, then $\wfprop[\ctx']{\prop}$;
  \item if $\wfres{\R}$, then $\wfres[\ctx']{\R}$;
  \item if $\wfobj{\obj}$, then $\wfobj[\ctx']{\obj}$.
  \end{enumerate}
\end{lemma}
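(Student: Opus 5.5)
We prove all four clauses simultaneously by induction on the given well-formedness derivation, that is, on the structure of $\tyA$, $\prop$, $\R$, and $\obj$. The statement is generalized over $\ctx$ and $\ctx'$: in each case we assume only $\wfctx[{\ctx'}]$ and $\dom{\ctx} \subseteq \dom{\ctx'}$. This generality is needed for the function-type case.

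\textbf{Objects.} For $\wfobj{x}$ we have $x \in \dom{\ctx} \subseteq \dom{\ctx'}$, so $\wfobj[\ctx']{x}$ follows by the same rule. The objects $\topobj$ and $\botobj$ are well-formed in any context.

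\textbf{Propositions.} The cases $\TT$ and $\FF$ are immediate. For $\typeprop{x}{\tyA}$ and $\nottypeprop{x}{\tyA}$, we get $x \in \dom{\ctx'}$ from the inclusion of domains and $\wftype[\ctx']{\tyA}$ from clause (1) of the induction hypothesis. The case $\eqprop{x}{y}$ needs only the inclusion of domains. Conjunction and disjunction follow from the induction hypotheses on their components.

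\textbf{Types.} Base types, $\top$, and $\bot$ are immediate, and unions follow from the induction hypotheses. The only interesting case is $\funty{x}{\tyA}{\R}$, whose premises are $\wftype{\tyA}$ and $\wfres[\ctx\cdot\bindvar]{\R}$. Clause (1) gives $\wftype[\ctx']{\tyA}$. We then apply clause (3) with the pair of contexts $\ctx\cdot\bindvar$ and $\ctx'\cdot\bindvar$, and two side conditions must be checked:
\begin{itemize}
\item $\wfctx[{\ctx'\cdot\bindvar}]$ follows from $\wfctx[{\ctx'}]$ and the $\wftype[\ctx']{\tyA}$ just derived.
\item $\dom{\ctx}\cup\{x\} \subseteq \dom{\ctx'}\cup\{x\}$ is immediate.
\end{itemize}
The rule then gives $\wftype[\ctx']{\funty{x}{\tyA}{\R}}$.

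Binder naming could cause trouble if $x$ is already in $\dom{\ctx'}$. The global presupposition that every binder has a unique name rules this out, and in any case domain inclusion is insensitive to duplicates, so the argument is unaffected.

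\textbf{Type results.} This case is a direct combination of clauses (1), (2), and (4) on the four components.

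\textbf{Main obstacle.} The only step requiring care is the function-type case. There the induction must be applied at extended contexts, which is why the lemma is stated for arbitrary $\ctx,\ctx'$ related only by domain inclusion. The well-formedness of the extended context $\ctx'\cdot\bindvar$ is supplied by the induction hypothesis on the domain type $\tyA$.
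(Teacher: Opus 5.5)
Your proposal is correct and follows the paper's own argument. Both use simultaneous induction over the four well-formedness judgments, and in the function-type case both apply the type-result clause at the extended contexts $\Gamma\cdot x{:}A$ and $\Gamma'\cdot x{:}A$, using the preserved domain inclusion. You are slightly more careful than the paper, which leaves implicit the step that $A$ is well-formed in $\Gamma'$; that step is needed both to reapply the rule and to make the extended context $\Gamma'\cdot x{:}A$ well-formed.
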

\begin{proof}
  \newcommand{\wfpropp}{\wfprop[\ctx']}
  \newcommand{\wfobjp}{\wfobj[\ctx']}
  \newcommand{\wftypep}{\wftype[\ctx']}
  Assume $\wfctx[{\ctx'}]$ and $\dom{\ctx} \subseteq \dom{\ctx'}$.
  By simultaneous rule induction on the derivations of the well-formedness
  judgments appearing in the premises of the four clauses.

  \paragraph{Clause 1.} Case on the derivation of $\wftype{\tyA}$.
  \begin{caselist}
  \item $\tyA \in \{\top,\bot,\true,\false,\Nat,\Str\}$.
    The last rules have no premises.
    By reapplying the same rules.
  \item $\tyA = \uniontype{\tyA_1}{\tyA_2}$.
    The premises of the last rule are
    \begin{align*}
      \wftype{\tyA_1} \\
      \wftype{\tyA_2}.
    \end{align*}
    By the mutual IH,
    $\wftype[\ctx']{\tyA_1}$ and $\wftype[\ctx']{\tyA_2}$.
    Then by re-applying the same rule,
    $\wftype[\ctx']{\uniontype{\tyA_1}{\tyA_2}}$
  \item $\tyA = \funty{x}{\tyB}{\R}$. The premises of the last rule are
    \begin{align*}
      \wftype{\tyB} \\
      \wfres[\ctx \exta \bindvar[x][\tyB]]{\R}.
    \end{align*}
    Because of the assumption of
    the lemma $\dom{\ctx} \subseteq \dom{\ctx'}$,
    $\dom{\ctx \exta \bindvar[x][\tyB]} \subseteq \dom{\ctx' \exta
      \bindvar[x][\tyB]}$. Then by the mutual IH on the premise,
    $\wfres[\ctx' \exta \bindvar[x][\tyB]]{\R}$. By reapplying the same rule,
    $\wftype[\ctx']{\funty{x}{\tyB}{\R}}$.

  \end{caselist}

  \paragraph{Clause 2.} Case on the derivation of $\wfprop{\prop}$.
  \begin{caselist}
  \item $\prop \in \{\TT,\FF\}$.
    The last rules have no premises.
    By reapplying the same rules.
  \item $\prop = \typeprop{x}{\tyA}$.
    The premises of the last rule are
    \begin{align*}
      x \in \dom{\ctx} \\
      \wftype{\tyA}.
    \end{align*}
    Because $\dom{\ctx} \subseteq \dom{\ctx'}$, $x \in \dom{\ctx'}$.
    By the mutual IH, $\wftypep{\tyA}$.
    By reapplying the same rule, $\wfpropp{\typeprop{x}{\tyA}}$
  \item $\prop = \nottypeprop{x}{\tyA}$.
    The premises of the last rule are
    \begin{align*}
      x \in \dom{\ctx} \\
      \wftype{\tyA}.
    \end{align*}
    The case follows the reasoning analogous to that in the previous case.
  \item $\prop = \eqprop{x}{y}$.
    The premises of the last rule are
    \begin{align*}
      x \in \dom{\ctx} \\
      y \in \dom{\ctx}.
    \end{align*}
    Because $\dom{\ctx} \subseteq \dom{\ctx'}$, $x \in \dom{\ctx'}$ and $y \in \dom{\ctx'}$.
    By reapplying the same rule, $\wfpropp{\eqprop{x}{y}}$.
  \item $\prop = \conjprop{\prop_1}{\prop_2}$.
    The premises of the last rule are
    \begin{align*}
      \wfprop{\prop_1} \\
      \wfprop{\prop_2}.
    \end{align*}
    By the mutual IH,
    \begin{align*}
      \wfpropp{\prop_1} \\
      \wfpropp{\prop_2}.
    \end{align*}
    By reapplying the same rule, $\conjprop{\prop_1}{\prop_2}$.
  \item $\prop = \disjprop{\prop_1}{\prop_2}$.
    The premises of the last rule are
    \begin{align*}
      \wfprop{\prop_1} \\
      \wfprop{\prop_2}.
    \end{align*}
    The case follows the reasoning analogous to that in the previous case.
  \end{caselist}

  \paragraph{Clause 3.} Case on the derivation of $\wfres{\R}$.
  \begin{caselist}
  \item $\R = \res[\tyA][\posprop][\negprop][\obj]$.
    The premises of the last rule are
    \begin{align*}
      \wftype{\tyA} \\
      \wfprop{\posprop} \\
      \wfprop{\negprop} \\
      \wfobj{\obj}.
    \end{align*}
    By the mutual IH,
    \begin{align*}
      \wftypep{\tyA} \\
      \wfpropp{\posprop} \\
      \wfpropp{\negprop} \\
      \wfobjp{\obj}.
    \end{align*}
    By reapplying the same rule, $\wfres[\ctx']{\res[\tyA][\posprop][\negprop][\obj]}$.
  \end{caselist}

  \paragraph{Clause 4.} Case on the derivation of $\wfobj{\obj}$.
  \begin{caselist}
  \item $\obj = x$.
    The premise of the last rule is
    \begin{align*}
      x \in \dom{\ctx}.
    \end{align*}
    Because $\dom{\ctx} \subseteq \dom{\ctx'}$, $x \in \dom{\ctx'}$.
    By reapplying the same rule,
    $\wfobjp{x}$.
  \item $\obj = \botobj$.
    The last rule has no premises.
    By reapplying the same rule.
  \item $\obj = \topobj$.
    The last rule has no premises.
    By reapplying the same rule.
  \end{caselist}
\end{proof}

\begin{lemma}[Subtyping for Erasure] \label{appendix:lem:subtyping-erase} The following hold simultaneously:
  \begin{enumerate}
  \item if $\wftype{\tyA}$, then $\subtype{\ctx}{\negerase{\tyA}}{\tyA}$ and $\subtype{\ctx}{\tyA}{\poserase{\tyA}}$.
  \item if $\wfres{\R}$,then $\subtype{\ctx}{\negerase{\R}}{\R}$ and $\subtype{\ctx}{\R}{\poserase{\R}}$.
  \item Assuming $\wfprop{\prop}$, the following two statements hold: 1. if $\proves{\ctx}{\negerase{\prop}}$ then $\proves{\ctx}{\prop}$;
    2.if $\proves{\ctx}{\prop}$ then $\proves{\ctx}{\poserase{\prop}}$.
  \item if $\wfobj{\obj}$, $\subtype{\ctx}{\negerase{\obj}}{\obj}$ and $\subtype{\ctx}{\obj}{\poserase{\obj}}$.
  \end{enumerate}
\end{lemma}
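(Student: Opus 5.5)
The plan is to prove all eight statements by one simultaneous induction on the structure of $\tyA$, $\R$, $\prop$ and $\obj$ (equivalently, on their well-formedness derivations). Crucially, the induction hypothesis is stated for \emph{every} well-formed context $\ctx$, not the fixed one. This generality is needed in the function-type case, where the context changes.

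I also need two auxiliary facts, which I would prove first by the same kind of induction. First, erasure preserves well-formedness: if $\wftype{\tyA}$ then $\poserase{\tyA}$ and $\negerase{\tyA}$ are well-formed in $\ctx$, and likewise for results, propositions and objects. This is needed because every subtyping judgment presupposes well-formedness of both sides. Second, reflexivity of result and object subtyping. For result subtyping, apply \RULE{SR-Result} with \RULE{S-Refl}, \RULE{L-Atom} for the propositions, and the object rule. For objects, use \RULE{SO-Equiv} with \RULE{L-Refl} for variables, and \RULE{SO-Top} or \RULE{SO-Bot} otherwise.

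\emph{Objects (clause 4).} For $\obj = x$, use \RULE{SO-Bot} for $\botobj \le x$ and \RULE{SO-Top} for $x \le \topobj$. Any other object is unchanged by erasure, so reflexivity applies.

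\emph{Types (clause 1).} Base types, $\top$ and $\bot$ are unchanged, so \RULE{S-Refl} closes them. For unions, use the IH on each component, then \RULE{S-UnionSub} with \RULE{S-UnionSuper1} and \RULE{S-UnionSuper2}. The function case $\funty{y}{\tyA}{\R}$ with $y\neq x$ is the crux. Here
\[
  \poserase{(\funty{y}{\tyA}{\R})} = \funty{y}{\negerase{\tyA}}{\poserase{\R}},
\]
so \RULE{S-Fun} asks for $\subtype{\ctx}{\negerase{\tyA}}{\tyA}$ and for $\R \le \poserase{\R}$ in the context $\ctx\exta y{:}\negerase{\tyA}$. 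The first is the IH. For the second, the well-formedness derivation only gives $\R$ well-formed in $\ctx\exta y{:}\tyA$. Since $\negerase{\tyA}$ is well-formed, $\ctx\exta y{:}\negerase{\tyA}$ is well-formed and has the same domain. \Cref{appendix:lem:wf-weakening} therefore transports $\R$ to that context, and the generalized IH applies there. The negative direction is symmetric. The case $y = x$ is excluded by the unique-binder presupposition; if it were needed, it would use the IH on the domain and reflexivity on $\R$.

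\emph{Results (clause 2).} Apply \RULE{SR-Result}, taking the type and object premises from the IH. For the then-proposition, \RULE{L-Atom} gives $\proves{\ctx\extb\posprop}{\posprop}$, and clause 3.2 in the extended context yields its positive erasure. For the negative direction, \RULE{L-Atom} gives the erased proposition in $\ctx\extb\negerase{\posprop}$, and clause 3.1 recovers $\posprop$. The else-proposition is handled the same way. Extending contexts by propositions is legitimate because well-formedness only depends on the domain.

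\emph{Propositions (clause 3), case by case on $\prop$.}
\begin{itemize}
  \item Atoms mentioning $x$: positive erasure gives $\TT$, closed by \RULE{L-True}; negative erasure gives $\FF$, closed by \RULE{L-False}, using that $\prop$ is well-formed.
  \item $\typeprop{z}{\tyB}$ with $z\neq x$: use \RULE{L-Sub} with $\tyB \le \poserase{\tyB}$, respectively $\negerase{\tyB}\le \tyB$.
  \item $\nottypeprop{z}{\tyB}$ with $z\neq x$: use \RULE{L-SubNot} with the opposite inclusions, which is exactly why the polarity flips there.
  \item Equalities not mentioning $x$: unchanged.
  \item Conjunctions: decompose with \RULE{L-AndE1} and \RULE{L-AndE2}, apply the IH, and recombine with \RULE{L-AndI}.
  \item Disjunctions: use \RULE{L-OrE}. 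In each context $\ctx\extb\prop_i$, obtain $\prop_i$ by \RULE{L-Atom}, apply the IH in that extended context, and finish with \RULE{L-OrI1} or \RULE{L-OrI2}, using erasure-preserves-well-formedness for the side conditions.
\end{itemize}

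I expect the main obstacle to be the function-type case and the bookkeeping around it. The codomain must be re-checked in a context whose binding for $y$ has changed, which forces the context-generalized induction together with \Cref{appendix:lem:wf-weakening}. Proving the well-formedness of erased syntax first is what makes every invoked rule's presuppositions hold.
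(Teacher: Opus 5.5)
Your proposal is correct and follows the paper's proof essentially step for step. Both use a simultaneous structural induction in which the positive function-type case applies \RULE{S-Fun} after moving $\R$ into $\ctx \exta y{:}\negerase{\tyA}$ by well-formedness weakening. The negative function-type case is actually easier than ``symmetric'', since its codomain premise lives in the original $\ctx \exta y{:}\tyA$. Typing atoms go through \RULE{L-Sub}/\RULE{L-SubNot} with flipped polarity, and disjunctions through \RULE{L-OrE} with \RULE{L-Atom}. Your context-generalized induction hypothesis and your erasure-preserves-well-formedness lemma make explicit what the paper uses silently.

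One aside is wrong, though it is harmless because you rely on the unique-binder presupposition, as the paper implicitly does. For a binder equal to $x$, the definition gives $\poserase{(\funty{x}{\tyA}{\R})} = \funty{x}{\poserase{\tyA}}{\R}$ with no polarity flip. So \RULE{S-Fun} would need $\subtype{\ctx}{\poserase{\tyA}}{\tyA}$, which the induction hypothesis does not supply.
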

\begin{proof}
  By simultaneous case analysis on the shapes of $\tyA$, $\R$, $\prop$, and $\obj$.
  \paragraph{Clause 1.} Case on the shape of $\tyA$.
  \begin{caselist}
  \item $\tyA \in \{\top,\bot,\true,\false,\Nat,\Str\}$.
    By definition, $\negerase{\tyA} = \tyA = \poserase{\tyA}$. Then by \myrule{S-Refl},
    $\subtype{\ctx}{\negerase{\tyA}}{\tyA}$ and $\subtype{\ctx}{\tyA}{\poserase{\tyA}}$.
  \item $\tyA = \uniontype{\tyA_1}{\tyA_2}$.
    By the mutual IH, for $i \in \{1,2\}$
    \begin{align}
      \subtype{\ctx}{\negerase{\tyA_i}}{\tyA_i} \\
      \subtype{\ctx}{\tyA_i}{\poserase{\tyA_i}}
    \end{align}
    By \myrule{S-UnionSub} and \myrule{S-UnionSuper},
    \begin{align*}
      \subtype{\ctx}{\uniontype{\negerase{\tyA_1}}{\negerase{\tyA_2}}}{\uniontype{\tyA_1}{\tyA_2}} \\
      \subtype{\ctx}{\uniontype{\tyA_1}{\tyA_2}}{\uniontype{\poserase{\tyA_1}}{\poserase{\tyA_2}}}
    \end{align*}
  \item $\tyA = \funty{y}{\tyB}{\R}$. We need to show $\subtypeneg{(\funty{y}{\tyB}{\R})}$ and $\subtypepos{(\funty{y}{\tyB}{\R})}$.
    \newcommand{\ctxyb}{\ctx \exta \bindvar[y][\tyB]}
    \newcommand{\ctxynegb}{\ctx \exta \bindvar[y][\negerase{\tyB}]}
    By well-formedness, $\wftype{\tyB}$ and $\wfres[\ctx \exta \bindvar[y][\tyB]]{\synR}$.
    By the mutual IH,
    \begin{align}
      \subtypeneg{\tyB} \label{appendix:eq:subtypenegtyb-} \\
      \subtypepos{\tyB} \label{appendix:eq:subtypepostyb-} \\
      \subtypeneg[\ctxyb]{\R} \label{appendix:eq:subtypenegctxybr-} \\
      \subtypepos[\ctxyb]{\R} \label{appendix:eq:subtypeposctxybr-}
    \end{align}
    By \myrule{S-Fun} on \cref{appendix:eq:subtypepostyb-,appendix:eq:subtypenegctxybr-},
    \begin{align*}
      \subtype{\ctx}{\funty{y}{\poserase{\tyB}}{\negerase\R}}{\funty{y}{\tyB}{\R}}
    \end{align*}
    By definition,
    \begin{align*}
      \subtypeneg{(\funty{y}{\tyB}{\R})}
    \end{align*}
    which concludes the first part of our goal.

    By \Cref{appendix:lem:wf-weakening} on $\wfres[\ctxyb]{\R}$ and
    $\dom{\ctxyb} \subseteq \dom{\ctxynegb}$,
    \begin{align*}
      \wfres[\ctxynegb]{\R}.
    \end{align*}
    Then by the mutual IH,
    \begin{align*}
      \subtypepos[\ctxynegb]{\R}.
    \end{align*}
    Together with \cref{appendix:eq:subtypenegtyb-}, applying \myrule{S-Fun} gives
    \begin{align*}
      \subtype{\ctx}{\funty{y}{\tyB}{\R}}{\funty{y}{\negerase{\tyB}}{\poserase\R}}
    \end{align*}
    By definition,
    \begin{align*}
      \subtypepos{(\funty{y}{\tyB}{\R})}
    \end{align*}
    which concludes the second part of our goal.
  \end{caselist}
  \paragraph{Clause 2.} Case on the shape of $\R$.
  The only case is $\R = \res[\tyA][\posprop][\negprop][\obj]$.
  The premises of the last rule are
  \begin{align*}
    \wftype{\tyA} \\
    \wfprop{\posprop} \\
    \wfprop{\negprop} \\
    \wfobj{\obj}.
  \end{align*}
  We need to show,
  \begin{align*}
    \subtype{\ctx}{\negerase{(\res[\tyA][\posprop][\negprop][\obj])}}{\res[\tyA][\posprop][\negprop][\obj]} \\
    \subtype{\ctx}{\res[\tyA][\posprop][\negprop][\obj]}{\poserase{(\res[\tyA][\posprop][\negprop][\obj])}}.
  \end{align*}

  By the mutual IH on the premises respectively,
  \begin{align}
    \subtype{\ctx}{\negerase{\tyA}}{\tyA} \label{appendix:eq:subtyp--2}\\
    \subtype{\ctx}{\tyA}{\poserase{\tyA}} \label{appendix:eq:subtyp--3}\\
    \proves{\ctx \exta \negerase{\posprop}}{\posprop} \label{appendix:eq:prov-exta-neger}\\
    \proves{\ctx \exta \posprop}{\poserase{\posprop}} \label{appendix:eq:prov-exta-pospr}\\
    \proves{\ctx \exta \negerase{\negprop}}{\negprop} \label{appendix:eq:prov-exta-neger-1}\\
    \proves{\ctx \exta \negprop}{\poserase{\negprop}} \label{appendix:eq:prov-exta-negpr}\\
    \subtype{\ctx}{\negerase{\obj}}{\obj} \label{appendix:eq:subtyp--4}\\
    \subtype{\ctx}{\obj}{\poserase{\obj}}\label{appendix:eq:subtyp--5}
  \end{align}
  By \myrule{SR-Result} on \cref{appendix:eq:subtyp--2,appendix:eq:prov-exta-neger,appendix:eq:prov-exta-neger-1,appendix:eq:subtyp--4},
  \begin{align*}
    \subtype{\ctx}{\negerase{(\res[\tyA][\posprop][\negprop][\obj])}}{\res[\tyA][\posprop][\negprop][\obj]}
  \end{align*}
  By \myrule{SR-Result} on \cref{appendix:eq:subtyp--3,appendix:eq:prov-exta-pospr,appendix:eq:prov-exta-negpr,appendix:eq:subtyp--5},
  \begin{align*}
    \subtype{\ctx}{\res[\tyA][\posprop][\negprop][\obj]}{\poserase{(\res[\tyA][\posprop][\negprop][\obj])}}
  \end{align*}
  \paragraph{Clause 3.} Case on the shape of $\prop$.
  \begin{caselist}
  \item $\prop = \typeprop{z}{\tyA}$. The premises of the last rule are
    \begin{align}
      z \in \dom{\ctx} \\
      \wftype{\tyA}\label{appendix:eq:wftypetya-}
    \end{align}
    By the mutual IH on \cref{appendix:eq:wftypetya-},
    \begin{align}
      \subtypeneg{\tyA} \label{appendix:eq:subtypenegtya-}\\
      \subtypepos{\tyA}\label{appendix:eq:subtypepostya-}
    \end{align}

    1. Assume $\proves{\ctx}{\negerase{(\typeprop{z}{\tyA})}}$. If $z = x$,
    then $\negerase{(\typeprop{z}{\tyA})} = \FF$. Hence
    $\proves{\ctx}{\FF}$, and by \myrule{L-False},
    \begin{align*}
      \proves{\ctx}{\typeprop{z}{\tyA}}.
    \end{align*}
    Otherwise, suppose $z \neq x$. Then
    $\negerase{(\typeprop{z}{\tyA})} = \typeprop{z}{\negerase{\tyA}}$.
    By \myrule{L-Sub} on the assumption
    $\proves{\ctx}{\typeprop{z}{\negerase{\tyA}}}$ and
    \cref{appendix:eq:subtypenegtya-},
    \begin{align*}
      \proves{\ctx}{\typeprop{z}{\tyA}}.
    \end{align*}

    2. Assume $\proves{\ctx}{\typeprop{z}{\tyA}}$. If $z = x$, then
    $\poserase{(\typeprop{z}{\tyA})} = \TT$. By \myrule{L-True},
      \begin{align*}
        \proves{\ctx}{\TT}.
      \end{align*}
    Otherwise, suppose $z \neq x$. Then
    $\poserase{(\typeprop{z}{\tyA})} = \typeprop{z}{\poserase{\tyA}}$.
    By \myrule{L-Sub} on the assumption
    $\proves{\ctx}{\typeprop{z}{\tyA}}$ and \cref{appendix:eq:subtypepostya-},
      \begin{align*}
        \proves{\ctx}{\poserase{(\typeprop{z}{\tyA})}}.
      \end{align*}
  \item $\prop = \nottypeprop{z}{\tyA}$. The premises of the last rule are
    \begin{align}
      z \in \dom{\ctx} \\
      \wftype{\tyA}\label{appendix:eq:wftypetya-not}
    \end{align}
    By the mutual IH on \cref{appendix:eq:wftypetya-not},
    \begin{align}
      \subtypeneg{\tyA} \label{appendix:eq:subtypenegtya-not}\\
      \subtypepos{\tyA}\label{appendix:eq:subtypepostya-not}
    \end{align}

    1. Assume $\proves{\ctx}{\negerase{(\nottypeprop{z}{\tyA})}}$. If $z = x$,
    then $\negerase{(\nottypeprop{z}{\tyA})} = \FF$. Hence
    $\proves{\ctx}{\FF}$, and by \myrule{L-False},
    \begin{align*}
      \proves{\ctx}{\nottypeprop{z}{\tyA}}.
    \end{align*}
    Otherwise, suppose $z \neq x$. Then
    $\negerase{(\nottypeprop{z}{\tyA})} = \nottypeprop{z}{\poserase{\tyA}}$.
    By \myrule{L-SubNot} on the assumption
    $\proves{\ctx}{\nottypeprop{z}{\poserase{\tyA}}}$ and
    \cref{appendix:eq:subtypepostya-not},
    \begin{align*}
      \proves{\ctx}{\nottypeprop{z}{\tyA}}.
    \end{align*}

    2. Assume $\proves{\ctx}{\nottypeprop{z}{\tyA}}$. If $z = x$, then
    $\poserase{(\nottypeprop{z}{\tyA})} = \TT$. By \myrule{L-True},
      \begin{align*}
        \proves{\ctx}{\TT}.
      \end{align*}
    Otherwise, suppose $z \neq x$. Then
    $\poserase{(\nottypeprop{z}{\tyA})} = \nottypeprop{z}{\negerase{\tyA}}$.
    By \myrule{L-SubNot} on the assumption
    $\proves{\ctx}{\nottypeprop{z}{\tyA}}$ and \cref{appendix:eq:subtypenegtya-not},
      \begin{align*}
        \proves{\ctx}{\poserase{(\nottypeprop{z}{\tyA})}}.
      \end{align*}
  \item $\prop = \eqprop{z_1}{z_2}$. The premises of the last rule are
    \begin{align*}
      z_1 \in \dom{\ctx} \\
      z_2 \in \dom{\ctx}
    \end{align*}

    1. Assume $\proves{\ctx}{\negerase{(\eqprop{z_1}{z_2})}}$. If
    $z_1 = x$ or $z_2 = x$, then
    $\negerase{(\eqprop{z_1}{z_2})} = \FF$. Hence $\proves{\ctx}{\FF}$,
    and by \myrule{L-False},
    \begin{align*}
      \proves{\ctx}{\eqprop{z_1}{z_2}}.
    \end{align*}
    Otherwise, suppose $z_1 \neq x$ and $z_2 \neq x$. Then
    $\negerase{(\eqprop{z_1}{z_2})} = \eqprop{z_1}{z_2}$. Then the case is
    concluded by the assumption.

    2. Assume $\proves{\ctx}{\eqprop{z_1}{z_2}}$. If $z_1 = x$ or
    $z_2 = x$, then $\poserase{(\eqprop{z_1}{z_2})} = \TT$. By
    \myrule{L-True},
    \begin{align*}
      \proves{\ctx}{\TT}.
    \end{align*}
    Otherwise, suppose $z_1 \neq x$ and $z_2 \neq x$. Then
    $\poserase{(\eqprop{z_1}{z_2})} = \eqprop{z_1}{z_2}$. Then the case is concluded
    by the assumption.
  \item $\prop = \conjprop{\prop_1}{\prop_2}$. The premises of the last rule are
    \begin{align}
      \wfprop{\prop_1}\label{appendix:eq:wfprop-conj-1} \\
      \wfprop{\prop_2}\label{appendix:eq:wfprop-conj-2}.
    \end{align}

    1. Assume $\proves{\ctx}{\negerase{(\conjprop{\prop_1}{\prop_2})}}$.
    Since
    $\negerase{(\conjprop{\prop_1}{\prop_2})}
    = \conjprop{\negerase{\prop_1}}{\negerase{\prop_2}}$, by
    \myrule{L-AndE1} and \myrule{L-AndE2} on the assumption,
    \begin{align}
      \proves{\ctx}{\negerase{\prop_1}}\label{appendix:eq:prov-conj-neg-1} \\
      \proves{\ctx}{\negerase{\prop_2}}\label{appendix:eq:prov-conj-neg-2}.
    \end{align}
    By the mutual IH on \cref{appendix:eq:wfprop-conj-1,appendix:eq:wfprop-conj-2} and
    \cref{appendix:eq:prov-conj-neg-1,appendix:eq:prov-conj-neg-2},
    \begin{align}
      \proves{\ctx}{\prop_1}\label{appendix:eq:prov-conj-1} \\
      \proves{\ctx}{\prop_2}\label{appendix:eq:prov-conj-2}.
    \end{align}
    By \myrule{L-AndI} on \cref{appendix:eq:prov-conj-1,appendix:eq:prov-conj-2},
    \begin{align*}
      \proves{\ctx}{\conjprop{\prop_1}{\prop_2}}.
    \end{align*}

    2. Assume $\proves{\ctx}{\conjprop{\prop_1}{\prop_2}}$. By
    \myrule{L-AndE1} and \myrule{L-AndE2} on the assumption,
    \begin{align}
      \proves{\ctx}{\prop_1}\label{appendix:eq:prov-conj-pos-1} \\
      \proves{\ctx}{\prop_2}\label{appendix:eq:prov-conj-pos-2}.
    \end{align}
    By the mutual IH on \cref{appendix:eq:wfprop-conj-1,appendix:eq:wfprop-conj-2} and
    \cref{appendix:eq:prov-conj-pos-1,appendix:eq:prov-conj-pos-2},
    \begin{align}
      \proves{\ctx}{\poserase{\prop_1}}\label{appendix:eq:prov-conj-poserase-1} \\
      \proves{\ctx}{\poserase{\prop_2}}\label{appendix:eq:prov-conj-poserase-2}.
    \end{align}
    By
    \myrule{L-AndI} on
    \cref{appendix:eq:prov-conj-poserase-1,appendix:eq:prov-conj-poserase-2},
    \begin{align*}
      \proves{\ctx}{\poserase{(\conjprop{\prop_1}{\prop_2})}}.
    \end{align*}
  \item $\prop = \disjprop{\prop_1}{\prop_2}$. The premises of the last rule are
    \begin{align}
      \wfprop{\prop_1}\label{appendix:eq:wfprop-disj-1} \\
      \wfprop{\prop_2}\label{appendix:eq:wfprop-disj-2}.
    \end{align}

    1. Assume $\proves{\ctx}{\negerase{(\disjprop{\prop_1}{\prop_2})}}$.
    Since
    $\negerase{(\disjprop{\prop_1}{\prop_2})}
    = \disjprop{\negerase{\prop_1}}{\negerase{\prop_2}}$, the assumption is
    \begin{align}
      \proves{\ctx}{\disjprop{\negerase{\prop_1}}{\negerase{\prop_2}}}.
      \label{appendix:eq:prov-disj-neg}
    \end{align}
    Let us consider the left branch.

    By \Cref{appendix:lem:wf-weakening} on \cref{appendix:eq:wfprop-disj-1,appendix:eq:wfprop-disj-2}
    \begin{align}
      \wfprop[\ctx \extb \negerase{\prop_1}]{\prop_1} \label{appendix:eq:wfpr-extb-neger}\\
      \wfprop[\ctx \extb \negerase{\prop_1}]{\prop_2} \label{appendix:eq:wfpr-extb-neger-1}.
    \end{align}
    By the mutual IH on \cref{appendix:eq:wfpr-extb-neger}
    \begin{align}
      \proves{\ctx \extb \negerase{\prop_1}}{\negerase{\prop_1}} \metaimpl \proves{\ctx\extb \negerase{\prop_1}}{\prop_1}
      \label{appendix:eq:prov-disj-neg-left}
    \end{align}
    By L-Atom,
    \begin{align*}
      \proves{\ctx \extb \negerase{\prop_1}}{\negerase{\prop_1}}
    \end{align*}
    By applying \cref{appendix:eq:prov-disj-neg-left}
    \begin{align}
      \proves{\ctx\extb \negerase{\prop_1}}{\prop_1}
    \end{align}
    Together with \cref{appendix:eq:wfpr-extb-neger-1},
    applying \myrule{L-OrI1} gives
    \begin{align}
      \proves{\ctx \extb \negerase{\prop_1}}{\disjprop{\prop_1}{\prop_2}}.
      \label{appendix:eq:prov-disj-neg-left-branch}
    \end{align}
    Let us consider the right branch.

    By \Cref{appendix:lem:wf-weakening} on \cref{appendix:eq:wfprop-disj-1,appendix:eq:wfprop-disj-2}
    \begin{align}
      \wfprop[\ctx \extb \negerase{\prop_2}]{\prop_1} \label{appendix:eq:wfpr-extb-neger-2}\\
      \wfprop[\ctx \extb \negerase{\prop_2}]{\prop_2} \label{appendix:eq:wfpr-extb-neger-3}.
    \end{align}
    By the mutual IH on \cref{appendix:eq:wfpr-extb-neger-3}
    \begin{align}
      \proves{\ctx \extb \negerase{\prop_2}}{\negerase{\prop_2}} \metaimpl \proves{\ctx\extb \negerase{\prop_2}}{\prop_2}
      \label{appendix:eq:prov-disj-neg-right}
    \end{align}
    By L-Atom,
    \begin{align*}
      \proves{\ctx \extb \negerase{\prop_2}}{\negerase{\prop_2}}
    \end{align*}
    By applying \cref{appendix:eq:prov-disj-neg-right}
    \begin{align}
      \proves{\ctx\extb \negerase{\prop_2}}{\prop_2}
    \end{align}
    Together with \cref{appendix:eq:wfpr-extb-neger-2},
    applying \myrule{L-OrI2} gives
    \begin{align}
      \proves{\ctx \extb \negerase{\prop_2}}{\disjprop{\prop_1}{\prop_2}}.
      \label{appendix:eq:prov-disj-neg-right-branch}
    \end{align}
    By \myrule{L-OrE} on
    \cref{appendix:eq:prov-disj-neg,appendix:eq:prov-disj-neg-left-branch,appendix:eq:prov-disj-neg-right-branch},
    \begin{align*}
      \proves{\ctx}{\disjprop{\prop_1}{\prop_2}}.
    \end{align*}

    2. Assume $\proves{\ctx}{\disjprop{\prop_1}{\prop_2}}$.
    Let us consider the left branch.

    By \Cref{appendix:lem:wf-weakening} on \cref{appendix:eq:wfprop-disj-1,appendix:eq:wfprop-disj-2}
    \begin{align}
      \wfprop[\ctx \extb \prop_1]{\prop_1} \label{appendix:eq:wfpr-extb-pos-1}\\
      \wfprop[\ctx \extb \prop_1]{\prop_2} \label{appendix:eq:wfpr-extb-pos-2}.
    \end{align}
    By the mutual IH on \cref{appendix:eq:wfpr-extb-pos-1}
    \begin{align}
      \proves{\ctx \extb \prop_1}{\prop_1} \metaimpl \proves{\ctx\extb \prop_1}{\poserase{\prop_1}}
      \label{appendix:eq:prov-disj-pos-left}
    \end{align}
    By L-Atom,
    \begin{align*}
      \proves{\ctx \extb \prop_1}{\prop_1}
    \end{align*}
    By applying \cref{appendix:eq:prov-disj-pos-left}
    \begin{align}
      \proves{\ctx\extb \prop_1}{\poserase{\prop_1}}
      \label{appendix:eq:wfpr-extb-pos-poserase-2}
    \end{align}
    By \cref{appendix:eq:wfpr-extb-pos-2},
    \begin{align*}
      \wfprop[\ctx \extb \prop_1]{\poserase{\prop_2}}
    \end{align*}
    Together with \cref{appendix:eq:wfpr-extb-pos-poserase-2},
    applying \myrule{L-OrI1} gives
    \begin{align}
      \proves{\ctx \extb \prop_1}{\disjprop{\poserase{\prop_1}}{\poserase{\prop_2}}}
      \label{appendix:eq:prov-disj-pos-left-branch}
    \end{align}
    Let us consider the right branch.

    By \Cref{appendix:lem:wf-weakening} on \cref{appendix:eq:wfprop-disj-1,appendix:eq:wfprop-disj-2}
    \begin{align}
      \wfprop[\ctx \extb \prop_2]{\prop_1} \label{appendix:eq:wfpr-extb-pos-3}\\
      \wfprop[\ctx \extb \prop_2]{\prop_2} \label{appendix:eq:wfpr-extb-pos-4}.
    \end{align}
    By the mutual IH on \cref{appendix:eq:wfpr-extb-pos-4}
    \begin{align}
      \proves{\ctx \extb \prop_2}{\prop_2} \metaimpl \proves{\ctx\extb \prop_2}{\poserase{\prop_2}}
      \label{appendix:eq:prov-disj-pos-right}
    \end{align}
    By L-Atom,
    \begin{align*}
      \proves{\ctx \extb \prop_2}{\prop_2}
    \end{align*}
    By applying \cref{appendix:eq:prov-disj-pos-right}
    \begin{align}
      \proves{\ctx\extb \prop_2}{\poserase{\prop_2}}
      \label{appendix:eq:wfpr-extb-pos-poserase-1}
    \end{align}
    By \cref{appendix:eq:wfpr-extb-pos-3}
    \begin{align*}
      \wfprop[\ctx \extb \prop_2]{\poserase{\prop_1}}.
    \end{align*}
    Together with \cref{appendix:eq:wfpr-extb-pos-poserase-1},
    applying \myrule{L-OrI2} gives
    \begin{align}
      \proves{\ctx \extb \prop_2}{\disjprop{\poserase{\prop_1}}{\poserase{\prop_2}}}.
      \label{appendix:eq:prov-disj-pos-right-branch}
    \end{align}
    By \myrule{L-OrE} on the assumption,
    \cref{appendix:eq:prov-disj-pos-left-branch,appendix:eq:prov-disj-pos-right-branch},
    \begin{align*}
      \proves{\ctx}{\poserase{(\disjprop{\prop_1}{\prop_2})}}.
    \end{align*}
  \item $\prop = \TT$. since $\negerase{\TT} = \poserase{\TT} = \TT$, the case holds by \myrule{L-True}.
  \item $\prop = \FF$. since $\negerase{\FF} = \poserase{\FF} = \FF$, the case holds by \myrule{L-False}.
  \end{caselist}
  \paragraph{Clause 4.} Case on the shape of $\obj$.
  \begin{caselist}
  \item $\obj = z$. The premise of the last rule is
    \begin{align}
      z \in \dom{\ctx} \label{appendix:eq:z-in-domctx}
    \end{align}
    If $z = x$, then $\negerase{z} = \botobj$ and $\poserase{z} = \topobj$.
    Then by \myrule{SO-Bot} and \myrule{SO-Top} respectively,
    \begin{align*}
      \subtype{\ctx}{\botobj}{z} \\
      \subtype{\ctx}{z}{\topobj}.
    \end{align*}
    Otherwise, $\negerase{z} = z = \poserase{z}$.
    By \myrule{L-Refl} on \cref{appendix:eq:z-in-domctx} and then \myrule{SO-Equiv}
    \begin{align*}
      \subtype{\ctx}{z}{z}.
    \end{align*}
  \item $\obj = \botobj$. The last rule has no premises. Since
    $\negerase{\botobj} = \botobj = \poserase{\botobj}$,  by \myrule{SO-Bot}
    \begin{align*}
      \subtype{\ctx}{\botobj}{\botobj}.
    \end{align*}
  \item $\obj = \topobj$. The last rule has no premises. Since
    $\negerase{\topobj} = \topobj = \poserase{\topobj}$, by \myrule{SO-Top}
    \begin{align*}
      \subtype{\ctx}{\topobj}{\topobj}.
    \end{align*}
  \end{caselist}
\end{proof}
\begin{lemma}
  [Weakening for closing substitution]\label{appendix:lem:psubwk} The following hold simultaneously:
    \begin{enumerate}
    \item if $\env(y) = v$, then $\envw(y) = v$ for any variable $y \neq x$;
    \item if $\valinterp{\tyA}{v}$, then $\valinterp{\tyA}{v}[n][\envw]$
    \item if $\propinterp{\prop}$, then $\propinterp{\prop}[n][\envw]$;
    \item if $\expinterp{\R}{\unexpr}$, then $\expinterp{\R}{\unexpr}[n][\envw]$.
  \end{enumerate}
\end{lemma}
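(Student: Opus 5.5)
The plan is to prove a slightly stronger, symmetric statement and derive \Cref{appendix:lem:psubwk} as an instance. Call two environments $\env$ and $\env'$ \emph{$x$-agreeing} when $\env(y) = \env'(y)$ for every $y \neq x$. Here $x$ is the variable added in $\envw$. By the unique-binder presupposition, $x$ does not occur free in the type, proposition or type result under consideration, and it is distinct from every binder inside them.

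The stronger claim is that for $x$-agreeing $\env,\env'$, each relation holds at $\env$ \emph{if and only if} it holds at $\env'$:
\begin{itemize}
\item $\valinterp{\tyA}{v}$;
\item $\propinterp{\prop}$;
\item $\valinterp{\R}{v}$;
\item $\expinterp{\R}{\unexpr}$.
\end{itemize}
I need the biconditional rather than the one-way implication because $\nottypeprop{z}{\tyA}$ is interpreted by the \emph{negation} of the type relation at every $1 \le m \le n+1$. Transporting that negation forward requires transporting the positive relation backward.

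Clause 1 is immediate from the definition of environment lookup: $\envw(y) = \env(y)$ for $y \neq x$. This shows that $\env$ and $\envw$ are $x$-agreeing, so clauses 2--4 follow from the stronger claim.

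I will prove the stronger claim by an outer induction on the step index $n$ and an inner induction on the structure of $\tyA$, $\prop$ and $\R$. At index $0$ every relation is trivially true; the context relation, which mentions $\dom{\env}$, does not appear here. For index $n+1$ the cases go as follows.
\begin{itemize}
\item Base types, $\top$ and $\bot$ do not consult the environment at all. Unions follow from the inner IH.
\item $\typeprop{z}{\tyA}$ and $\eqprop{z_1}{z_2}$ use clause 1 for the lookups, since $z, z_1, z_2 \neq x$ by freshness, plus the inner IH on $\tyA$.
\item $\nottypeprop{z}{\tyA}$ uses the IH at every $m \le n+1$ in both directions. This is covered by the outer induction for $m \le n$ and by the inner induction at $m = n+1$.
\item Conjunction and disjunction are direct.
\item For $\valinterp{\res[\tyA][\posprop][\negprop][\obj]}{v}$, the object clause again uses clause 1, because $\obj \neq x$. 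The proposition and type components use the IH.
\item The expression relation $\expinterp{\R}{\unexpr}$ unfolds one reduction step. Reduction is on closed runtime expressions and does not depend on the environment, so this reduces to the outer IH at $n$ or to the value case at $n+1$.
\end{itemize}

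The main obstacle I expect is the function-type case $\funty{y}{\tyA}{\R}$. Its definition quantifies over $w'$ and $k \le n+1$ and asks for $\expinterp{\R}{\ap{v}{w'}}[k][\extenv{\env}{y}{w'}]$. To use the IH I must compare $\extenv{\env}{y}{w'}$ with $\extenv{\env'}{y}{w'}$, not with a further extension of $\env$. This is exactly why I generalised from ``$\env$ versus $\envw$'' to arbitrary $x$-agreeing pairs. Since $y \neq x$ by unique binders, extending both environments by $y \mapsto w'$ preserves $x$-agreement, and no exchange lemma for environment extensions is needed.

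The domain premise $\valinterp{\tyA}{w'}[k]$ is transported using the IH on $\tyA$ in the reverse direction. This is another reason the biconditional is essential. The result $\expinterp{\R}{\ap{v}{w'}}[k]$ is transported using the IH at index $k$: for $k < n+1$ this comes from the outer induction, and for $k = n+1$ from the inner structural induction on $\R$, which is a subterm. I would check carefully that this lexicographic order (index first, then syntax) is well-founded across all the mutual cases. If it turns out to be awkward, I would instead induct on the pair (index, size of syntax).
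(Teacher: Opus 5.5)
Your proposal is correct, but it is considerably more careful than the paper, whose entire proof is ``All statements follow the definition of closing substitution and logical relations.'' The paper states weakening (this lemma) and strengthening (\Cref{appendix:lem:psubstr}) as two separate one-directional lemmas, each discharged by unfolding definitions. You instead prove a single biconditional for arbitrary pairs of environments that agree off $x$, by induction on the step index and then on syntax.

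This route makes explicit two things the paper's presentation glosses over. First, neither direction can be proved on its own. The interpretation of $\nottypeprop{z}{\tyA}$ negates the type relation, and function domains are used contravariantly, so proving weakening needs strengthening at those positions and vice versa. Your biconditional captures this mutual dependence and yields \Cref{appendix:lem:psubstr} as a by-product. Second, you generalize from the specific pair $(\env, \envw)$ to arbitrary agreeing pairs. As a result, the function case compares $\extenv{\env}{y}{w'}$ with $\extenv{\env'}{y}{w'}$ directly, with no exchange lemma for environment extensions.

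You also make explicit the freshness side condition: $x$ must not be free in the type, proposition, or type result. The paper's statement of this lemma leaves that condition implicit in the unique-binder presupposition, unlike its strengthening lemma. The condition is genuinely needed, since otherwise extending the environment with $x \mapsto w$ could change the meaning of a free occurrence of $x$.

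Your lexicographic order (index first, then syntax, with the value-result claim for a given type result proved before the expression claim) matches the well-founded order by which the logical relation itself is defined, so the induction goes through. What the paper's version buys is only brevity.
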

\begin{proof}
  All statements follow the definition of closing substitution and logical relations.
\end{proof}

\begin{lemma}
  [Strengthening for parallel substitution]\label{appendix:lem:psubstr} The following hold simultaneously:
  \begin{enumerate}
  \item if $\envw(y) = v$ and $y \neq x$, then $\env(y) = v$;
  \item if $\valinterp{\tyA}{v}[n][\envw]$ and $\notfreein{x}{\tyA}$, then $\valinterp{\tyA}{v}[n]$;
  \item if $\propinterp{\prop}[n][\envw]$ and $\notfreein{x}{\prop}$, then $\propinterp{\prop}[n]$;
  \item if $\expinterp{\R}{\unexpr}[n][\envw]$ and $\notfreein{x}{\R}$, then $\expinterp{\R}{\unexpr}[n]$.
  \end{enumerate}
\end{lemma}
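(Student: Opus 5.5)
We prove all four clauses simultaneously. The outer induction is on the step index $n$, and inside it we do structural induction on $\tyA$, $\prop$ and $\R$. This mirrors the mutual recursion in the definitions of the logical relations (\Cref{fig:valinterptype,fig:propinterp,fig:tresinterp}).

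Clause 1 is immediate from the definition of environment extension: $\envw(y)$ looks up $y$ in $\env$ whenever $y \neq x$.

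For clauses 2 and 3 we case on the shape of the type or proposition. The index-$0$ cases are trivially true. Base types, $\top$, $\bot$ and unions do not consult the environment (unions follow from the IH). For atomic propositions $\typeprop{y}{\tyA}$, $\nottypeprop{y}{\tyA}$ and $\eqprop{y}{z}$, the hypothesis $\notfreein{x}{\prop}$ gives $y,z \neq x$. Clause 1 then identifies $\envw(y)$ with $\env(y)$, and the type clause (IH, since $\notfreein{x}{\tyA}$) transfers the judgment.

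The negative proposition needs care. It asserts failure of $\valinterp{\tyA}{\env(y)}[m]$ for all $1 \le m \le n+1$, so we need the reverse direction. That is, we need weakening (\Cref{appendix:lem:psubwk}) to turn a hypothetical $\valinterp{\tyA}{\env(y)}[m]$ into one at $\envw$, which contradicts the hypothesis. Conjunction and disjunction follow pointwise from the IH.

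The function type $\funty{z}{\tyA}{\R}$ is the heart of the proof. By the unique-binder presupposition, $z \neq x$. Given $w$ and $k \le n+1$ with $\valinterp{\tyA}{w}[k]$ at $\env$, we first weaken to $\envw$, using \Cref{appendix:lem:psubwk}. We then apply the hypothesis to obtain $\expinterp{\R}{\ap{v}{w}}[k]$ at $\extenv{\envw}{z}{w}$. Finally we strengthen at $\R$ to reach $\extenv{\env}{z}{w}$, viewing that environment as $\env$ extended by $z$ and then $x$ (permuting the two distinct bindings). This is legitimate because $\notfreein{x}{\R}$.

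For clause 4 we induct on $n$, following the expression relation. In the reduction case we use the IH at $n-1$. In the value case we unfold the result relation at index $n+1$:
\begin{itemize}
\item the type component is handled by clause 2;
\item the proposition components are handled by clause 3;
\item the object component: $\obj$ is $\topobj$ or a variable $y \neq x$, and clause 1 transfers $\envw(y) = v$ to $\env(y) = v$.
\end{itemize}

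The main obstacle is well-foundedness. The function and negation cases invoke clauses at the same index $n+1$ for subterms, and they also invoke the weakening lemma in the opposite direction. The clean order is to take the lexicographic measure (index, structure of the type or proposition or result) and observe that every recursive appeal either decreases the index or descends to a syntactic subterm. With that measure fixed, each case is a routine unfolding.
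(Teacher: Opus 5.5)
Your proof is correct and is essentially the paper's argument: the paper simply says all clauses follow by unfolding environment lookup and the logical relations, and your lexicographic (index, structure) measure makes the well-foundedness explicit. The one thing to state outright is that the appeal to \Cref{appendix:lem:psubwk} must sit inside the same simultaneous induction, because weakening's own function-type and $\nottypeprop{y}{\tyA}$ cases need strengthening in the same way, so citing it as an already-proved lemma would be circular; the cleanest packaging is a single statement that environments agreeing on the free variables of $\tyA$, $\prop$, or $\R$ give equivalent relations, which also removes the need to permute the $x$ and $z$ bindings.
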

\begin{proof}
  All statements follow the definition of closing substitution and logical relations.
\end{proof}

\begin{lemma}[Renaming for Closing Substitution]\label{appendix:lem:psubrenaming} The following hold simultaneously:
  \begin{enumerate}
    \item if $x \neq y$, $\env(x) = v$ and $\env(x) = \env(y)$ , then $\env(y) = v$;
    \item if $\valinterp{\tyA}{v}$ and $\env(x) = \env(y)$, then $\valinterp{\substR{\tyA}{x}{y}}{v}$;
    \item if $\propinterp{\prop}$ and $\env(x) = \env(y)$, then $\propinterp{(\substprop{\prop}{x}{y})}$;
    \item if $\expinterp{\R}{\unexpr}$ and $\env(x) = \env(y)$, then $\expinterp{\substR{\R}{x}{y}}{\unexpr}$.
  \end{enumerate}
\end{lemma}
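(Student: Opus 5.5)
The plan is a simultaneous induction on the step index $n$ and, inside it, on the syntax of $\tyA$, $\prop$, and $\R$, mirroring the mutual definition of the logical relations. Clause 1 is immediate: $\env(x)=v$ and $\env(x)=\env(y)$ give $\env(y)=v$. Clause 1 is used wherever a proposition or object mentions a variable $z\in\{x,y\}$: after renaming, a lookup of $x$ becomes a lookup of $y$, and these agree. For index $0$ every relation holds trivially. The one exception is the expression relation, which also holds trivially at $0$. The context relation does not appear in this statement, so it needs no treatment.

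\textbf{Clause 3 (propositions), at index $n+1$.}
\begin{itemize}
\item $\TT$ and $\FF$ are unchanged by renaming.
\item For $\eqprop{z_1}{z_2}$, renaming replaces an occurrence of $x$ by $y$. Since $\env(x)=\env(y)$, the required $\alpha$-equality of looked-up values is preserved.
\item For $\typeprop{z}{\tyB}$, first use clause 1 to see that the renamed variable looks up the same value. Then apply the clause 2 hypothesis for $\tyB$ at index $n+1$, a structurally smaller type.
\item For $\nottypeprop{z}{\tyB}$ the argument is contravariant. We must show that $\env(z')$ fails $\valinterp{\substR{\tyB}{x}{y}}{\cdot}[m]$ for every $1\le m\le n+1$. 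For this I will prove clause 2 as an iff rather than a one-way implication.
\item Conjunction and disjunction follow directly from the inductive hypotheses.
\end{itemize}

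\textbf{Clause 2 (types).}
\begin{itemize}
\item Base types, $\top$, and $\bot$ are unchanged by renaming.
\item Unions follow by the inductive hypothesis on each component.
\item For $\funty{z}{\tyB}{\R}$, the unique-binder presupposition lets us assume $z\notin\{x,y\}$. Renaming then gives $\funty{z}{\substR{\tyB}{x}{y}}{\substR{\R}{x}{y}}$. Take $w$ and $k\le n+1$ with $\valinterp{\substR{\tyB}{x}{y}}{w}[k]$. Using the converse direction of the inductive hypothesis on $\tyB$, we get $\valinterp{\tyB}{w}[k]$, and hence $\expinterp{\R}{\ap{v}{w}}[k][\extenv{\env}{z}{w}]$. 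The extended environment still satisfies $(\extenv{\env}{z}{w})(x)=(\extenv{\env}{z}{w})(y)$, because $z\ne x,y$. Clause 4 for $\R$ then finishes the case.
\end{itemize}

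\textbf{Clause 4 (expressions and value type results).} Induct on the step index.
\begin{itemize}
\item If $\unexpr$ steps, appeal to the inductive hypothesis at $n$.
\item If $\unexpr$ is a value, unfold the value relation for type results. The type component uses clause 2. The propositions use clause 3. For the object, if it is $x$, it becomes $y$, and $\env(y)=\env(x)=v$ by clause 1.
\end{itemize}

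\textbf{Main obstacle: negativity.} The negative typing proposition forces both directions, so I will strengthen every clause to a bi-implication: $\valinterp{\tyA}{v}$ iff $\valinterp{\substR{\tyA}{x}{y}}{v}$, and likewise for the others. The reverse direction is symmetric, since the hypothesis $\env(x)=\env(y)$ is symmetric, and the function-domain case uses it. The strengthened induction measure is lexicographic on (step index, syntax size). The function case calls clause 4 at $k\le n+1$ on a smaller syntactic object, and clause 4 recurses on $n$ only through stepping, so the measure decreases in every call. The statement as given then follows from the forward directions.
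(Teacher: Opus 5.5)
Your proof is correct, and it is considerably more careful than the paper's, which disposes of the lemma in one line as following from the definitions. The decisive difference is that you strengthen every clause to a bi-implication. A direct induction on the one-directional statement, as the paper states it, gets stuck at two places. First, $\propinterp{\nottypeprop{z}{\tyB}}$ quantifies negatively over $\valinterp{\tyB}{\env(z)}[m]$ for $1\le m\le n+1$. Second, the domain of $\funty{z}{\tyB}{\R}$ is contravariant. Both cases therefore need the converse implication on $\tyB$, and your strengthening supplies it. Your lexicographic measure on (step index, syntax) does decrease at every recursive call, including the negative-proposition case at indices $m\le n+1$.

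Prove the converse case by case inside the same induction, as you sketch. Do not try to obtain it by swapping $x$ and $y$ in the forward statement: renaming $y$ back to $x$ in $\substR{\tyA}{x}{y}$ does not recover $\tyA$ when $y$ already occurs free in $\tyA$.

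The biconditional also pays off downstream, in the \textsc{L-Transp} case of the correctness theorem. There the paper applies this lemma to $\substprop{\prop}{z}{x}$, and $x$ may itself occur free in $\prop$. The forward direction then only yields $\prop$ with both $z$ and $x$ replaced by $y$. Reaching $\propinterp{\substprop{\prop}{z}{y}}$ requires exactly your converse direction, applied to $\substprop{\prop}{z}{y}$.
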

\begin{proof}
  All statements follow the definition of closing substitution and logical relations.
\end{proof}

\begin{lemma}[Head Expansion]\label{appendix:lem:hdxp} \hfill
  \begin{enumerate}
    \item
      if $\expinterp{\R}{\unexpr'}[m]$ and $\neestepsto{\unexpr}{\unexpr'}[k]$, then $\expinterp{\R}{\unexpr}[m+k]$.
    \item
      if $\valinterp{\R}{v}[m]$ and $\neestepsto{\unexpr}{v}[k]$, then $\expinterp{\R}{\unexpr}[m+k]$.
  \end{enumerate}
\end{lemma}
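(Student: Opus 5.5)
The plan is to prove clause (1) by induction on $k$, generalizing over $\unexpr$ and $m$, and then to derive clause (2) from clause (1). Throughout we use only the defining equations of $\expinterp{\R}{\cdot}$.

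\emph{Clause (1).} For $k = 0$ we have $\unexpr = \unexpr'$, so the hypothesis is already the goal. For $k+1$, decompose the reduction sequence $\neestepsto{\unexpr}{\unexpr'}[k+1]$ as a first step $\eestepsto{\unexpr}{\unexpr_1}$ followed by $\neestepsto{\unexpr_1}{\unexpr'}[k]$. The induction hypothesis gives $\expinterp{\R}{\unexpr_1}[m+k]$. The index $m+k+1$ is a successor, and $\unexpr$ takes a step, so the stepping clause of the definition applies: $\expinterp{\R}{\unexpr}[m+k+1]$ unfolds to $\expinterp{\R}{\unexpr_1}[m+k]$, which we already have.

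The one point needing care is that the two clauses of the definition at a successor index do not overlap. The value clause does not apply because $\unexpr$ is not a value: values are irreducible, and $\unexpr$ steps. The stepping clause picks out a unique successor, because reduction is deterministic by \Cref{appendix:lem:one-step-determ}. Hence the unfolding is well-defined and matches the $\unexpr_1$ we decomposed.

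\emph{Clause (2).} If $m = 0$, the goal holds trivially when $k = 0$. When $k > 0$, we still have $\expinterp{\R}{v}[0]$ trivially, so clause (1) applies. If $m = m'+1$, the value clause of the definition turns $\valinterp{\R}{v}[m'+1]$ into $\expinterp{\R}{v}[m'+1]$, and clause (1) applied with $\unexpr' = v$ finishes the proof.

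I expect no real obstacle. The only delicate step is the case analysis on whether the index is $0$ and whether the expression is a value, needed to unfold $\expinterp{\R}{\cdot}$ correctly; this relies on determinacy and the irreducibility of values.
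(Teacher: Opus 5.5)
Your proof is correct and is essentially the paper's argument, which simply says both clauses follow by unfolding the definition of $\expinterp{\R}{\unexpr}$. Your induction on $k$ (peeling off the first step, with determinacy and irreducibility of values ensuring the stepping clause is the one that applies) and your derivation of clause (2) from clause (1) via the value clause are exactly that unfolding made explicit. One minor simplification: the $m = 0$ case of clause (2) needs no split on $k$, since $\expinterp{\R}{v}[0]$ holds trivially and clause (1) then covers every $k$ at once.
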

\begin{proof}
  The results follow immediately by unfolding the definition of $\expinterp{\R}{\unexpr}$
\end{proof}

\begin{lemma}[Head Reduction]\label{appendix:lem:hdrd} if $\expinterp{\R}{\unexpr}$,
  then either:
  \begin{enumerate}
  \item $\hdrdpropa$;
  \item $\hdrdpropb$.
  \end{enumerate}
\end{lemma}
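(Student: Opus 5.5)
The plan is to argue by induction on the step index $n$, generalizing over $\unexpr$, and to unfold the definition of $\expinterp{\R}{\unexpr}$ at each stage. The macros \verb|\hdrdpropa| and \verb|\hdrdpropb| are read as in the main-text \Cref{lem:hdrd}. Alternative (1) says there is an $\unexpr'$ with $\neestepsto{\unexpr}{\unexpr'}[n]$. Alternative (2) says there is a value $v$ with $\neestepsto{\unexpr}{v}[m]$ for some $m<n$ and $\valinterp{\R}{v}[n-m]$.

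\emph{Base case $n=0$.} Here $\expinterp{\R}{\unexpr}[0]$ holds trivially. We take $\unexpr' = \unexpr$ with zero reduction steps, which establishes alternative (1).

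\emph{Inductive step.} Assume $\expinterp{\R}{\unexpr_0}[n+1]$ and split on the two clauses defining the relation at index $n+1$.
\begin{itemize}
\item If $\unexpr_0 = v$ is a value, the definition gives $\valinterp{\R}{v}[n+1]$ directly. Taking $m = 0$, we have $\neestepsto{\unexpr_0}{v}[0]$ and $0 < n+1$, with $\valinterp{\R}{v}[(n+1)-0]$. This is alternative (2).
\item If $\eestepsto{\unexpr_0}{\unexpr_1}$, the definition gives $\expinterp{\R}{\unexpr_1}[n]$, and the induction hypothesis applies to $\unexpr_1$.
  \begin{itemize}
  \item In alternative (1), $\neestepsto{\unexpr_1}{\unexpr'}[n]$. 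Prepending the first step gives $\neestepsto{\unexpr_0}{\unexpr'}[n+1]$, which is alternative (1).
  \item In alternative (2), $\neestepsto{\unexpr_1}{v}[m]$ with $m<n$ and $\valinterp{\R}{v}[n-m]$. Prepending the step gives $\neestepsto{\unexpr_0}{v}[m+1]$ with $m+1 < n+1$. Moreover $(n+1)-(m+1) = n-m$, so the value relation already holds at the required index. This is alternative (2).
  \end{itemize}
\end{itemize}

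The only point needing care is that the definition of $\expinterp{\R}{\unexpr}[n+1]$ is a genuine case split. Values are irreducible, so the two clauses cannot both apply. A non-value that does not step (a stuck term) falls under neither clause, so the relation is false for it, which is vacuously fine here.

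If the definition is read so that the stepping clause selects the unique successor, determinacy (\Cref{appendix:lem:one-step-determ}) ensures that successor is well defined. Otherwise nothing beyond unfolding definitions and arithmetic on step counts is needed, and I expect no real obstacle. Monotonicity (\Cref{lem:srlmono}) is not required, because the indices match exactly.
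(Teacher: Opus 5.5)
Your proof is correct and takes the same approach as the paper, whose proof just says the result follows by unfolding the definition of the expression relation. Your induction on the step index is that unfolding made explicit: the value clause gives alternative (2) with $m=0$, and the stepping clause shifts both the step count and the index by one.
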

\begin{proof}
  The result follows immediately by unfolding the definition of $\expinterp{\R}{\unexpr}$.
\end{proof}

\begin{lemma}[Monotonicity of Logical Relations]\label{appendix:lem:srlmono} $\forall m, n $ such that $m \le n$, then the following hold:
  \begin{enumerate}
    \item \label{appendix:lem:srlmono-exp} if $\expinterp{\R}{\unexpr}$, then $\expinterp{\R}{\unexpr}[m]$.
    \item \label{appendix:lem:srlmono-res} if $\valinterp{\R}{v}$, then $\valinterp{\R}{v}[m]$.
    \item \label{appendix:lem:srlmono-type} if $\valinterp{\tyA}{v}$, then $\valinterp{\tyA}{v}[m]$.
    \item \label{appendix:lem:srlmono-prop} if $\propinterp{\prop}$, then $\propinterp{\prop}[m]$.
    \item \label{appendix:lem:srlmono-ctx} if $\ctxinterp{\ctx}{\env}$, then $\ctxinterp{\ctx}{\env}[m]$.
  \end{enumerate}
\end{lemma}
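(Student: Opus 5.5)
The proof is by induction on the step index $n$, proving all five clauses simultaneously for every $m \le n$. The clauses refer to one another through the definitions: the function-type clause of $\valinterp{\tyA}{v}$ mentions $\expinterp{\R}{\cdot}$, propositions mention types, and contexts mention both. The outer induction is on $n$. Within a fixed $n$, I use a nested structural induction on the syntax of $\ctx$, $\prop$, $\tyA$, and $\R$. If $m = 0$, every relation except the context relation is $\metatrue$ by definition. For contexts, $\ctxinterp{\ctx}{\env}[0]$ requires $\dom{\ctx}\subseteq\dom{\env}$, and I obtain this from the observation after the context definition: any $\ctxinterp{\ctx}{\env}[n]$ implies $\dom{\ctx}\subseteq\dom{\env}$, which is itself a routine induction on $\ctx$. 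So it suffices to treat $m = m'+1 \le n = n'+1$.

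For types, the base types $\top,\bot,\true,\false,\Nat,\Str$ are index-independent at positive indices, so they are immediate. Unions follow from the structural IH. For $\funty{x}{\tyA}{\R}$, the syntactic-shape condition on $v$ does not depend on the index. The quantified condition ranges over $k \le m'+1$, which is a subset of $k \le n'+1$, so the downward closure is built into the definition and no IH is needed. For propositions, $\TT$, $\FF$, and $\eqprop{x}{y}$ are index-independent, and $\typeprop{x}{\tyA}$ follows from the type clause. The negative proposition $\nottypeprop{x}{\tyA}$ at $n'+1$ says that $\env(x)$ fails the relation for $\tyA$ at every $1\le j\le n'+1$; restricting to $1\le j\le m'+1$ gives the claim directly, which is exactly why the Kripke-style definition was chosen. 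Conjunction and disjunction follow from the IH, and contexts follow by induction on $\ctx$ using the type and proposition clauses.

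For value results $\valinterpres{\res}{v}$ at $n'+1$, the object condition is index-free. The type component follows from the type clause, and the then/else disjunct follows from the proposition clause. For expressions, I induct on $n$ directly. If $\unexpr$ steps to $\unexpr_1$, then $\expinterp{\R}{\unexpr}[n'+1]$ gives $\expinterp{\R}{\unexpr_1}[n']$, and the outer IH at $n'$ with $m'\le n'$ gives $\expinterp{\R}{\unexpr_1}[m']$, which is the definition at $m'+1$. If $\unexpr$ is a value, I use the value-result clause. Stuck non-values cannot satisfy the relation at a positive index, so that case is vacuous.

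The main obstacle is making the induction well-founded, given how the mutual references are arranged. The expression clause needs the outer IH at a strictly smaller index. The type, proposition, and result clauses need only the structural IH at the same index, because the function-type clause quantifies over $k$ and therefore needs no recursive appeal. I would therefore order the argument as follows: first prove the type clause alone by induction on $\tyA$, with no step induction; then prove the proposition, context, and result clauses from it; and finally prove the expression clause by induction on $n$. This stratification removes any genuine mutual dependency.
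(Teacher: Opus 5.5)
Your proof is correct and follows the paper's argument: dispatch $m=0$, then induct on $n$. The function-type and negative-typing-proposition cases are discharged directly by the built-in quantification over smaller indices, and the expression clause uses the hypothesis at $n-1$. Your explicit stratification (the type clause by structural induction alone, the expression clause last) and your appeal to the domain-inclusion observation for $\ctxinterp{\ctx}{\env}[0]$ spell out two points the paper glosses over: it folds the structural induction into the induction on $n$ and simply calls the $m=0$ case trivial.
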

\begin{proof}
  First, let us case on $m$.  If $m = 0$, then all clauses trivially hold.
  Otherwise, $m > 0$. Then we prove all clauses simultaneously by induction on
  $n$, with an inner arbitrary $m \leq n$.

  The base case $n = 0$ holds vacuously because $n = 0$ contradicts $m \leq n$ where $m > 0$.
  We move on to the inductive case $n > 0$.
  \paragraph{For Clause~\ref{appendix:lem:srlmono-exp},} there are two cases.
  \begin{caselist}
  \item $\expinterp{\R}{\unexpr} = \expinterp{\R}{\unexpr'}[n-1]$ and $
    \eestepsto{\unexpr}{\unexpr'}$.  Because $m - 1 \le n - 1$, we can apply
    Component~\ref{appendix:lem:srlmono-exp}  of the mutual IH to it. Together with
    $\expinterp{\R}{\unexpr'}[n-1]$, this gives
    $\expinterp{\R}{\unexpr'}[m-1]$. By definition, $\expinterp{\R}{\unexpr}[m]$
  \item $\expinterp{\R}{\unexpr} = \valinterp{\R}{v}$ and $\unexpr = v$. This case is a consequence of Clause~\ref{appendix:lem:srlmono-res}.
  \end{caselist}
  \paragraph{For Clause~\ref{appendix:lem:srlmono-res},} Clause~\ref{appendix:lem:srlmono-type} gives $\valinterp{A}{v}[m]$, and Clause~\ref{appendix:lem:srlmono-prop}
  gives
  $\propertyC{v}{}{\env}{m}$.
  Then, by definition, $\valinterp{\R}{v}[m]$
  \paragraph{For Clause~\ref{appendix:lem:srlmono-type},} we case on the shape of $\tyA$.
  \begin{caselist}
  \item $\tyA \in \{ \true, \false, \Nat, \Str, \top\}$. The clause holds trivially.
  \item $\tyA = \bot$. There is a contradiction. Therefore, the clause holds vacuously.
  \item $\tyA = \funty{x}{\tyB}{\R}$. By definition, $\forall k \le n, w.\: \valinterp{\tyB}{w}[k] \metaimpl \expinterp{\R}{\ap{v}{w}}[k][\extenv{\env}{x}{w}]$.
    Assume $j \le m$ and $\valinterp{\tyB}{w}[j]$. Together with $j \le m \le n$, this gives $\expinterp{\R}{\ap{v}{w}}[j][\extenv{\env}{x}{w}]$. In other words,
    $\valinterp{\funty{x}{\tyB}{\R}}{v}[m]$.
  \item $\tyA = \uniontype{\tyB}{\tyC}$.
    $\valinterp{\tyB}{v}[n] \metaor \valinterp{\tyC}{v}[n]$. Proceed with those
    two cases. In the first case, by Component ~\ref{appendix:lem:srlmono-type} of the mutual IH,
    $\valinterp{\tyB}{v}[{m}]$. By definition,
    $\valinterp{\uniontype{\tyB}{\tyC}}{v}[{m}]$. In the other case,
    $\valinterp{\uniontype{\tyB}{\tyC}}{v}[{m}]$ follows reasoning analogous to
    that in the previous case.
  \end{caselist}
  \paragraph{For Clause~\ref{appendix:lem:srlmono-prop},} we case on the shape of $\prop$.
  \begin{caselist}
  \item $\prop = \TT$. The case trivially holds.
  \item $\prop = \eqprop{x}{y}$. By definition,
    $\propinterp{\eqprop{x}{y}}[m]$
  \item $\prop = \typeprop{x}{\tyA}$ and $\valinterp{\tyA}{v}$. By Clause \ref{appendix:lem:srlmono-type}, $\valinterp{\tyA}{v}[m]$. Therefore,
    $\propinterp{\typeprop{x}{\tyA}}[m]$
  \item $\prop = \nottypeprop{x}{\tyA}$ and $\notvalinterp{\tyA}{v}[l]$.
    Assume, for contradiction, that $\propinterp{\nottypeprop{x}{\tyA}}[m]$ does not hold, i.e. there exists $\betw{j}[1][{m}]$,
    \begin{align}
      \valinterp{\tyA}{v}[j] \label{appendix:eq:valinterptyavl-}
    \end{align}
    Since $j \le m \le n$, \cref{appendix:eq:valinterptyavl-} contradicts the premise $\notvalinterp{\tyA}{v}[l]$.
    Therefore, the contraction shows $\propinterp{\nottypeprop{x}{\tyA}}[m]$.
  \item $\prop = \conjprop{\prop_1}{\prop_2}$ and $\propinterp{\prop_1} \metaand \propinterp{\prop_2}$. By the mutual IH,
    $\propinterp{\prop_1}[m] \metaand \propinterp{\prop_2}[m]$. Therefore, $\propinterp{\conjprop{\prop_1}{\prop_2}}[m]$.
  \item $\prop = \disjprop{\prop_1}{\prop_2}$ and $\propinterp{\prop_1} \metaor \propinterp{\prop_2}$. Proceed with the two cases.
    In the first case, by the mutual IH, $\propinterp{\prop_1}[m]$.  Therefore, $\propinterp{\disjprop{\prop_1}{\prop_2}}[m]$.
    The other case follows reasoning analogous to that in the first case.
  \end{caselist}
  \paragraph{For Clause~\ref{appendix:lem:srlmono-ctx},} we case on the shape of $\ctx$, and then use the
  mutual IH accordingly.
\end{proof}

\NewDocumentCommand{\lemclause}{m}{\Cref{appendix:lem:prfsound}(\ref{appendix:#1})}
\begin{theorem}[Correctness of Subtyping and Proof Systems]\label[theorem]{appendix:lem:prfsound}
  Given $\allsat$, the following statements hold simultaneously:
  \begin{enumerate}
  \item \label{appendix:lem:semressub}
    If $\unexpr$ is closed, $\subtype{\ctx}{\R_a}{\R_b}$, and $\expinterp{\R_a}{\unexpr}$, then $\expinterp{\R_b}{\unexpr}$.
  \item \label{appendix:lem:semsub} If $ \subtype{\ctx}{\tyA}{\tyB}$ and $\valinterp{\tyA}{v}$, then $\valinterp{\tyB}{v}$.
  \item \label{appendix:lem:restrict} If $\restrictrel{\ctx}{\tyA}{\tyB}{\tyC}$, $ \valinterp{\tyA}{v}$, and $\valinterp{\tyB}{v}$, then $\valinterp{\tyC}{v}$.
  \item \label{appendix:lem:remove} if $\removerel{\ctx}{\tyA}{\tyB}{\tyC}$, $ \valinterp{\tyA}{v}$, and $\notvalinterp{\tyB}{v}$, then $\valinterp{\tyC}{v}$
  \item \label{appendix:lem:real1}
    if $\proves{\ctx}{\prop}$, then $\propinterp{\prop}$.
  \end{enumerate}
\end{theorem}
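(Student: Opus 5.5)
We prove all five clauses simultaneously by rule induction on the derivation of the syntactic judgment ($\subtype{\ctx}{\R_a}{\R_b}$, $\subtype{\ctx}{\tyA}{\tyB}$, the $\mathsf{restrict}$/$\mathsf{remove}$ judgments, or $\proves{\ctx}{\prop}$). We generalize each clause over the environment $\env$, the step index $n$ and the context, because several rules (\RULE{S-Fun}, \RULE{SR-Result}, \RULE{L-OrE}) extend $\ctx$ in a premise. First we dispatch the case $n=0$ for every clause, since all relations hold trivially there. For Clause~1 the only rule is \RULE{SR-Result}; we prove it by an inner induction on $n$ following the shape of $\expinterp{\R_a}{\unexpr}$. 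The stepping case is immediate from the inner hypothesis. In the value case we use Clause~2 for the type component. For the propositional component we observe that $\ctxinterp{\ctx\extb\posprop[1]}{\env}[n]$ holds exactly when $\allsat$ and $\propinterp{\posprop[1]}$ hold, so Clause~5 on the premise $\proves{\ctx\extb\posprop[1]}{\posprop[2]}$ yields $\propinterp{\posprop[2]}$; the negative proposition is handled the same way. For objects we do a small case split on the object subtyping rule. \RULE{SO-Equiv} uses Clause~5 to get $\env(x)=\env(y)$. \RULE{SO-Bot} is vacuous because no value has object $\botobj$. \RULE{SO-Top} is trivial. \RULE{SO-Expl} is handled by Clause~5, since $\propinterp{\FF}$ is false at $n\geq 1$.

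For Clause~2, \RULE{S-Top}, \RULE{S-Bot} and \RULE{S-Refl} are immediate, \RULE{S-Trans} composes the two hypotheses, and the union rules unfold the disjunctive definition. \RULE{S-Expl} again reduces to Clause~5 giving $\propinterp{\FF}$, which is a contradiction. The interesting case is \RULE{S-Fun}. Assume $\valinterp{\funty{x}{\tyB}{\R_b}}{v}$, $k\leq n$ and $\valinterp{\tyA}{w}[k]$. By \Cref{appendix:lem:srlmono} we have $\ctxinterp{\ctx}{\env}[k]$, so the IH for Clause~2 at index $k$ gives $\valinterp{\tyB}{w}[k]$, and hence $\expinterp{\R_b}{\ap{v}{w}}[k][\envw]$. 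To apply the IH for Clause~1 in the context $\ctxa$, we need $\ctxinterp{\ctxa}{\envw}[k]$. This follows from \Cref{appendix:lem:psubwk}, which weakens the context interpretation from $\env$ to $\envw$ (using that binders are unique), together with $\valinterp{\tyA}{w}[k]$. Clause~1 then yields $\expinterp{\R_a}{\ap{v}{w}}[k][\envw]$. Because the induction is on derivations rather than on $n$, this change of index causes no trouble.

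Clauses~3 and 4 are case analyses. \RULE{Restrict-Base} and \RULE{Restrict-BaseFun} hold because the value relations of distinct base types, and of a base type and a function type, are disjoint at index $n+1$: $v$ cannot be both a base value of one kind and of another, or both a base value and a $\lambda$/primitive. \RULE{Restrict-Union} and \RULE{Remove-Union} split on the union and apply the IH. \RULE{Restrict-Sym} swaps the hypotheses. \RULE{Restrict-Sub}, \RULE{Restrict-Other} and \RULE{Remove-Other} return a hypothesis directly. \RULE{Remove-Sub} is a contradiction: Clause~2 gives $\valinterp{\tyB}{v}[n]$, which contradicts $\notvalinterp{\tyB}{v}$ at $m=n$.

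Clause~5 goes rule by rule. The propositional connectives (\RULE{L-AndI}, the and-elimination rules, the or-introduction rules) unfold the definitions. \RULE{L-OrE} case-splits on the disjunction and reuses the context-extension observation. \RULE{L-Atom} uses a lookup lemma on $\ctxinterp{\ctx}{\env}$, and \RULE{L-Ann} works the same way on binding entries. \RULE{L-Bot} holds because $\valinterp{\bot}{\cdot}$ is false. \RULE{L-Sub} and \RULE{L-Restrict} follow from Clauses~2 and 3. \RULE{L-Remove} follows from Clause~4, taking the negative proposition at $m=n$. \RULE{L-SubNot} needs Clause~2 at every $m\leq n$, again via \Cref{appendix:lem:srlmono}. \RULE{L-Refl} is trivial. \RULE{L-Transp} uses \Cref{appendix:lem:psubrenaming}: from $\env(x)=\env(y)$ and $\propinterp{\substprop{\prop}{z}{x}}$ we obtain the result for $\substprop{\prop}{z}{y}$, going through the environment extended with $z$ and then strengthening with \Cref{appendix:lem:psubstr}. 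I expect \RULE{S-Fun} and \RULE{L-Transp} to be the main obstacles. Both require careful bookkeeping of environment extension, step indices and variable freshness, and for \RULE{L-Transp} I would first prove an auxiliary lemma stating that $\propinterp{\substprop{\prop}{z}{x}}$ holds if and only if $\prop$ holds in $\env$ extended with $z\mapsto\env(x)$.
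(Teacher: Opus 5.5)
Your proposal is correct and follows essentially the same route as the paper's proof. Both argue by simultaneous rule induction over the mutually defined judgments, generalized over $\env$, the step index and the context, using monotonicity for index changes, environment weakening in the \RULE{S-Fun} case, and a renaming argument for \RULE{L-Transp}. Your inner induction on $n$ for \RULE{SR-Result} just inlines the head-reduction and head-expansion lemmas the paper invokes; note that its stepping case needs monotonicity of $\ctxinterp{\ctx}{\env}$, which is available.

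Your version is, if anything, more complete than the paper's. You explicitly discharge \RULE{S-Expl} and \RULE{SO-Expl}, which the paper's case analysis omits. Your auxiliary lemma relating $\substprop{\prop}{z}{x}$ to $\prop$ under $z \mapsto \env(x)$ also covers the case where $x$ occurs free in $\prop$, as in the derivation of symmetry. A literal application of the renaming lemma to $\substprop{\prop}{z}{x}$ does not handle that case.
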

\begin{proof}
  \NewDocumentCommand{\bymih}{m m m}{%
    \StrSubstitute{#1}{,}{,appendix:}[\AppendixRefs]%
    Since \cref{appendix:\AppendixRefs}
    \IfSubStr{#1}{,}{are subderivations}{is a subderivation}, we can apply
    Component \ref{appendix:#2} of the mutual IH to
    \IfSubStr{#1}{,}{them respectively}{it}. Together
    with {#3}, this gives}

  By simultaneous rule induction on the derivations of the mutually defined judgments appearing in the premises of each statement.
  We proceed by cases on the last rule of each derivation.
  \paragraph{Cases for $\subtype{\ctx}{\R_a}{\R_b}$.} Assume the premise $\expinterp{R_a}{\unexpr}$ of the clause.
  By \Cref{appendix:lem:hdrd}, there are two cases:

  1. ${\hdrdpropa}$. \propaconseq.

  2. $\hdrdpropb[\R_a][v_o][k]$. The subgoal is to show
  $\valinterp{\semR_b}{v_o}[n-k]$. Let us proceed by rule induction
  on $\subtype{\ctx}{\R_a}{\R_b}$. The only rule is \myrule{SR-Result}. Let
  $R_a = \res[\tyC_a][\possynprop[a]][\negsynprop[a]][\obj_a]$ and
  $R_b = \res[\tyC_b][\possynprop[b]][\negsynprop[b]][\obj_b]$.  The premises
  are
  \begin{align}
      \subtype{\ctx}{\tyC_a}{\tyC_b}\label{appendix:eq:subtyp-exta-typepr-1} \\
      \proves{\ctx \extb \possynprop[a]}{\possynprop[b]} \label{appendix:eq:prov-exta-typepr}\\
      \proves{\ctx \extb \negsynprop[a]}{\negsynprop[b]}\label{appendix:eq:prov-exta-typepr-2} \\
      \subtype{\ctx}{\obj_a}{\obj_b} \label{appendix:eq:prov--2}
  \end{align}
  $\valinterp{\semR_a}{v_o}[n-k]$ denotes that
  \begin{align}
    & \propertyA[y_a]{\obj_a}{v_o}  \label{appendix:eq:-v_0-=}\\
    &\valinterp{\tyC_a}{v_o}[n-k] \label{appendix:eq:semr-a}\\
    &\propertyC{v_o}{a}{\env}{n-k} \label{appendix:eq:metaif-u_0-neq}
  \end{align}

  By \Cref{appendix:lem:srlmono} on $\allsat$ and $n - k \le n$,
  \begin{align}
    \ctxinterp{\ctx}{\env}[n-k] \label{appendix:eq:ctxinterpctxenvn-k-}
  \end{align}
  \bymih{eq:subtyp-exta-typepr-1}{lem:semsub}{\cref{appendix:eq:ctxinterpctxenvn-k-,appendix:eq:semr-a}}
  \begin{align}
    \valinterp{\tyC_b}{v_0}[n-k]\label{appendix:eq:semr-}
  \end{align}

  if $\obj_a \neq \topobj$, from \cref{appendix:eq:prov--2}, then either
  \begin{align}
    \obj_b = \topobj \label{appendix:eq:obj_b-=-nullo}
  \end{align}
  or
  \begin{align}
    \obj_b = y_b \label{appendix:eq:o_a-neq-nullo} \\
    \proves{\ctx}{\eqprop{y_b}{y_a}} \label{appendix:eq:prov--5}
  \end{align}
  \bymih{eq:prov--5}{lem:real1}{\cref{appendix:eq:ctxinterpctxenvn-k-}}
  \begin{align*}
    \propinterp{(\eqprop{y_b}{y_a})}[n-k]
  \end{align*}
  By definition,
  \begin{align*}
    \env{(\obj_a)} = \env({\obj_b})
  \end{align*}
  Since $\obj_a \neq \topobj$, \cref{appendix:eq:-v_0-=} gives
  \begin{align}
    v_o = \env({y_a}) = \env({y_b})
  \end{align}

  Together with \cref{appendix:eq:obj_b-=-nullo}, this gives
  \begin{align}
    \propertyA[y_b]{\obj_b}{v_o} \label{appendix:eq:extenv-=-extenv}
  \end{align}

  Then proceed by cases on \cref{appendix:eq:metaif-u_0-neq}.
  \begin{caselist}
  \item \( v_o \notveq \false \) and $\propinterp{\possemprop[a]}$
    Together with \cref{appendix:eq:ctxinterpctxenvn-k-}, by definition, this gives
    \begin{align}
      \sat{\env}{\ctx \extb \posprop[a]}[n-k] \label{appendix:eq:satp-}
    \end{align}
    \bymih{eq:prov-exta-typepr}{lem:real1}{\cref{appendix:eq:satp-}}
    \begin{align*}
      \propinterp{\possemprop[b]}[n-k]
    \end{align*}
  \item $v_o \veq \false$ and $\propinterp{\negprop[a]}$.
    Together with \cref{appendix:eq:ctxinterpctxenvn-k-}, by definition, this gives
    \begin{align}
      \sat{\env}{\ctx \extb \negprop[a]}[n-k] \label{appendix:eq:satn-}
    \end{align}
    \bymih{eq:prov-exta-typepr-2}{lem:real1}{\cref{appendix:eq:satn-}}
    \begin{align*}
      \propinterp{\negprop[b]}[n-k]
    \end{align*}
  \end{caselist}

  Therefore, we have proved
  \begin{align}
    \propertyC{v_o}{b}{\env}{n-k} \label{appendix:eq:satp--3}
  \end{align}

  By \cref{appendix:eq:semr-,appendix:eq:extenv-=-extenv,appendix:eq:satp--3},$\valinterp{\synR_b}{v_o}[n-k]$. Together with $\neestepsto{\unexpr}{v_o}[k]$, by \Cref{appendix:lem:hdxp}, $\expinterp{\synR_b}{\unexpr}[n]$.
  \paragraph{Cases for $\subtype{\ctx}{\tyA}{\tyB}$.}
  \newcommand{\bdf}[1]{By definition of $\valinterp{#1}{\_}$}
  \begin{caselist}
  \item \myrule{S-Refl}. $\tyA = \tyB$. $\valinterp{\tyA}{v}$ from the premise.
  \item \myrule{S-Top}. $\tyB = \top$. By definition, $\valinterp{\top}{v}$.
  \item \myrule{S-Bot}. $\tyA = \bot$. The premise $\valinterp{\bot}{v}$ leads to a contradiction. Therefore, this case vacuously holds.
  \item \myrule{S-Trans}. The premises of the last rule and the clause are
    \begin{align}
      \subtype{\ctx}{\tyA}{\tyC}\label{appendix:eq:subtypectxtyatyc-1} \\
      \subtype{\ctx}{\tyC}{\tyB}\label{appendix:eq:subtypectxtyctyb-2} \\
      \valinterp{\tyA}{v} \label{appendix:eq:subtyp-valint-}
    \end{align}
    \bymih{eq:subtypectxtyatyc-1}{lem:semsub}{$\allsat$ and \cref{appendix:eq:subtyp-valint-}}
    \begin{align}
      \valinterp{\tyC}{v} \label{appendix:eq:valinterptybu-1}
    \end{align}
    \bymih{eq:subtypectxtyctyb-2}{lem:semsub}{$\allsat$ and \cref{appendix:eq:valinterptybu-1}}
    \begin{align*}
      \valinterp{\tyB}{v}
    \end{align*}
  \item \myrule{S-UnionSub}. $A = \uniontype{\tyA_1}{\tyA_2}$. The premises of the last rule and the clause are
    \begin{align}
      \subtype{\ctx}{\tyA_1}{\tyB} \label{appendix:eq:subtypectxtya}\\
      \subtype{\ctx}{\tyA_2}{\tyB} \label{appendix:eq:subtypectxtya_2tyb} \\
      \valinterp{\uniontype{\tyA_1}{\tyA_2}}{v} \label{appendix:eq:valint-}
    \end{align}
    \cref{appendix:eq:valint-} gives
    \begin{align*}
      \valinterp{\tyA_1}{v} \lor \valinterp{\tyA_2}{v}
    \end{align*}
    Proceed with the cases.
    \begin{caselist}
    \item $\valinterp{\tyA_1}{v}$.
      \bymih{eq:subtypectxtya}{lem:semsub}{$\allsat$ and $\valinterp{\tyA_1}{v}$}
      \begin{align*}
        \valinterp{\tyB}{v}
      \end{align*}
    \item $\valinterp{\tyA_2}{v}$.
      \bymih{eq:subtypectxtya_2tyb}{lem:semsub}{$\allsat$ and $\valinterp{\tyA_2}{v}$}
      \begin{align*}
        \valinterp{\tyB}{v}
      \end{align*}
    \end{caselist}
    In both cases, we have proved $\valinterp{\tyB}{v}$.
  \item \myrule{S-UnionSuper1}. $B = \uniontype{\tyB_1}{\tyB_2}$. The premises of the last rule and the clause are
    \begin{align}
      \subtype{\ctx}{\tyA}{\tyB_1} \label{appendix:eq:subtypectxtya1} \\
      \valinterp{\tyA}{v}\label{appendix:eq:valinterptyav-}
    \end{align}
    \bymih{eq:subtypectxtya1}{lem:semsub}{$\allsat$ and \cref{appendix:eq:valinterptyav-}}
    \begin{align*}
      \valinterp{\tyB_1}{v}
    \end{align*}
    By definition, $\valinterp{(\uniontype{\tyB_1}{\tyB_2})}{v}$
  \item \myrule{S-UnionSuper2}. $B = \uniontype{\tyB_1}{\tyB_2}$. The premise is
    \begin{align}
      \subtype{\ctx}{\tyA}{\tyB_2} \label{appendix:eq:subtypectxtya2tyb2}
    \end{align}
    We follow similar reasoning here to
    conclude $\valinterp{\tyB_2}{v}$. Then
    $\valinterp{(\uniontype{\tyB_1}{\tyB_2})}{v}$ naturally follows.
  \item \myrule{S-Fun}. $A = \funty{x}{\tyA_1}{\synR_a}, B = \funty{x}{\tyB_1}{\synR_b}$, and the premises of the last rule are
    \newcommand{\sfungoal}{\forall k \le n, w.\: \: \allowbreak{} \valinterp{\tyB_1}{w}[k] \metaimpl \expinterp{\semR_b}{\ap{v}{w}}[k][\envw]}
    \newcommand{\ctxextall}{\ctx \exta \typeprop{x}{\tyB_1}
      \extaall{\typeprop{\topobj_i}{{\tyB}'_i}}}
    \newcommand{\envextall}{\extenvall{\envw}{\topobj_{i}}{v_{i}}}
    \begin{align}
      \subtype{\ctx}{\tyB_1}{\tyA_1} \label{appendix:eq:subtyp}\\
      \subtype{\ctx \exta \bindvar[{x}][{\tyB_1}]}{\synR_a}{\synR_b} \label{appendix:eq:subtyp-exta-typepr}
    \end{align}
    Given the premise $\valinterp{(\funty{x}{\tyA_1}{\synR_a})}{v}$ of the clause, by
    definition, $v = \abs{x}[\tyA]{\semexpr_b} $ or $v = \op$. In other
    words,
    \begin{align}
      \forall k \le n, w.\: \metaif\valinterp{\tyA_1}{w}[k] \metaimpl \expinterp{\semR_a}{\ap{v}{w}}[k][\envw] \label{appendix:eq:forall-u.-meta}
    \end{align}
    Then we need to show $\valinterp{(\funty{x}{\tyB_1}{\semR_b})}{v}$. In other words, we need to show $\sfungoal$.
    Let $w$ be an arbitrary value, $k$ be a natural number and $k \le n$. We assume
    \begin{align}
      \valinterp{\tyB_1}{w}[k] \label{appendix:eq:semr-1--1}
    \end{align}
    \bymih{eq:subtyp}{lem:semsub}{$\allsat$ and \cref{appendix:eq:semr-1--1}}
    \begin{align}
      \valinterp{\tyA_1}{w}[k] \label{appendix:eq:valinterptya_1wn-1-}
    \end{align}
    By applying \cref{appendix:eq:forall-u.-meta} to $k \le n$ and \cref{appendix:eq:valinterptya_1wn-1-}
    \begin{align}
      \expinterp{\semR_a}{\ap{v}{w}}[k][\envw] \label{appendix:eq:semr-1-}
    \end{align}

    By \Cref{appendix:lem:psubwk} on \cref{appendix:eq:valinterptya_1wn-1-},
    \begin{align}
      \valinterp{\tyA_1}{w}[k][\envw] \label{appendix:eq:valint--1}
    \end{align}
    By applying \Cref{appendix:lem:srlmono} on $\allsat$ and $k \le n$,
    \begin{align*}
      \ctxinterp{\ctx}{\env}[k]
    \end{align*}
    Together with \cref{appendix:eq:valint--1}, this gives
    \begin{align}
      \ctxinterp{\ctx \exta \bindvar[{x}][{\tyA_1}]}{\envw}[k]\label{appendix:eq:satenvwctx-exta-type}
    \end{align}

    \bymih{eq:subtyp-exta-typepr}{lem:semressub}{\cref{appendix:eq:semr-1-,appendix:eq:satenvwctx-exta-type}}
    $\expinterp{\semR_b}{\ap{v}{w}}[k][\envw]$.
    In other words, we have proved

    $\sfungoal$
  \end{caselist}
  This completes the cases for the judgment $\subtype{\ctx}{\tyA}{\tyB}$,
  thereby establishing Statement \ref{appendix:lem:semsub}.

  \paragraph{Cases for $\restrictrel{\ctx}{\tyA}{\tyB}{\tyC}$.}
  \begin{caselist}
  \item \myrule{Restrict-Base}. The premises are
    \begin{align*}
      \tyA, \tyB \in \{\true,\false,\Nat,\Str\} \\
      \tyB \neq \tyA
    \end{align*}
    By definition, we know that if $\valinterp{\tyA}{v}$ and
    $\tyA \in \{\true,\false,\Nat,\Str\}$, then $v = \true, \false, \nat $ and $ \str$
    respectively. Because $\tyA \neq \tyB$, $\negate \valinterp{\tyB}{v}$, which
    contradicts $\valinterp{\tyB}{v}$. Similarly, If $\valinterp{\tyB}{v}$,
    $\negate \valinterp{\tyA}{v}$, which contradicts $\valinterp{\tyA}{v}$. In
    both cases, we have derived a contraction. Henceforth, the case holds
    vacuously
  \item \myrule{Restrict-BaseFun}. $B = \funty{x}{C}{\R}$. The premise is
    $\tyA \in \{\true,\false,\Nat,\Str\}$.  By definition, if
    $\valinterp{\tyA}{v}$, then $\negate \valinterp{\funty{x}{C}{\R}}{v}$, which
    contradicts the premise $\valinterp{\funty{x}{C}{\R}}{v}$. Henceforth, the
    case holds vacuously.
  \item \myrule{Restrict-Union}. The premises of the last rule are
    \begin{align}
      \restrictrel{\ctx}{\tyA}{\tyB}{\tyC} \label{appendix:eq:restrict-union-left}\\
      \restrictrel{\ctx}{\tyD}{\tyB}{\tyE} \label{appendix:eq:restrict-union-right}
    \end{align}
    We also know $\valinterp{\uniontype{\tyA}{\tyD}}{v}$ and $\valinterp{\tyB}{v}$.
    We need to show
    \begin{align*}
      \valinterp{(\uniontype{\tyC}{\tyE})}{v}
    \end{align*}

    By definition, $\valinterp{\uniontype{\tyA}{\tyD}}{v}$ gives
    \begin{align*}
      \valinterp{\tyA}{v} \metaor \valinterp{\tyD}{v}
    \end{align*}

    Proceed by cases.
    \begin{caselist}
    \item \label{appendix:cccc1}
      $\valinterp{\tyA}{v}$.
      \bymih{eq:restrict-union-left}{lem:restrict}{$\allsat$, $\valinterp{\tyA}{v}$, and $\valinterp{\tyB}{v}$}
      \begin{align}
        \valinterp{\tyC}{v}
      \end{align}

      Then by definition,
      \begin{align}
        \valinterp{(\uniontype{\tyC}{\tyE})}{v}
      \end{align}
    \item $\valinterp{\tyD}{v}$. $\valinterp{(\uniontype{\tyC}{\tyE})}{v}$
      follows by reasoning analogous to that in Case \ref{appendix:cccc1}.
    \end{caselist}
  \item \myrule{Restrict-Sym}
    The premise of the last rule is $\restrictrel{\ctx}{\tyB}{\tyA}{\tyC}$.
    Then by the mutual IH.
  \item \myrule{Restrict-Sub}.
    The case holds by the given premise $\valinterp{\tyA}{v}$.
  \item \myrule{Restrict-Other}.
    The case holds by the given premise $\valinterp{\tyB}{v}$.
  \end{caselist}
  \paragraph{Cases for $\removerel{\ctx}{\tyA}{\tyB}{\tyC}$.}

  \begin{caselist}
  \item \myrule{Remove-Union}.
    $\tyA = \uniontype{\tyC}{\tyD}$ and, the premises of the clause and the last rule are
    \begin{align}
      \valinterp{\uniontype{\tyC}{\tyD}}{v} \label{appendix:eq:valint--notv} \\
      \notvalinterp{\tyB}{v} \label{appendix:eq:notvalinterptybv--} \\
      \removerel{\ctx}{\tyC}{\tyB}{\tyE_1} \label{appendix:eq:remove-union-left}\\
      \removerel{\ctx}{\tyD}{\tyB}{\tyE_2} \label{appendix:eq:remove-union-right}
    \end{align}
    Then we need to show
    \begin{align*}
      \valinterp{(\uniontype{\tyE_1}{\tyE_2})}{v}
    \end{align*}

    By definition, \cref{appendix:eq:valint--notv} gives
    \begin{align*}
      \valinterp{\tyC}{v} \metaor \valinterp{\tyD}{v}
    \end{align*}

    Proceed by cases.
    \begin{caselist}
    \item  \label{appendix:sec:caseaaa}
      $\valinterp{\tyC}{v}$.
      \bymih{eq:remove-union-left}{lem:remove}{$\allsat$, $\valinterp{\tyC}{v}$, \cref{appendix:eq:notvalinterptybv--}}
      \begin{align}
        \valinterp{\tyE_1}{v}
      \end{align}

      Then by definition,
      \begin{align}
        \valinterp{(\uniontype{\tyE_1}{\tyE_2})}{v}
      \end{align}
    \item
      $\valinterp{\tyD}{v}$. The case holds by reasoning analogous to Case \ref{appendix:sec:caseaaa}.
    \end{caselist}
  \item \myrule{Remove-Sub}.
    The premises of the last rule and the clause  are
    \begin{align}
      \valinterp{\tyA}{v} \label{appendix:eq:valint--notv1} \\
      \notvalinterp{\tyB}{v} \label{appendix:eq:notv-subtyp-} \\
      \subtype{\ctx}{\tyA}{\tyB} \label{appendix:eq:remove-sub-subtype}
    \end{align}

    \bymih{eq:remove-sub-subtype}{lem:semsub}{$\allsat$ and \cref{appendix:eq:valint--notv1}}
    \begin{align*}
      \valinterp{\tyB}{v}
    \end{align*}
    which contradicts the premise $\notvalinterp{\tyB}{v}$. Therefore the case vacuously holds.
  \item \myrule{Remove-Other}. This case holds by the premise $\valinterp{\tyA}{v}$.
  \end{caselist}
  \paragraph{Cases for $\proves{\ctx}{\prop}$.}
  \def\ctxextzero{\ctx' \cdot \ctxele[x][\tyA][[]]}
  \begin{caselist}
  \item L-True. By definition, $\propinterp{\TT}$.

  \item L-False. The premise of the last rule is
    \begin{align}
      \proves{\ctx}{\FF}
      \label{appendix:eq:provesctxff-}
    \end{align}
    \bymih{eq:provesctxff-}{lem:real1}{$\allsat$} $\propinterp{\FF}$, which does
    not hold. Therefore the case vacuously holds.

  \item L-Atom. The premise of the last rule is $\lookup{\ctx}{\prop}$. Given $\sat{\env}{\ctx}$, by definition,
    $\propinterp{\prop}$
  \item L-Ann. The premise of the last rule is $\lookup{\ctx}{(\bindvar)}$. Given $\sat{\env}{\ctx}$, by definition,
    $\valinterp{\tyA}{\env(x)}$. Then by definition, $\propinterp{(\typeprop{x}{\tyA})}$
  \item L-AndI. The premises of the last rule are
    \begin{align}
      \proves{\ctx}{\prop_1} \label{appendix:eq:prov-prov-} \\
      \proves{\ctx}{\prop_2}\label{appendix:eq:provesctxprop_2-}
    \end{align}
    \bymih{eq:prov-prov-,eq:provesctxprop_2-}{lem:real1}{$\allsat$} $\propinterp{\prop_1}$ and
    $\propinterp{\prop_2}$ respectively. Then by definition,
    $\propinterp{\conjprop{\prop_1}{\prop_2}}$
  \item \label{appendix:sec:1} L-OrI1. The premise of the last rule is
    \begin{align}
      \label{appendix:eq:provesctxprop_1-}
      \proves{\ctx}{\prop_1}
    \end{align}
    \bymih{eq:provesctxprop_1-}{lem:real1}{$\allsat$}
    $\propinterp{\prop_1}$. Then by definition,
    $\propinterp{\disjprop{\prop_1}{\prop_2}}$.
  \item L-OrI2. The case holds by reasoning analogous to that in Case \ref{appendix:sec:1}.
  \item \label{appendix:sec:2} L-AndE1. The premise of the last rule is
    \begin{align}
      \label{appendix:eq:prov--6}
      \proves{\ctx}{\conjprop{\prop_1}{\prop_2}}
    \end{align}
    \bymih{eq:prov--6}{lem:real1}{$\allsat$} $\propinterp{\conjprop{\prop_1}{\prop_2}}$.
    Then by definition,
    $\propinterp{\prop_1}$.
  \item L-AndE2. The case holds by reasoning analogous to that in Case \ref{appendix:sec:2}.
  \item \myrule{L-OrE}. The premises of the last rule are
    \begin{align}
      \wfprop{\prop} \label{appendix:eq:wfpr-prov-} \\
      \proves{\ctx}{\disjprop{\prop_1}{\prop_2}} \label{appendix:eq:prov-2}\\
      \proves{\ctx \extb \prop_1}{\prop} \label{appendix:eq:prov-extb-prop_1pr}\\
      \proves{\ctx \extb \prop_2}{\prop} \label{appendix:eq:prov-extb-prop_2pr}
    \end{align}
    \bymih{eq:prov-2}{lem:real1}{$\allsat$}
    $\propinterp{(\disjprop{\prop_1}{\prop_2})}$. By definition,
    $\propinterp{\disjprop{\prop_1}{\prop_2}}$, i.e.
    $\propinterp{\prop_1} \metaor \propinterp{\prop_2}$. Let us proceed with the two cases.

    In the first case, $\propinterp{\prop_1}$ and $\ctxinterp{\ctx}{\env}$. Then by definition
    \begin{align}
      \sat{\env}{\ctx \extb \prop_1}
      \label{appendix:eq:satenvctx-extb-synpr}
    \end{align}
    \bymih{eq:prov-extb-prop_1pr}{lem:real1}{\cref{appendix:eq:satenvctx-extb-synpr}}
    $\propinterp{\prop}$.

    In the other case, $\propinterp{\prop_2}$ and
    $\ctxinterp{\ctx}{\env}$. Then by definition,
    \begin{align}
      \sat{\env}{\ctx \extb \prop_2}
      \label{appendix:eq:satenvctx-extb-synpr-2}
    \end{align}
    \bymih{eq:prov-extb-prop_2pr}{lem:real1}{\cref{appendix:eq:satenvctx-extb-synpr-2}}
    $\propinterp{\prop}$.
  \item L-Sub: The premises of the last rule are
    \begin{align}
      \proves{\ctx}{\typeprop{x}{\tyA}} \label{appendix:eq:prov-3} \\
      \subtype{\ctx}{\tyA}{\tyB} \label{appendix:eq:lem1lsub0}
    \end{align}
    \bymih{eq:prov-3}{lem:real1}{$\allsat$}
    $\propinterp{\typeprop{x}{\tyA}}$. By definition,
    \begin{align}
      \valinterp{\tyA}{\env(x)} \label{appendix:eq:lem1lsub2}
    \end{align}
    \bymih{eq:lem1lsub0}{lem:semsub}{$\allsat$ and \cref{appendix:eq:lem1lsub2}}
    \begin{align}
      \valinterp{\tyB}{\env(x)} \label{appendix:eq:lem1lsub1}
    \end{align}
    By definition, $\propinterp{\typeprop{x}{\tyB}}$.
  \item L-SubNot. The premises of the last rule are
    \begin{align}
      \proves{\ctx}{\nottypeprop{x}{\tyA}} \label{appendix:eq:prov-} \\
      \subtype{\ctx}{\tyB}{\tyA} \label{appendix:eq:lem1lnotsub0}
    \end{align}

    Assume, for contradiction, that
    $\metanot \propinterp{\nottypeprop{x}{\tyB}}$, i.e. there exists
    $\betw{l}[1][n]$ such that
    \begin{align}
      \valinterp{\tyB}{\env(x)}[l]\label{appendix:eq:valinterptybenvxl-}
    \end{align}
    \bymih{eq:prov-}{lem:real1}{$\allsat$}
    $\propinterp{\nottypeprop{x}{\tyA}}$. By definition,
    \begin{align}
      \notvalinterp{\tyA}{\env(x)} \label{appendix:eq:metan-semr-}
    \end{align}
    Because $l \le n$, by \Cref{appendix:lem:srlmono} on $\allsat$,
    \begin{align}
      \ctxinterp{\ctx}{\env}[l] \label{appendix:eq:ctxinterpctxenvl-}
    \end{align}
    \bymih{eq:lem1lnotsub0}{lem:semsub}{\cref{appendix:eq:ctxinterpctxenvl-,appendix:eq:valinterptybenvxl-}}
    \begin{align}
      \valinterp{\tyA}{\env(x)}[l] \label{appendix:eq:lem1lnotsub1}
    \end{align}
    Because $l \le n$,  this contradicts \cref{appendix:eq:metan-semr-}.
  \item L-Bot. The premises of the last rule is
    \begin{align}
      \proves{\ctx}{\typeprop{x}{\bot}}\label{appendix:eq:prov--7}
    \end{align}
    \bymih{eq:prov--7}{lem:real1}{$\allsat$} $\propinterp{\typeprop{x}{\bot}}$. By definition,
    $\valinterp{\bot}{\env(x)}$ leads to a contradiction.
    Therefore, this case holds vacuously.
  \item \myrule{L-Refl}. The premise of the last rule is $x \in \dom{\ctx}$. Since $\allsat$,
    $x \in \dom{\env}$. Then By definition,
    $\propinterp{(\eqprop{x}{x})}$.
  \item \myrule{L-Transp}. The premises of the last rule are
    \begin{align}
      \wfprop[{\ctx \exta \bindvar[{z}][{\top}]}]{\prop} \\
      \proves{\ctx}{\substprop{\prop}{z}{x}} \label{appendix:eq:provesctxprop-} \\
      \proves{\ctx}{\eqprop{x}{y}} \label{appendix:eq:prov--1}
    \end{align}
    \bymih{eq:prov--1}{lem:real1}{$\allsat$}
    \begin{align*}
      \propinterp{\eqprop{x}{y}}
    \end{align*}
    By definition,
    \begin{align}
      \env(x) = \env(y) \label{appendix:eq:satpenveqpropo_1o_2-}
    \end{align}
    \bymih{eq:provesctxprop-}{lem:real1}{$\allsat$}
    \begin{align}
      \propinterp{\substprop{\prop}{z}{x}}\label{appendix:eq:satpenvprop-}
    \end{align}
    By applying \Cref{appendix:lem:psubrenaming} to \cref{appendix:eq:satpenveqpropo_1o_2-,appendix:eq:satpenvprop-},
    $\propinterp{\substprop{\prop}{z}{y}}$.
  \item \myrule{L-Restrict}.
    The premises of the last rule are
    \begin{align}
      \proves{\ctx}{\typeprop{x}{\tyA}} \label{appendix:eq:prov--3}\\
      \proves{\ctx}{\typeprop{x}{\tyB}} \label{appendix:eq:prov--4}\\
      \restrictrel{\ctx}{\tyA}{\tyB}{\tyC} \label{appendix:eq:restrictrel-}
    \end{align}

    \bymih{eq:prov--3,eq:prov--4}{lem:real1}{$\allsat$}
    \begin{align*}
      \propinterp{\typeprop{x}{\tyA}} \\
      \propinterp{\typeprop{x}{\tyB}}
    \end{align*}

    By definition,
    \begin{align*}
      \valinterp{\tyA}{\env(x)} \\
      \valinterp{\tyB}{\env(x)}
    \end{align*}

    Let $v = \env(x)$.
    \begin{align}
      \valinterp{\tyA}{v} \label{appendix:eq:valinterppsubtyav-}\\
      \valinterp{\tyB}{v} \label{appendix:eq:valinterppsubtybv-}
    \end{align}
    Then our goal is to show $\propinterp{\typeprop{x}{\tyC}}$.

    \bymih{eq:restrictrel-}{lem:restrict}{$\allsat$ and \cref{appendix:eq:valinterppsubtyav-,appendix:eq:valinterppsubtybv-}}
    \begin{align*}
      \valinterp{\tyC}{v}
    \end{align*}

    By definition,
    \begin{align*}
      \propinterp{\typeprop{x}{\tyC}}
    \end{align*}
  \item \myrule{L-Remove}.
    The premises of the last rule are
    \begin{align}
      \proves{\ctx}{\typeprop{x}{\tyA}} \label{appendix:eq:prov--31}\\
      \proves{\ctx}{\nottypeprop{x}{\tyB}} \label{appendix:eq:prov--41}\\
      \removerel{\ctx}{\tyA}{\tyB}{\tyC} \label{appendix:eq:removerel-}
    \end{align}

    \bymih{eq:prov--31,eq:prov--41}{lem:real1}{$\allsat$}
    \begin{align*}
      \propinterp{\typeprop{x}{\tyA}} \\
      \propinterp{\nottypeprop{x}{\tyB}}
    \end{align*}

    By definition,
    \begin{align*}
      \valinterp{\tyA}{\env(x)} \\
      \notvalinterp{\tyB}{\env(x)}
    \end{align*}

    Let $v = \env(x)$.

    \begin{align}
      \valinterp{\tyA}{v} \label{appendix:eq:valinterppsubtyav-1}\\
      \notvalinterp{\tyB}{v} \label{appendix:eq:valinterppsubtybv-1}
    \end{align}

    Then our goal is to show $\propinterp{\typeprop{v}{\tyC}}$

    \bymih{eq:removerel-}{lem:remove}{$\allsat$ and \cref{appendix:eq:valinterppsubtyav-1,appendix:eq:valinterppsubtybv-1}}
    \begin{align*}
      \valinterp{\tyC}{v}
    \end{align*}

    By definition,
    \begin{align*}
      \propinterp{\typeprop{x}{\tyC}}
    \end{align*}
  \end{caselist}
\end{proof}

\begin{lemma}[Erasure Preserves Logical Relations] \hfill \label{appendix:lem:erase-prev-rel} Given $\allsat$, the following hold simultaneously:
  \begin{enumerate}
    \item \label{appendix:lem:erprev-type} if  $\wftype{\tyA}$ and $\valinterp{\tyA}{v}$, then $\valinterp{\osubst{\tyA}[x][\topobj]}{v}$.
    \item \label{appendix:lem:erprev-prop} if $\wfprop{\prop}$ and $\propinterp{\prop}$, then $\propinterp{\osubst{\prop}[x][\topobj]}$.
  \end{enumerate}
\end{lemma}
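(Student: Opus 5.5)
The plan is to avoid a new induction over types and propositions. Instead, I would obtain both clauses by combining the syntactic \Cref{appendix:lem:subtyping-erase} with the semantic \Cref{appendix:lem:prfsound}. The latter already transports syntactic entailments into the logical relations for any environment satisfying the context. Throughout, I keep the presupposition $\wfctx$ and the hypothesis $\allsat$ fixed.

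For Clause~\ref{appendix:lem:erprev-type}, apply \Cref{appendix:lem:subtyping-erase}(1) to $\wftype{\tyA}$. This yields $\subtype{\ctx}{\tyA}{\poserase{\tyA}}$, and $\poserase{\tyA}$ is exactly $\osubst{\tyA}[x][\topobj]$. Then apply \Cref{appendix:lem:prfsound}(\ref{appendix:lem:semsub}) to this subtyping derivation, together with $\allsat$ and $\valinterp{\tyA}{v}$, to conclude $\valinterp{\osubst{\tyA}[x][\topobj]}{v}$. The one side condition is the presupposition that $\osubst{\tyA}[x][\topobj]$ is well-formed in $\ctx$. Positive erasure only removes or replaces occurrences of $x$ and otherwise preserves structure, so this should follow from $\wftype{\tyA}$ by a routine induction if it is not already packaged into the erasure lemma.

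For Clause~\ref{appendix:lem:erprev-prop}, a semantic entailment is needed rather than a syntactic one, so I would pass through the extended context $\ctx \extb \prop$:
\begin{enumerate}
\item Since $\wfctx$ and $\wfprop{\prop}$, the context $\ctx\extb\prop$ is well-formed, and $\dom{\ctx\extb\prop} = \dom{\ctx}$.
\item By \Cref{appendix:lem:wf-weakening}, $\wfprop[\ctx\extb\prop]{\prop}$.
\item By \myrule{L-Atom}, $\proves{\ctx\extb\prop}{\prop}$.
\item By \Cref{appendix:lem:subtyping-erase}(3.2), $\proves{\ctx\extb\prop}{\poserase{\prop}}$.
\item From $\allsat$ and $\propinterp{\prop}$, by definition of the context relation, $\sat{\env}{\ctx\extb\prop}$. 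When $n=0$ only the domain condition is required, and it is inherited because the domains agree.
\item By \Cref{appendix:lem:prfsound}(\ref{appendix:lem:real1}), $\propinterp{\poserase{\prop}}$, which is the goal.
\end{enumerate}

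I expect the main obstacle to be the well-formedness bookkeeping rather than any semantic argument. The invocations of \Cref{appendix:lem:subtyping-erase} and \Cref{appendix:lem:prfsound} presuppose that the erased type and proposition are well-formed in $\ctx$ (respectively $\ctx\extb\prop$). I would discharge this with a small auxiliary induction: if $\wftype{\tyA}$ then $\wftype{\osubst{\tyA}[x][\topobj]}$ and $\wftype{\osubst{\tyA}[x][\botobj]}$, and similarly for propositions, results and objects. The cases follow the clauses of \Cref{fig:objsubst}, and the binder case uses the unique-names presupposition, so that $x \neq y$ under $\funty{y}{\tyB}{\R}$. If that auxiliary induction proved awkward, the fallback is a direct simultaneous induction on $\tyA$ and $\prop$ against the definitions of the logical relations. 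In that fallback, the function case would appeal to the negative-erasure analogue through \Cref{appendix:lem:prfsound} applied to \myrule{S-Fun}, which is essentially what the subtyping route already packages.
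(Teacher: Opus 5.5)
Your proposal is correct and is essentially the paper's proof: Clause 1 takes $\Gamma \vdash A <: A[\top/x]$ from the erasure-subtyping lemma and feeds it to the subtyping clause of the correctness theorem, and Clause 2 passes to the extended context $\Gamma\cdot\psi$, whose satisfaction follows from $\rho \models \Gamma$ and the truth of $\psi$, and then applies the proof-system clause. Your explicit uses of L-Atom and well-formedness weakening fill in steps the paper leaves implicit, and your version avoids the extraneous hypothesis $\Gamma \vdash \psi$ that appears in the paper's write-up.
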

\begin{proof} We prove each clause separately.
  \paragraph{Clause 1.} Assume $\allsat$ and $\wftype{\tyA}$. By \Cref{appendix:lem:subtyping-erase}, $\subtype{\ctx}{\tyA}{\poserase{\tyA}}$. Then by \Cref{appendix:lem:prfsound}, $\valinterp{\poserase{\tyA}}{v}$.

  \paragraph{Clause 2.} Assume $\allsat$, $\propinterp{\prop}$ and $\proves{\ctx}{\prop}$. Because of the presuppositions, $\wfprop{\prop}$. By \Cref{appendix:lem:subtyping-erase}, $\proves{\ctx \exta \prop}{\poserase{\prop}}$. Since $\propinterp{\prop}$, by definition $\ctxinterp{\ctx \exta \prop}{\env}$. Then by \Cref{appendix:lem:prfsound}, $\propinterp{\poserase\prop}$.
\end{proof}

\begin{lemma}[Object Substitution Preserves Logical Relations]\label{appendix:lem:obj-subst-logrel}
  Given $\allsat$, if $\wftype{\tyC}, \wfprop{\posprop}, \wfprop{\negprop}, \wfobj{\obj_r}$,
  $\valinterpres{\res[\tyC][\posprop][\negprop][\obj_r]}{v_o}$, and either $\obj = \topobj$ or $\env(y) \alphaeq \env(x)$ where $\obj = y$ for some $y$, then
  $\valinterpres{\osubst{\res[\tyC][\posprop][\negprop][\obj_r]}[x][\obj]}{v_o}$.
\end{lemma}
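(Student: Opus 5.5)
The plan is to first dispose of the trivial step index and then split on the shape of $\obj$, exactly as the hypothesis suggests. If the index is $0$, then $\valinterpres{\osubst{\res[\tyC][\posprop][\negprop][\obj_r]}[x][\obj]}{v_o}[0]$ holds by definition. So assume the index is a successor $n$. Note that since $v_o$ is a value, $\expinterp{\R}{v_o}$ and $\valinterpres{\R}{v_o}$ coincide at positive indices. This lets me move freely between the expression-level statements of earlier results and the value-level statement here.

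\emph{Case $\obj = \topobj$.} Here $\osubst{\R}[x][\topobj]$ is the positive erasure $\poserase{\R}$. From $\wftype{\tyC}$, $\wfprop{\posprop}$, $\wfprop{\negprop}$ and $\wfobj{\obj_r}$, I obtain $\wfres{\res[\tyC][\posprop][\negprop][\obj_r]}$. Clause 2 of \Cref{appendix:lem:subtyping-erase} then yields $\subtype{\ctx}{\R}{\poserase{\R}}$. Because $v_o$ is closed and $\expinterp{\R}{v_o}$ holds, clause~\ref{appendix:lem:semressub} of \Cref{appendix:lem:prfsound} (with $\allsat$) gives $\expinterp{\poserase{\R}}{v_o}$, hence the value relation. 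If I prefer a componentwise argument, it would go as follows. The type component follows from \Cref{appendix:lem:erase-prev-rel}(\ref{appendix:lem:erprev-type}). The proposition components follow from \Cref{appendix:lem:erase-prev-rel}(\ref{appendix:lem:erprev-prop}), applied to whichever of $\posprop$, $\negprop$ is selected by $v_o \veq \false$ or not. For the object, if $\obj_r = x$ it becomes $\topobj$, which trivially satisfies the object condition; otherwise it is unchanged. I expect the subtyping route to be cleanest.

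\emph{Case $\obj = y$ with $\env(y) \alphaeq \env(x)$.} By the last clauses of \Cref{fig:objsubst}, object substitution of a variable is ordinary renaming, so $\osubst{\R}[x][y] = \substR{\R}{x}{y}$ componentwise. I then work one component at a time. For the type, I apply clause 2 of \Cref{appendix:lem:psubrenaming} to $\valinterp{\tyC}{v_o}$. For the propositions, I case on whether $v_o \veq \false$ and apply clause 3 of \Cref{appendix:lem:psubrenaming} to the selected proposition, so the same disjunct holds for the renamed proposition. For the object, if $\obj_r = x$ it becomes $y$, and $\env(y) = \env(x) = v_o$; if $\obj_r$ is some other variable $z$ or $\topobj$, it is unchanged. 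Alternatively, clause 4 of \Cref{appendix:lem:psubrenaming} applied to $\expinterp{\R}{v_o}$ does this in one step.

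The main obstacle I anticipate is the bookkeeping in the renaming case. First, I must make sure the direction of the renaming in \Cref{appendix:lem:psubrenaming} matches $\osubst{}[x][y]$, which replaces $x$ by $y$. Second, renaming under binders in function types must not capture $y$; the unique-binder presupposition handles this. If the direction is stated the other way around, I would simply use the symmetry of $\alphaeq$ on $\env(x)$ and $\env(y)$. In the $\topobj$ case, the only subtle point is that $\obj_r$ may be $x$. This is harmless because positive erasure sends it to $\topobj$ rather than $\botobj$.
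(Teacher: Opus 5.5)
Your proposal is correct and essentially the paper's proof. The paper also handles $\obj = \topobj$ by erasure subtyping plus soundness of subtyping, packaged as \Cref{appendix:lem:erase-prev-rel}, and handles $\obj = y$ via \Cref{appendix:lem:psubrenaming} with a direct check of the object component. The only difference is organization (the paper splits into type, proposition, and object obligations first and cases on $\obj$ inside each), and your whole-result shortcut via $\subtype{\ctx}{\R}{\poserase{\R}}$ and clause~1 of \Cref{appendix:lem:prfsound} is a valid repackaging of the same ingredients.
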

\begin{proof}
  Assume $\allsat$, $\propertyA[y]{\obj}{\env(x)}$, and
  $\valinterpres{\res[C][\posprop][\negprop][\obj_r]}{v_o}$.
  By definition, we have
  \begin{align}
    \propertyA[y']{\obj_r}{v_o} \label{appendix:eq:obj-subst-logrel-propa}\\
    \valinterp{C}{v_o} \label{appendix:eq:obj-subst-logrel-val}\\
    \propertyC{v_o}{}{\env}{n} \label{appendix:eq:obj-subst-logrel-propc}
  \end{align}
  To show the substituted result interpretation, it suffices to prove the
  following three obligations:
  \begin{enumerate}
  \item $\propertyA{\osubst{\obj_r}[x][\obj]}{v_o}$.
  \item if $\valinterp{C}{v_o}$, then
    $\valinterp{\osubst{C}[x][\obj]}{v_o}$.
  \item if $\propinterp{\prop}$, then
    $\propinterp{\osubst{\prop}[x][\obj]}$.
  \end{enumerate}

  Let us prove the first obligation by cases using
  $\propertyA[y]{\obj}{\env(x)}$ and \cref{appendix:eq:obj-subst-logrel-propa}.
  \begin{itemize}
  \item $\obj_r = \topobj$, or $\obj = \topobj$ and $\obj_r = x$. Then
    $\osubst{\obj_r}[x][\obj] = \topobj$.
  \item $\obj_r = y'$, $y' \neq x$, and
    $v_o \alphaeq \env(y')$. Then
    $\osubst{y'}[x][\obj] = y'$, and the obligation follows from
    \cref{appendix:eq:obj-subst-logrel-propa}.
  \item $\obj = y$, $\obj_r = x$, $\env(x) \alphaeq \env(y)$, and
    $v_o \alphaeq \env(x)$. Then $\osubst{x}[x][y] = y$ and
    $\env(x) = \env(y) = v_o$. In other words, we have proved
    \begin{align}
      \propertyAsnd[y]{\osubst{\obj_r}[x][\obj]}{v_o}
    \end{align}
  \end{itemize}

  For the second obligation, assume $\valinterp{C}{v_o}$. If
  $\obj = \topobj$, then by \Cref{appendix:lem:erase-prev-rel},
  \begin{align*}
    \valinterp{\poserase{C}}{v_o}
  \end{align*}
  If $\obj \neq \topobj$, then by $\propertyA[y]{\obj}{\env(x)}$, we know that
  $\obj = y$ and $\env(x) = \env(y)$ for some $y \in \dom{\env}$. Then by
  \Cref{appendix:lem:psubrenaming},
  \begin{align*}
    \valinterp{\osubst{C}[x][y]}{v_o}
  \end{align*}
  Therefore, by definition,
  \begin{align*}
    \valinterp{\osubst{C}[x][\obj]}{v_o}
  \end{align*}

  For the third obligation, assume $\propinterp{\prop}$. The proof is
  analogous to the second obligation. By replacing $\valinterp{C}{v_o}$ with
  $\propinterp{\prop}$, and using the
  proposition clauses of \cref{appendix:lem:erase-prev-rel,appendix:lem:psubrenaming} accordingly,
  \begin{align*}
    \propinterp{\osubst{\prop}[x][\obj]}
  \end{align*}
  Thus all three obligations hold. Therefore, by definition,
  \begin{align*}
    \valinterpres{\osubst{\res[C][\posprop][\negprop][\obj_r]}[x][\obj]}{v_o}.
  \end{align*}
\end{proof}

\begin{theorem}[The Fundamental Theorem of Logical Relations] \label[theorem]{appendix:thm:fundamental}\hfill
  if $\allsat$ and $\typeof{\ctx}{\expr}{\synR}$, then $\expinterp{\R}{\psubexpr{\eraseann{\expr}}[\env]}$
\end{theorem}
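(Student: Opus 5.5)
We proceed by rule induction on the derivation of $\typeof{\ctx}{\expr}{\R}$, generalizing over the step index $n$ and the environment $\env$. In every case we fix $n$ and $\env$ with $\allsat$ and show $\expinterp{\R}{\psubexpr{\eraseann{\expr}}[\env]}$. The case $n=0$ is always trivial, so we assume $n>0$ throughout.

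\textbf{Leaf and structural cases.}
\begin{itemize}
\item \RULE{T-BaseVal} and \RULE{T-PrimOp}. The closed term is already a value, so it suffices to check the value relation directly.
  \begin{itemize}
  \item For base values this is immediate.
  \item For primitive operations we unfold the function-type clause and evaluate $\ap{\op}{w}$ in one step via $\execop{\op}{w}$. We then check the latent propositions (e.g.\ $\typeprop{x}{\Nat}$ when $\isnatop$ returns $\true$) in the environment $\envw$, and close with \Cref{appendix:lem:hdxp}.
  \item For $\addop$, the argument satisfies $\valinterp{\Nat}{w}[k]$ with $k\ge 1$, so $w$ is a numeral and the step is defined.
  \end{itemize}
\item \RULE{T-Var}. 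We apply \Cref{appendix:lem:prfsound} to $\proves{\ctx}{\typeprop{x}{\tyA}}$ to obtain $\valinterp{\tyA}{\env(x)}$. The propositions $\nottypeprop{x}{\false}$ and $\typeprop{x}{\false}$ are decided by whether $\env(x)\veq\false$. For the negative one we use that $\valinterp{\false}{v}[m]$ for $m\ge1$ forces $v\veq\false$. The object condition holds by reflexivity.
\item \RULE{T-Subsume}. This follows from the IH together with \Cref{appendix:lem:prfsound}(\ref{appendix:lem:semressub}).
\item \RULE{T-Abs}. Given $k\le n$ and $\valinterp{\tyA}{w}[k]$, we proceed as follows.
  \begin{itemize}
  \item Use \Cref{appendix:lem:srlmono} and \Cref{appendix:lem:psubwk} to get $\ctxinterp{\ctxa}{\envw}[k]$.
  \item Apply the IH to the body at index $k$.
  \item Note that $\psubexpr{\eraseann{\expr_b}}[\envw]$ is exactly the $\beta$-reduct of $\ap{(\psubexpr{\eraseann{\abs{x}{\expr_b}}}[\env])}{w}$, using unique binder names.
  \item Close with \Cref{appendix:lem:hdxp}, absorbing the one extra step by monotonicity.
  \end{itemize}
\end{itemize}

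\textbf{Cases with evaluation contexts.} For \RULE{T-App}, \RULE{T-If} and \RULE{T-Let}, the main tool is a bind/compositionality argument. We first prove an auxiliary lemma by induction on $n$: if $\expinterp{\R_1}{\unexpr}$ and, for every $k\le n$ and every $v$ with $\valinterp{\R_1}{v}[k]$, we have $\expinterp{\R}{\ectx[v]}[k]$, then $\expinterp{\R}{\ectx[\unexpr]}$. The proof uses \Cref{appendix:lem:hdrd}, the fact that reduction inside $\ectx$ lifts, and \Cref{appendix:lem:hdxp}.

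\begin{itemize}
\item \RULE{T-If}. The condition's value $v$ satisfies either $v\notveq\false$ with $\propinterp{\ifbrposprop}$, or $v\veq\false$ with $\propinterp{\ifbrnegprop}$. This gives $\sat{\env}{\ctx\extb\ifbrposprop}$ or $\sat{\env}{\ctx\extb\ifbrnegprop}$ at the smaller index. We then take one step and use the branch IH.
\item \RULE{T-App}. We bind twice, first for the function and then for the argument. The argument value $v_2$ is in the relation for $\tyB$, so the function clause gives $\expinterp{\R}{\ap{v_1}{v_2}}[k][\extenv{\env}{x}{v_2}]$. By \Cref{appendix:lem:hdrd}, this yields a result value $v_o$ with $\valinterpres{\R}{v_o}$ in $\extenv{\env}{x}{v_2}$.
\item \RULE{T-Let}. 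We evaluate $\expr$ to $v$ and extend to $\env'=\extenv{\env}{x}{v}$. The premise gives $\valinterpres{\res[\tyA][\posprop][\negprop][\obj]}{v}$, so $\env'$ satisfies:
  \begin{itemize}
  \item $\valinterp{\tyA}{v}$;
  \item $\prop_1$, by the object clause (it is $\eqprop{x}{y}$ with $\env(y)=v$, or $\TT$);
  \item $\prop_2$, by the proposition clause, weakened via \Cref{appendix:lem:psubwk}.
  \end{itemize}
  The IH for the body at $\env'$ then gives a result in $\R$ under $\env'$.
\end{itemize}

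\textbf{Main obstacle.} The hardest part is returning from the extended environment in \RULE{T-App} and \RULE{T-Let} to $\env$ with $\osubst{\R}[x][\obj_2]$. The plan has three steps.
\begin{enumerate}
\item In $\env'$ we have either $\obj_2=\topobj$, or $\obj_2=y$ with $\env'(y)=\env'(x)$ (from the argument's object clause). We apply \Cref{appendix:lem:obj-subst-logrel} in $\env'$, which needs the context $\ctx\exta\bindvar[x][\tyB]$ to be satisfied by $\env'$, and obtain $\valinterpres{\osubst{\R}[x][\obj_2]}{v_o}[\cdot][\env']$.
\item Since $x$ is no longer free in $\osubst{\R}[x][\obj_2]$, we strengthen back to $\env$ using \Cref{appendix:lem:psubstr}.
\item We reassemble the step counts with \Cref{appendix:lem:hdxp} and \Cref{appendix:lem:srlmono}.
\end{enumerate}

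The delicate points are the freshness of $x$, which we get from the unique-binder presupposition, and the bookkeeping of step indices across the two nested binds.
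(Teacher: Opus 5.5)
Your proposal is correct and matches the paper's proof. Both argue by rule induction on the typing derivation, with the index and environment generalized, and use Head Reduction/Expansion to step through subterms; in \RULE{T-App} and \RULE{T-Let} both apply \Cref{appendix:lem:obj-subst-logrel} in the extended environment and then strengthen back via \Cref{appendix:lem:psubstr}. The only differences are presentational: you factor the repeated Head Reduction case splits of \RULE{T-If}, \RULE{T-App} and \RULE{T-Let} into one bind lemma for evaluation contexts, which the paper inlines case by case, and you state explicitly the context-satisfaction premise of the object-substitution lemma, which the paper leaves implicit.
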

\begin{proof}
  Proceed by rule induction on $\typeof{\ctx}{e}{\synR}$.
  \begin{caselist}
  \item
    \myrule{T-BaseVal}.
    $e = \eraseann{e} = \bv$.

    Case on $\bv$.
    \begin{caselist}
    \item
      $\bv = \nat$. $\R = \res[\Nat][\TT][\FF][\topobj]$.
      By definition,
      \begin{align*}
        \valinterp{\Nat}{\nat}
      \end{align*}
      Since $\nat \neq \false$, we need to show
      \begin{align*}
        \propinterp{\TT}
      \end{align*}
      which holds trivially by definition.
      Then by definition,
      \begin{align*}
        \valinterp{\res[\Nat][\TT][\FF][\topobj]}{\nat}
      \end{align*}
      By definition,
      \begin{align*}
        \expinterp{\res[\Nat][\TT][\FF][\topobj]}{\nat}
      \end{align*}
    \item $\bv = \str$. This case follows reasoning analogous to that in the previous case.
    \item $\bv = \true$. This case follows reasoning analogous to that in the first case.
    \item $\bv = \false$. This case follows reasoning analogous to that in the first case.
    \end{caselist}
  \item
    \newcommand{\curenv}{\envele{x}{w}}
    \myrule{T-PrimOp}. $e = \eraseann{e} = \op$.

    Case on $\op$.
    \begin{caselist}
    \item
      $\op = \isnatop$ and $\R = \res[\tyA][\TT][\FF][\topobj]$ where $\tyA = \predfun{\Nat}$

      Now we need to show $\valinterp{\predfun{\Nat}}{\op}$. In other words,
      assume $\valinterp{\top}{w}[k]$ for any $k \le n$ and $w$, we need to show
      $\expinterp{\predres{\Nat}}{\ap{\op}{w}}[k][\envele{x}{w}]$. if $n \le 1$,
      the case trivially holds. Otherwise, $0 \le k \le n$.
      if $k = 0$, the result follows trivially.
      Otherwise, let us proceed by casing on
      $\eestepsto{\ap{\op}{w}}{\unexpr'}$. The only candidate is $\stepsto{\ap{\op}{w}}{\unexpr'}$ for
      which the only case is \myrule{E-PrimOp}. Therefore,
      $\unexpr' = u = \execop{\isnatop}{w}$.

      By definition, $u = \true$ or $u = \false$. Then it follows that
      \begin{align}
        \valinterp{\Bool}{u}[k-1] \label{appendix:eq:valinterpboolun-2-}
      \end{align}

      if $u =_{\alpha} \true$, then by the definition of $\execop{\isnatop}{w}$, $w = \nat$ for some $\nat$.
      Then by definition, $\valinterp{\Nat}{\curenv(x)}[k-1][\curenv]$. By definition,
      \begin{align}
        \propinterp{\typeprop{x}{\Nat}}[k-1][\curenv]\label{appendix:eq:prop-1cur-}
      \end{align}
      Similarly, if $u =_{\alpha} \false$, then by the definition of $\execop{\isnatop}{w}$, $w \neq \nat$ for any $\nat$.
      Then by definition, $\notvalinterp{\Nat}{\curenv(x)}[m][k-1][\curenv]$. By definition,
      \begin{align*}
        \propinterp{\nottypeprop{x}{\Nat}}[k-1][\curenv]
      \end{align*}
      Together with \cref{appendix:eq:prop-1cur-}, this gives
      \begin{align*}
        \propertyC{u}{\typeprop{x}{\Nat}}[\nottypeprop{x}{\Nat}]{\curenv}{k-1}
      \end{align*}
      Together with \cref{appendix:eq:valinterpboolun-2-}, this gives
      \begin{align*}
        \valinterp{\res[\Bool][\typeprop{x}{\Nat}][\nottypeprop{x}{\Nat}][\topobj]}{u}[k-1][\curenv]
      \end{align*}
      Since $\eestepsto{\ap{\op}{w}}{u}$, by \Cref{appendix:lem:hdxp}
      \begin{align*}
        \expinterp{\predres{\Nat}}{\ap{\op}{w}}[k][\curenv]
      \end{align*}
    \item $\op = \isstrop$. This case follows reasoning analogous to that in the previous case.
    \item $\op = \notop$. This case follows reasoning analogous to that in the first case.
    \item $\op = \addop$. The case follows similar yet simpler reasoning, as the propositions are trivial.
    \end{caselist}
  \item \myrule{T-Var}.
    $e = \eraseann{e} = x $ and $R = \res[\tyA][\nottypeprop{x}{\false}][\typeprop{x}{\false}][x]$. The
    premise of the last rule is $\proves{\ctx}{\typeprop{x}{\tyA}}$. We need to
    show
    $\expinterp{\res[\tyA][\nottypeprop{x}{\false}][\typeprop{x}{\false}][x]}{\psubexpr{x}[{\env}]}$.
    Since $\allsat$, then $\env(x) = v$ for some $v$.
    By applying \lemclause{lem:real1} to $\allsat$ and $\proves{\ctx}{\typeprop{x}{\tyA}}$,
    \begin{align*}
      \propinterp{\typeprop{x}{\tyA}}
    \end{align*}
    By definition
    \begin{align}
      \valinterp{\tyA}{v} \label{appendix:eq:valinterppsubtyav--1}
    \end{align}

    If $v \neq \false$,
    \begin{align*}
      \notvalinterp{\false}{v}
    \end{align*}
    By definition
    \begin{align}
      \propinterp{\nottypeprop{x}{\false}}\label{appendix:eq:sems--2}
    \end{align}

    Otherwise, $v = \false$, and
    \begin{align*}
      \valinterp{\false}{x}
    \end{align*}
    By definition
    \begin{align}
      \propinterp{\typeprop{x}{\false}}\label{appendix:eq:sems--3}
    \end{align}

    By $\env(x) = v$, \cref{appendix:eq:valinterppsubtyav--1,appendix:eq:sems--2,appendix:eq:sems--3},
    \begin{align*}
      \valinterp{\res[{\tyA}][\nottypeprop{x}{\false}][\typeprop{x}{\false}][x]}{v}
    \end{align*}
    By definition, $\expinterp{\res[{\tyA}][\nottypeprop{x}{\false}][\typeprop{x}{\false}][v]}{\psubexpr{x}}$.
  \item \myrule{T-Abs}.
    $\R = {\res[\funty{x}{\tyA}{\synR}][\TT][\FF][\topobj]}, e =
    {\abs{x}{\expr_b}}$ and
    $\unexpr = \eraseann{e} = \unabs[x][{\eraseann{\expr_b}}]$. The premise of
    the last rule is
    \begin{align}
      \typeof{\ctxa}{\expr_b}{\R} \label{appendix:eq:typeofctxaexpr_br-}
    \end{align}

    We need to show
    $\forall k \le n, w.\valinterp{\tyA}{w}[k] \metaimpl
    \expinterp{\R}{\ap{\psubexpr{\unexpr}[\env]}{w}}[k][\envw]$.

    Let $w$ be an arbitrary value, $k$ be a natural number, $k < n$ and assume $\valinterp{\tyA}{w}[k]$.

    By \Cref{appendix:lem:psubwk},
    \begin{align*}
      \valinterp{\tyA}{w}[k][\envw]
    \end{align*}
    By definition,
    \begin{align*}
      \propinterp{\typeprop{x}{\tyA}}[k][\envw]
    \end{align*}
    By \Cref{appendix:lem:srlmono},
    \begin{align}
      \propinterp{\typeprop{x}{\tyA}}[k-1][\envw] \label{appendix:eq:prop-}
    \end{align}
    Similarly, by \Cref{appendix:lem:srlmono} on $\allsat$,
    \begin{align*}
      \ctxinterp{\ctx}{\env}[k-1]
    \end{align*}
    Together with \cref{appendix:eq:prop-}, this gives
    \begin{align}
      \ctxinterp{\ctxa}{\envw}[k-1] \label{appendix:eq:satenvwctxa-}
    \end{align}
    By IH on \cref{appendix:eq:satenvwctxa-,appendix:eq:typeofctxaexpr_br-},
    \begin{align}
      \expinterp{\R}{(\psubexpr{\eraseann{\expr_b}}[\envw])}[k-1][\envw] \label{appendix:eq:expint-1envw-}
    \end{align}

    By \myrule{E-Beta},
    $\stepsto{\ap{\psubexpr{(\unabs[x][{\eraseann{\expr_b}}])}[\env]}{w}}{\psubexpr{\eraseann{\expr_b}}[\envw]}$. Together
    with \cref{appendix:eq:expint-1envw-}, this allows us to apply \Cref{appendix:lem:hdxp},
    yielding
    \begin{align}
      \expinterp{\R}{\ap{\psubexpr{(\unabs[x][{\eraseann{\expr_b}}])}[\env]}{w}}[k][\envw] \label{appendix:eq:expint-1envw-1}
    \end{align}

    Then by universal introduction,
    $\forall k \le n, w.\: \valinterp{\tyA}{w}[k] \metaimpl \expinterp{\R}{(\ap{\psubexpr{\unabs[x][({\eraseann{\expr_b}})]}}{w})}[k][\envw]$
  \item
    \myrule{T-If}. $\expr = {\If{\expr_1}{\expr_2}{\expr_3}}, \eraseann{\expr} = {\If{\eraseann{\expr_1}}{\eraseann{\expr_2}}{\eraseann{\expr_3}}}$ and $\R = \IfR$. Let $\unexpr_1= {\eraseann{\expr_1}}, \unexpr_2 = {\eraseann{\expr_2}}, \unexpr_3 = {\eraseann{\expr_3}}$, and
    the premises of the last rule are
    \begin{align}
      \typeof{\ctx}{e_1}{\IfCondR} \label{appendix:eq:type} \\
      \typeof{\ifbrctx \ifbrposprop}{e_2}{\IfR} \label{appendix:eq:type-exta-extb}\\
      \typeof{\ifbrctx \ifbrnegprop}{e_3}{\IfR} \label{appendix:eq:type-exta-extb-1}
    \end{align}

    By IH on $\allsat$ and \cref{appendix:eq:type},
    $\expinterp{\IfCondR}{\psubexpr{\eraseann{\expr_1}}}$.
    By \Cref{appendix:lem:hdrd}, there are two cases.

    \begin{itemize}
    \item $\hdrdpropa[\psubexpr{\unexpr_1}]$.  \propaconseqsub[\psubexpr{\unexpr_1}][\unexpr'_1]{\psubexpr{\If{{\unexpr_1}}{\unexpr_2}{\unexpr_3}}}{\If{\unexpr'_1}{\psubexpr{\unexpr_2}}{\psubexpr{\unexpr_3}}}[\IfR]

    \item $\hdrdpropb[\IfCondR][v_1][k][\psubexpr{\unexpr_1}]$.
    \NewDocumentCommand{\condenv}{}{\env}
    By definition,
    \begin{align}
      \valinterp{\tyA}{v_1}[n-k] \label{appendix:eq:semr-meta-v_1} \\
      \propertyC{v_1}{c}{\env}{n-k} \label{appendix:eq:metaif-v_1-neq}
    \end{align}

    Now let us proceed with cases on \cref{appendix:eq:metaif-v_1-neq}.

    \begin{caselist}
    \item
      $v_1 \neq \false$ and $\propinterp{\ifbrposprop}[n-k]$.
      Together with $\allsat$, this gives
      \begin{align}
        \sat{\condenv}{\ifbrctx{\ifbrposprop}} \label{appendix:eq:satc--1}
      \end{align}
      By IH on \cref{appendix:eq:satc--1,appendix:eq:type-exta-extb},
      \begin{align*}
        \expinterp{\IfR}{\psubexpr{\unexpr_2}[\condenv]}
      \end{align*}
      Then by \Cref{appendix:lem:srlmono},
      \begin{align}
        \expinterp{\IfR}{\psubexpr{\unexpr_2}[\condenv]}[n-k] \label{appendix:eq:expint-}
      \end{align}
      Since $0 \le k < n$, $0 \le n - k - 1 < n$. Then
      By \Cref{appendix:lem:srlmono} on \cref{appendix:eq:expint-},
      \begin{align}
        \expinterp{\IfR}{\psubexpr{\unexpr_2}[\condenv]}[n-k-1] \label{appendix:eq:expint-if-1}
      \end{align}
      By \Cref{appendix:lem:hdrd} on \cref{appendix:eq:expint-if-1}, there are two cases.
      \begin{itemize}
      \item $\hdrdpropa[\psubexpr{\unexpr_2}][n-k-1]$.
        Since
        $\neestepsto{\psubexpr{\If{{\unexpr_1}}{\unexpr_2}{\unexpr_3}}}{\psubexpr{\unexpr_2}}[k+1]$,
        \cref{appendix:eq:expint-if-1} and \Cref{appendix:lem:hdxp} give
        \begin{align*}
          \expinterp{\IfR}{\psubexpr{\If{{\unexpr_1}}{\unexpr_2}{\unexpr_3}}}
        \end{align*}
      \item
        In the other case,
        $\neestepsto{\psubexpr{\unexpr_2}}{v_2}[j]$ and
        \begin{align*}
          \valinterp{\IfR}{v_2}[n-k-1-j]
        \end{align*}

        Since
        $\neestepsto{\If{\psubexpr{\unexpr_1}}{\psubexpr{\unexpr_2}}{\psubexpr{\unexpr_3}}}{v_2}[k+1+j]$,
        together with the value interpretation above, by applying \Cref{appendix:lem:hdxp}
        \begin{align*}
          \expinterp{\IfR}{\psubexpr{\If{\unexpr_1}{\unexpr_2}{\unexpr_3}}}
        \end{align*}
      \end{itemize}
    \item $v_1 = \false$. This case follows reasoning analogous to that in the
      previous case, using the premise that types $\unexpr_3$ at $\IfR$. The IH
      for $\unexpr_3$ is first weakened to index $n-k-1$
      to account for the \myrule{E-IfFalse} step. If
      $\neestepsto{\psubexpr{\unexpr_3}}{v_3}[j]$, then the whole expression
      reaches $v_3$ in $k+1+j$ steps.
    \end{caselist}
    \end{itemize}
  \item
    \RenewDocumentCommand{\appfunrngres}{}{\R'}
    \myrule{T-App}. $\expr = \ap{\expr_1}{\expr_2}$, $\R' = \appfunrngresunfold$, and $\R = \appres$. Let $\unexpr_1= {\eraseann{\expr_1}}, \unexpr_2 = {\eraseann{\expr_2}}$. The premises are
    \begin{align}
      &\typeof{\ctx}{\expr_1}{\appfunres} \label{appendix:eq:type-} \\
      &\typeof{\ctx}{\expr_2}{\appargres} \label{appendix:eq:type--1}
    \end{align}
    The goal is to show $\expinterp{\appres}{\psubexpr{\ap{\unexpr_1}{\unexpr_2}}}$.

    By IH on $\allsat$ and \cref{appendix:eq:type-,appendix:eq:type--1} respectively,
    \begin{align}
      \expinterp{\appfunres}{\psubexpr{\unexpr_1}} \label{appendix:eq:semr--2}\\
      \expinterp{{\appargres}}{\psubexpr{\unexpr_2}} \label{appendix:eq:semr--3}
    \end{align}

    By applying \Cref{appendix:lem:hdrd} on \cref{appendix:eq:semr--2}, there are two cases.
    \begin{caselist}
    \item $\hdrdpropa[\psubexpr{\unexpr_1}]$.
      \propaconseqsub[\psubexpr{\unexpr_1}][\unexpr'_1]{\psubexpr{\app{\unexpr_1}{\unexpr_2}}}{\app{\unexpr'_1}{\psubexpr{\unexpr_2}}}[\appres]
    \item $\hdrdpropb[\appfunres][v_1][k][\psubexpr{\unexpr_1}]$.
      \newcommand{\appfunenv}{\env}
      \newcommand{\argenv}{\env}
      \newcommand{\allenv}{\env}

      By definition,
      \begin{align*}
        \valinterp{\appfunres}{v_1}[n-k]
      \end{align*}
      which gives
      \begin{align}
        \valinterp{\appfunty}{v_1}[n-k] \label{appendix:eq:semr--1}
      \end{align}
      By \cref{appendix:eq:semr--1},
      \begin{align}
        \forall i \le n-k, w'.\: \valinterp{\tyB}{w'}[i] \metaimpl \expinterp{\appfunrngres}{\ap{v_1}{w'}}[i][{\extenv{\env}{x}{w'}}]   \label{appendix:eq:forall-w.-meta}
      \end{align}
      Now let us move on to the argument of the application.
      By applying \Cref{appendix:lem:srlmono} to \cref{appendix:eq:semr--3},
      \begin{align}
        \expinterp{{\appargres}}{\psubexpr{\unexpr_2}}[n-k] \label{appendix:eq:semr--33}
      \end{align}
      By applying \Cref{appendix:lem:hdrd} to \cref{appendix:eq:semr--33}, there are two cases.
      \begin{caselist}
        \item $\hdrdpropa[\psubexpr{\unexpr_2}][n-k]$.
          \propaconseqsub[\psubexpr{\unexpr_2}][\unexpr'_2]{\app{v_1}{\psubexpr{\unexpr_2}}}{\app{v_1}{\unexpr'_2}}[\appres][n-k]
          Since
          $\neestepsto{\psubexpr{\app{\unexpr_1}{\unexpr_2}}}{\app{v_1}{\psubexpr{\unexpr_2}}}[k]$,
          by \Cref{appendix:lem:hdxp}
          \begin{align*}
            \expinterp{\appres}{\psubexpr{\ap{\expr_1}{\expr_2}}}
          \end{align*}
        \item $\hdrdpropb[\appargres][v_2][j][\psubexpr{\unexpr_2}][n-k]$.
          By definition, this gives
          \begin{align}
            \propertyA{\obj_2}{v_2}[\argenv] \label{appendix:eq:-v_2-=} \\
            \valinterp{\tyB}{v_2}[n-k-j] \label{appendix:eq:semr--1a}
          \end{align}
          Because $n - k - j \le n -k$, applying \cref{appendix:eq:forall-w.-meta} to \cref{appendix:eq:semr--1a} gives us
          \begin{align}
            \expinterp{\appfunrngres}{\ap{v_1}{v_2}}[n-k-j][{\extenv{\env}{x}{v_2}}]\label{appendix:eq:semr-k-j-1}
          \end{align}
          By \Cref{appendix:lem:hdrd} on \cref{appendix:eq:semr-k-j-1}, there are two cases
          \begin{itemize}
            \item $\hdrdpropa[\ap{v_1}{v_2}][n-k-j]$.
              \propaconseqsub[\ap{v_1}{v_2}][\unexpr'_o]{\psubexpr{\app{\unexpr_1}{\unexpr_2}}}{\unexpr'_o}[\appres][n-k-j]
              Since
              $\neestepsto{\psubexpr{\app{\unexpr_1}{\unexpr_2}}}{\ap{v_1}{v_2}}[k+j]$,
              by \Cref{appendix:lem:hdxp}
              \begin{align*}
                \expinterp{\appres}{\psubexpr{\ap{\expr_1}{\expr_2}}}
              \end{align*}
            \item $\hdrdpropb[\appfunrngres][v_o][i][\ap{v_1}{v_2}][n-k-j]$.
              \newcommand{\finalidx}{n-k-j-i}
              \newcommand{\totalsteps}{k+j+i}
              By definition,
              \begin{align}
                \valinterpres{\appfunrngresunfold}{v_o}[\finalidx][{\enva[v_2]}]
                \label{appendix:eq:app-body-result-val}
              \end{align}

              By \cref{appendix:eq:-v_2-=}, $\propertyA{\obj_2}{v_2}[\enva[v_2]]$.
              Together with \cref{appendix:eq:app-body-result-val}, applying
              \Cref{appendix:lem:obj-subst-logrel} gives
              \begin{align}
                \valinterpres{\osubst{\appfunrngresunfold}[x][\obj_2]}{v_o}[\finalidx][\enva[v_2]]
                \label{appendix:eq:app-subst-body-result-val-ext}
              \end{align}
              Because $x$ does not occur free anymore, by \Cref{appendix:lem:psubstr},
              \begin{align}
                \valinterpres{\osubst{\appfunrngresunfold}[x][\obj_2]}{v_o}[\finalidx]
                \label{appendix:eq:app-subst-body-result-val}
              \end{align}
              Since
              $\neestepsto{\psubexpr{\app{\unexpr_1}{\unexpr_2}}}{v_o}[\totalsteps]$,
              by \Cref{appendix:lem:hdxp},
              \begin{align*}
                \expinterp{\osubst{\appfunrngresunfold}[x][\obj_2]}{\psubexpr{\app{\unexpr_1}{\unexpr_2}}}
              \end{align*}
              .
          \end{itemize}
        \end{caselist}
    \end{caselist}
  \item
    \NewDocumentCommand{\letboundres}{}{\res[\tyA][\posprop][\negprop][\obj]}
    \NewDocumentCommand{\letbodyctx}{}{\ctx \exta \bindvar \exta \osubst{(\eqprop{x}{z})}[z][\obj] \exta \left(\disjprop{\conjprop{\nottypeprop{x}{\false}}{\posprop}}{\conjprop{\typeprop{x}{\false}}{\negprop}}\right)}
    \myrule{T-Let}. $\expr = \Let{\expr_1}{\expr_b}$, $\eraseann{\expr} =
    \Let{\eraseann{\expr_1}}{\eraseann{\expr_b}}$, and the result is
    $\osubst{\R}[x][\obj]$. Let $\unexpr_1 = \eraseann{\expr_1}$ and
    $\unexpr_b = \eraseann{\expr_b}$. The premises are
    \begin{align}
      \typeof{\ctx}{\expr_1}{\letboundres} \label{appendix:eq:let-type-bound}\\
      \typeof{\letbodyctx}{\expr_b}{\R} \label{appendix:eq:let-type-body}
    \end{align}
    The goal is to show
    $\expinterp{\osubst{\R}[x][\obj]}{\psubexpr{\Let{\unexpr_1}{\unexpr_b}}}$.

    By IH on $\allsat$ and \cref{appendix:eq:let-type-bound},
    \begin{align}
      \expinterp{\letboundres}{\psubexpr{\unexpr_1}} \label{appendix:eq:let-exp-bound}
    \end{align}
    By \Cref{appendix:lem:hdrd} on \cref{appendix:eq:let-exp-bound}, there are two cases.
    \begin{caselist}
    \item $\hdrdpropa[\psubexpr{\unexpr_1}]$.
      \propaconseqsub[\psubexpr{\unexpr_1}][\unexpr'_1]{\psubexpr{\Let{\unexpr_1}{\unexpr_b}}}{\Let{\unexpr'_1}{\psubexpr{\unexpr_b}}}[\osubst{\R}[x][\obj]]
    \item $\hdrdpropb[\letboundres][v][k][\psubexpr{\unexpr_1}]$.
      By definition,
      \begin{align}
        \propertyA{\obj}{v} \label{appendix:eq:let-property-a}\\
        \valinterp{\tyA}{v}[n-k] \label{appendix:eq:let-val-bound}\\
        \propertyC{v}{}{\env}{n-k} \label{appendix:eq:let-property-c}
      \end{align}
      By \Cref{appendix:lem:srlmono} on $\allsat$,
      \begin{align}
        \ctxinterp{\ctx}{\env}[n-k] \label{appendix:eq:let-sat-mono}
      \end{align}
      Since \cref{appendix:eq:let-val-bound,appendix:eq:let-sat-mono},
      by definition,
      \begin{align}
        \ctxinterp{\ctxa}{\enva}[n-k] \label{appendix:eq:ctxint-k-}
      \end{align}

      By \cref{appendix:eq:let-property-a}, there are two cases
      \begin{itemize}
        \item  $\obj = \topobj$. $\osubst{(\eqprop{x}{z})}[z][\obj] = \topobj$. Then by definition, $\propinterp{\TT}[n-k][\enva]$
        \item  $\propertyAsnd[y]{\obj}{v}$. Then \Cref{appendix:lem:psubwk}, $\enva(y) = v$. By definition, $\propinterp{\eqprop{x}{y}}[n-k][\enva]$
      \end{itemize}
      In both cases, we have proved
      $\propinterp{\osubst{(\eqprop{x}{z})}[z][\obj] }[n-k][\enva]$. Together
      with \cref{appendix:eq:ctxint-k-,appendix:eq:let-property-c}, by definition, this gives
      \begin{align}
        \ctxinterp{\letbodyctx}{\enva}[n-k]
        \label{appendix:eq:let-sat-body}
      \end{align}
      Since $0 \le k < n$, $0 \le n - k - 1 < n$.
      By IH on \cref{appendix:eq:let-sat-body,appendix:eq:let-type-body},
      \begin{align}
        \expinterp{\R}{\psubexpr{\unexpr_b}[\enva]}[n-k][\enva]
        \label{appendix:eq:let-exp-body}
      \end{align}
      By \Cref{appendix:lem:srlmono},
      \begin{align}
        \expinterp{\R}{\psubexpr{\unexpr_b}[\enva]}[n-k-1][\enva]
        \label{appendix:eq:let-exp-body-1}
      \end{align}
      Let $i \le n-k-1$ and $v_o$ be such that
      $\neestepsto{\psubexpr{\unexpr_b}[\enva]}{v_o}[i]$. By
      \cref{appendix:eq:let-exp-body-1},
      \begin{align}
        \valinterpres{\R}{v_o}[n-k-1-i][\enva] \label{appendix:eq:let-val-body}
      \end{align}
      Since $\enva(x) = v$, \cref{appendix:eq:let-property-a} gives
      $\propertyA{\obj}{\enva(x)}[\enva]$. Together with
      \cref{appendix:eq:let-val-body}, applying \Cref{appendix:lem:obj-subst-logrel} gives
      \begin{align}
        \valinterpres{\osubst{\R}[x][\obj]}{v_o}[n-k-1-i][\enva]
        \label{appendix:eq:let-body-val-subst-ext}
      \end{align}
      Because $x$ does not occur free anymore, by \Cref{appendix:lem:psubstr},
      \begin{align}
        \valinterpres{\osubst{\R}[x][\obj]}{v_o}[n-k-1-i]
        \label{appendix:eq:let-body-val-subst}
      \end{align}
      By \myrule{E-Let},
      $\stepsto{\Let{v}{\psubexpr{\unexpr_b}}}{\psubexpr{\unexpr_b}[\enva]}$.
      Therefore,
      \begin{align*}
        \neestepsto{\psubexpr{\Let{\unexpr_1}{\unexpr_b}}}{\psubexpr{\unexpr_b}[\enva]}[k+1]
      \end{align*}
      Together with $\neestepsto{\psubexpr{\unexpr_b}[\enva]}{v_o}[i]$,
      \begin{align}
        \neestepsto{\psubexpr{\Let{\unexpr_1}{\unexpr_b}}}{v_o}[k+1+i]
        \label{appendix:eq:let-step-body-value}
      \end{align}
      equivalently, any evaluation through the body takes $i+k+1$ steps.
      Together with
      \cref{appendix:eq:let-body-val-subst,appendix:eq:let-step-body-value}, applying \Cref{appendix:lem:hdxp} gives
      \begin{align*}
        \expinterp{\osubst{\R}[x][\obj]}{\psubexpr{\Let{\unexpr_1}{\unexpr_b}}}
      \end{align*}
    \end{caselist}
  \item \myrule{T-Subsume}. The premise are
    \begin{align}
      \typeof{\ctx}{e}{\synR_1} \label{appendix:eq:type-subtyp-} \\
      \subtype{\ctx}{\synR_1}{\synR_2} \label{appendix:eq:subtyp-}
    \end{align}
    By IH on \cref{appendix:eq:type-subtyp-},
    \begin{align}
      \expinterp{\R_1}{\psubexpr{e}} \label{appendix:eq:semr--4}
    \end{align}
    By \Cref{appendix:lem:semressub} on $\allsat$, \cref{appendix:eq:subtyp-,appendix:eq:semr--4},
    \begin{align*}
      \expinterp{\R_2}{\psubexpr{e}}
    \end{align*}
  \end{caselist}
\end{proof}

\begin{corollary}[Syntactically Well-Typed Closed Terms are Semantically Well-Typed]\label[corollary]{appendix:cor:closed-ftlr}
  If $\typeof{\empctx}{\expr}{\R}$ then $\expinterp{\R}{\eraseann{\expr}}[n][\empenv]$ for all $n$.
\end{corollary}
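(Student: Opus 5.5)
This is a direct instance of the Fundamental Theorem (\Cref{appendix:thm:fundamental}) with $\ctx = \empctx$ and $\env = \empenv$. Fix an arbitrary $n$. We discharge the hypothesis $\ctxinterp{\empctx}{\empenv}[n]$, apply the theorem, and observe that the closing substitution by the empty environment acts as the identity on $\eraseann{\expr}$.

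First, we show $\ctxinterp{\empctx}{\empenv}[n]$ by cases on $n$. If $n = 0$, the definition requires $\dom{\empctx} \subseteq \dom{\empenv}$, which holds because $\dom{\empctx} = \varnothing$. If $n = m+1$, the clause for the empty context is $\metatrue$ by definition. Neither case needs induction.

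Next, we apply \Cref{appendix:thm:fundamental} to $\typeof{\empctx}{\expr}{\R}$ and this satisfaction fact. This yields $\expinterp{\R}{\psubexpr{\eraseann{\expr}}[\empenv]}[n][\empenv]$. Since the empty environment binds no variables, $\psubexpr{\eraseann{\expr}}[\empenv] = \eraseann{\expr}$, which gives the claim at index $n$. Because $n$ was arbitrary, the statement holds for all $n$.

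The only step that needs care is the identity $\psubexpr{\eraseann{\expr}}[\empenv] = \eraseann{\expr}$. I will justify it by a short structural induction on runtime expressions, or simply cite it as immediate from the definition of closing substitution, in the same way \Cref{appendix:lem:psubwk} and \Cref{appendix:lem:psubstr} are stated as following directly from that definition. The presupposition $\wfres[\empctx]{\R}$ guarantees that $\R$ is closed, so evaluating the logical relation in $\empenv$ is meaningful. No genuine obstacle is expected.
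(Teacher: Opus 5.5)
Your proposal is correct and takes the same approach as the paper, whose proof is the single line ``by the Fundamental Theorem with $\ctx = \empctx$ and $\env = \empenv$.'' You also spell out two points the paper leaves implicit: that the empty context is satisfied by the empty environment at every index (treating $n = 0$ separately), and that the empty closing substitution acts as the identity on $\eraseann{\expr}$.
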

\begin{proof}
  By \Cref{appendix:thm:fundamental} with $\ctx = \empctx$ and $\env = \empenv$.
\end{proof}

\begin{theorem}[Semantic Type Soundness]\label[theorem]{appendix:thm:type-soundness}
  If $\typeof{\empctx}{\expr}{\R}$, then either $\eraseann{\expr}$ diverges or
  $\neestepsto{\eraseann{\expr}}{v}$ and $\valinterp{\R}{v}[n][\empenv]$ for all $n$.
\end{theorem}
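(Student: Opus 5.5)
The plan is to derive the statement from \Cref{appendix:cor:closed-ftlr}, which gives $\expinterp{\R}{\eraseann{\expr}}[n][\empenv]$ for every $n$. Then I would split on the behaviour of $\eraseann{\expr}$ under the deterministic small-step semantics. Exactly one of three things happens: the term reduces forever (diverges), it reaches a value, or it reaches an irreducible non-value (gets stuck). By \Cref{appendix:lem:one-step-determ} the reduction sequence is unique, so this trichotomy is well defined.

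\textbf{Excluding stuckness.} Suppose $\neestepsto{\eraseann{\expr}}{\unexpr'}[k]$ with $\unexpr'$ irreducible and not a value. I would instantiate the corollary at index $k+1$ and apply head reduction (\Cref{appendix:lem:hdrd}). This yields one of two outcomes.
\begin{itemize}
\item A $(k+1)$-step reduction sequence exists. Its first $k$ steps must coincide with the given sequence by determinacy, so it would step $\unexpr'$, contradicting irreducibility.
\item The term reaches a value in $m<k+1$ steps. Determinacy then places that value on the given sequence at position $m\le k$. Since values are irreducible, the sequence cannot continue past it, so $m=k$ and $\unexpr'$ would be a value, again a contradiction.
\end{itemize}

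\textbf{The value case.} Suppose $\neestepsto{\eraseann{\expr}}{v}[k]$ and fix an arbitrary $n$. I would apply head reduction to $\expinterp{\R}{\eraseann{\expr}}[n+k][\empenv]$.
\begin{itemize}
\item The first alternative asks for $n+k$ steps of reduction. By determinacy this would require stepping past the value $v$ whenever $n\ge 1$, which is impossible. When $n=0$ the goal holds trivially, since every relation is true at index $0$.
\item The second alternative gives a value $v'$ with $\neestepsto{\eraseann{\expr}}{v'}[m]$ and $\valinterp{\R}{v'}[n+k-m][\empenv]$. By \Cref{appendix:lem:red-val-determ}, $m=k$ and $v'=v$, so we obtain $\valinterp{\R}{v}[n][\empenv]$.
\end{itemize}
If neither the value case nor the stuck case occurs, then the term diverges.

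The main obstacle is purely bookkeeping. The step index must be chosen as $k+1$ in the stuck case and $n+k$ in the value case, and determinacy must be used carefully to align the sequences produced by head reduction with the given one, including the boundary case $n=0$. No substantive new lemma should be needed.
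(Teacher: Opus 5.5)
Your proposal is correct and follows essentially the same route as the paper: the closed fundamental theorem (\Cref{appendix:cor:closed-ftlr}), the diverge/value/stuck trichotomy, ruling out stuckness via head reduction at index $k+1$ plus determinacy, and extracting the value relation via head reduction at index $n+k$. Your separate treatment of the $n=0$ case, where the first head-reduction alternative is not contradictory, is a small point of care that the paper's proof leaves implicit.
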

\begin{proof}
  By \Cref{appendix:cor:closed-ftlr}, we have $\expinterp{\R}{\eraseann{\expr}}[n][\empenv]$ for all $n$.

  Terms must either diverge, evaluate to a value, or get stuck (i.e., evaluate to an irreducible
  non-value), so we begin by showing that the third possibility cannot occur for well-typed terms.
  Suppose that $\neestepsto{\eraseann{\expr}}{\unexpr'}[k]$ for some irreducible non-value
  $\unexpr'$. By head reduction (\Cref{appendix:lem:hdrd}) and $\expinterp{\R}{\eraseann{\expr}}[k+1]$, we
  know that either (1) there exists $\unexpr''$ such that
  $\neestepsto{\eraseann{\expr}}{\unexpr''}[k+1]$, or (2) there exists a value $v$ such that
  $\neestepsto{\eraseann{\expr}}{v}[m]$ for $m<k+1$. By determinacy, the first contradicts the
  hypothesis that $\unexpr'$ is irreducible, and the latter contradicts the irreducibility of
  values.

  It remains only to show that if $\neestepsto{\eraseann{\expr}}{v}[k]$, then
  $\valinterp{\R}{v}[n][\empenv]$ for all $n$. By determinacy and head reduction applied to
  $\expinterp{\R}{\eraseann{\expr}}[n+k][\empenv]$, we obtain $\valinterp{\R}{v}[n+k-k][\empenv]$,
  completing the proof.
\end{proof}

\begin{corollary}[Soundness for Booleans]
  If $\typeof{\empctx}{\expr}{\res[\Bool][\posprop][\negprop][\obj]}$ then either $\eraseann{\expr}$
  diverges or $\neestepsto{\eraseann{\expr}}{\true}$ or
  $\neestepsto{\eraseann{\expr}}{\false}$.
\end{corollary}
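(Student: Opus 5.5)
The corollary follows from \Cref{appendix:thm:type-soundness}; the only new work is unfolding the logical relation at type $\Bool$ at a positive step index. Suppose $\typeof{\empctx}{\expr}{\res[\Bool][\posprop][\negprop][\obj]}$. By \Cref{appendix:thm:type-soundness}, either $\eraseann{\expr}$ diverges, and we are done, or $\neestepsto{\eraseann{\expr}}{v}$ for some value $v$ with $\valinterp{\res[\Bool][\posprop][\negprop][\obj]}{v}[n][\empenv]$ for every $n$. We then specialise this to $n = 1$.

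At index $1 = 0 + 1$, the definition of the value relation for type results gives three conjuncts. The first is $\valinterp{\Bool}{v}[1][\empenv]$; the other two concern the propositions and the object, and we discard them. We must use index $1$ rather than $0$, because every relation holds vacuously at index $0$.

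Next we unfold $\Bool = \uniontype{\true}{\false}$ using the clause for union types at a successor index. This gives $\valinterp{\true}{v}[1][\empenv]$ or $\valinterp{\false}{v}[1][\empenv]$. By the base-type clauses of \Cref{appendix:fig:valinterptype}, these mean $v \veq \true$ or $v \veq \false$ respectively. Substituting into $\neestepsto{\eraseann{\expr}}{v}$ yields $\neestepsto{\eraseann{\expr}}{\true}$ or $\neestepsto{\eraseann{\expr}}{\false}$.

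There is no real obstacle in this argument. The only points needing care are choosing a nonzero step index and reading $\veq$ as syntactic equality of values, which is how $\valinterp{\true}{v}$ and $\valinterp{\false}{v}$ are defined.
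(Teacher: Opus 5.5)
Your proposal is correct and matches the paper's proof essentially step for step. The paper also invokes Semantic Type Soundness and then unfolds the value relation for $\res[\Bool][\posprop][\negprop][\obj]$ at index $1$ in the empty environment, through $\Bool = \uniontype{\true}{\false}$, to conclude $v=\true$ or $v=\false$; your point that the index must be nonzero is exactly why it specialises to index $1$.
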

\begin{proof}
  By \Cref{appendix:thm:type-soundness} it suffices to show that
  $\valinterp{\res[\Bool][\posprop][\negprop][\obj]}{v}[1][\empenv]$ implies $v=\true$ or
  $v=\false$. Recalling that $\Bool \coloneq \uniontype{\true}{\false}$, this follows immediately by
  unfolding the definition of the logical relation.
\end{proof}

\AppendixFinish

\fi
\end{document}